\title{Graphical Conjunctive Queries}
\author{Filippo Bonchi}{University of Pisa}{}{}{}%mandatory, please use full name; only 1 author per \author macro; first two parameters are mandatory, other parameters can be empty.
\author{Jens Seeber}{IMT School for Advanced Studies Lucca}{}{}{}
\author{Pawe{\l} Soboci{\'n}ski}{University of Southampton}{}{}{}
\authorrunning{F. Bonchi, J. Seeber, P. Soboci{\'n}ski}%mandatory. First: Use abbreviated first/middle names. Second (only in severe cases): Use first author plus 'et. al.'
\subjclass{\ccsdesc[500]{Theory of computation~Categorical semantics}}% mandatory: Please choose ACM 2012 classifications from https://www.acm.org/publications/class-2012 or https://dl.acm.org/ccs/ccs_flat.cfm . E.g., cite as "General and reference $\rightarrow$ General literature" or \ccsdesc[100]{General and reference~General literature}. 
\keywords{conjunctive query inclusion, string diagrams, cartesian bicategories}
\theoremstyle{plain}
\newtheorem{proposition}[theorem]{Proposition}
\theoremstyle{definition}
\newcommand{\scalefact}{0.8} %global scaling factor for all tikz diagrams
\newcommand{\exscale}{0.75} %scaling for the running example
\newcommand{\N}{\mathbb{N}}
\newcommand{\Fin}{\mathbb{F}}
\newcommand{\from}{\colon}
\newcommand{\op}[1]{#1^{\opposite}}
\newcommand{\Cospan}[1]{\mathsf{Cospan}(#1)}
\newcommand{\Cospanleq}[1]{\mathsf{Cospan}^{\leq} #1}
\newcommand{\Span}[1]{\mathsf{Span}(#1)}
\newcommand{\Spanleq}[1]{\mathsf{Span}^{\leq} #1}
\newcommand{\Spantilde}[1]{\mathsf{Span}^{\sim} #1}
\newcommand{\DiscCospan}[1]{\mathsf{Csp}^{\leq} #1}
\DeclareMathOperator{\opposite}{op}
\DeclareMathOperator{\id}{id}
\DeclareMathOperator{\Hyp}{\mathbf{FHyp}}
\DeclareMathOperator{\iHyp}{\mathbf{Hyp}}
\DeclareMathOperator{\Rel}{\mathbf{Rel}}
\DeclareMathOperator{\Set}{\mathbf{Set}}
\DeclareMathOperator{\Hom}{Hom}
\newcommand{\gcqprop}{\mathbb{G}}
\newcommand{\gcqpobc}{\mathbb{CB}_{\Sigma}}
\newcommand{\preOrdSyntaxPROP}{\mathbb{CB}^{\leq}_{\Sigma}}
\newcommand{\preOrdSyntaxPROPEmpty}{\mathbb{CB}^{\leq}_{\emptyset}}
\newcommand{\PosetCat}[1]{#1^{\sim}} %notation for the poset-enriched category associated to a preorder-enriched one.
\newcommand{\syneq}{=_{\gcqpobc}} %Symbols for the relations in our axiomatization
\newcommand{\synleq}{\leq_{\gcqpobc}}
\tikzset{x=1em, y=1.5ex, baseline=-0.5ex}
\tikzset{ihbase/.style={inner sep=0,circle,draw,fill=lightgray,minimum size=0.4em}}
\tikzset{ihblack/.style={ihbase,fill=black}}
\tikzset{ihwhite/.style={ihbase,fill=white}}
\tikzset{mat/.style={draw,fill=white,rectangle,node font=\scriptsize}}
\tikzset{ha/.style={mat, rectangle, rectangle }}
\tikzset{haop/.style={mat, rectangle, rectangle}}
\tikzset{anti/.style={inner sep=0,isosceles triangle,fill=black,draw=black,
    minimum width=0.75em}}
\tikzset{antiop/.style={anti,shape border rotate=180}}
\tikzset{antisq/.style={inner sep=0,rectangle,fill=black,
    minimum height=1em, minimum width=0.6em}}
\tikzset{count/.style={above,font=\scriptsize}}
\tikzset{axiom/.style={above,font=\small}}
\tikzset{dir/.style={-Latex}}
\newcommand{\blk}{node[ihblack]}
\newcommand{\Bcomult}{\tikz{
    \draw (1, 0) \blk (copy) {} .. controls (1.25, 0.75) .. (2, 0.75);
    \draw (0, 0) -- (copy) .. controls (1.25, -0.75) .. (2, -0.75);
  }}
\newcommand{\Bcounit}{\tikz{\draw (0, 0) -- (1, 0) \blk
{};}}
\newcommand{\Bmult}{\tikz{
    \draw (0,  0.75) .. controls (0.75,  0.75) .. (1, 0) \blk (copy) {};
    \draw (0, -0.75) .. controls (0.75, -0.75) .. (copy) -- (2, 0);
  }}
\newcommand{\Bunit}{\tikz{\draw (0, 0) \blk {} -- (1, 0);}}
\newcommand{\Rcircuit}{\lower7pt\hbox{$\includegraphics[height=20pt]{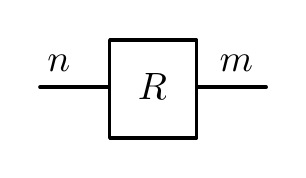}$}}
\newcommand{\BmultL}[1]{\tikz{
    \draw (0,  0.75) .. controls (0.75,  0.75) .. (1, 0) \blk (copy) {};
    \draw (0, -0.75) .. controls (0.75, -0.75) .. (copy) -- node[pos=0.5,anchor=south] {\tiny{\ensuremath{#1}}} (2, 0);
  }}
\newcommand{\BcomultL}[1]{\tikz{
    \draw (1, 0) \blk (copy) {} .. controls (1.25, 0.75) .. (2, 0.75);
    \draw node[pos=0.75,anchor=south] {\tiny{\ensuremath{#1}}} (0, 0) -- (copy) .. controls (1.25, -0.75) .. (2, -0.75);
  }}
  \newcommand{\BunitL}[1]{\tikz{\draw (0, 0) \blk {} -- node [pos=0.5,anchor=south] {\tiny{\ensuremath{#1}}} (1, 0);}}
  \newcommand{\BcounitL}[1]{\tikz{\draw (0, 0) -- node [pos=0.5,anchor=south] {\tiny{\ensuremath{#1}}} (1, 0) \blk {};}}
\newcommand{\dsymNetL}[2]{\tikz{
    \draw (0, 0.5) -- node[anchor=south] {\tiny{\ensuremath{#1}}} (0.5, 0.5)
    .. controls (0.75, 0.5) and (1.25, -0.5) .. (1.5, -0.5)
    -- node[anchor=north] {\tiny{\ensuremath{#1}}} (2, -0.5);
    \draw (0, -0.5) -- node[anchor=north] {\tiny{\ensuremath{#2}}} (0.5, -0.5)
    .. controls (0.75, -0.5) and (1.25, 0.5) .. (1.5, 0.5)
    -- node[anchor=south] {\tiny{\ensuremath{#2}}} (2, 0.5);
  }}
\newcommand{\idoneL}[1]{\tikz{
    \draw (0, 0) -- node[anchor=south] {\tiny{\ensuremath{#1}}} (1, 0);
}}
\newcommand{\BcomultX}{\BcomultL{X}}
\newcommand{\BmultX}{\BmultL{X}}
\newcommand{\BcounitX}{\BcounitL{X}}
\newcommand{\BunitX}{\BunitL{X}}
\newcommand{\idone}{
\tikz {
\draw (0, 0) -- (1, 0);
}
}
\newcommand\dsymNet{\tikz {
    \draw (0, 0.5) -- (0.5, 0.5)
    .. controls (0.75, 0.5) and (1.25, -0.5) .. (1.5, -0.5)
    -- (2, -0.5);
    \draw (0, -0.5) -- (0.5, -0.5)
    .. controls (0.75, -0.5) and (1.25, 0.5) .. (1.5, 0.5)
    -- (2, 0.5);
  }}
\newcommand{\Zeronet}{\lower4pt\hbox{$\includegraphics[height=12pt]{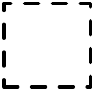}$}}
\newcommand{\tns}{\oplus}
\newcommand{\poi}{\ensuremath{\mathrel{;}}}
\newcommand{\bnfEq}{\; ::= \;}
\newcommand{\bnfSep}{\; |\;}
\newcommand{\sort}[2]{\ensuremath{(#1,\,#2)}}
\newcommand{\reductionRule}[2]{{\prooftree{\scriptstyle #1}\justifies{\scriptstyle #2}\endprooftree}}
\newcommand{\typ}{\mathrel{:}}
\newcommand{\typeJudgment}[3]{{ {#2} \,\typ\, {#3}}}
\newcommand{\densem}[2]{[\![ #1 ]\!]_{#2}} % OPERATIONAL Semantics maps
\newcommand{\df}{\coloneqq}
\newcommand{\dfop}{\eqqcolon}
\begin{document}

\maketitle

\begin{abstract}
 The Calculus of Conjunctive Queries (CCQ) has foundational status in database theory. A celebrated theorem of Chandra and Merlin states that CCQ query inclusion is decidable. Its proof transforms logical formulas to graphs: each query has a \emph{natural model}---a kind of \emph{graph}---and query inclusion reduces to the existence of a graph homomorphism between natural models. 

We introduce the diagrammatic language \emph{Graphical Conjunctive Queries} (GCQ) and show that it has the same expressivity as CCQ. GCQ terms are \emph{string diagrams}, and their algebraic structure allows us to derive a sound and complete axiomatisation of query inclusion, which turns out to be exactly Carboni and Walters' notion of \emph{cartesian bicategory of relations}. Our completeness proof exploits the combinatorial nature of string diagrams as (certain cospans of) hypergraphs: Chandra and Merlin's insights inspire a theorem that relates such cospans with spans. Completeness and decidability of the (in)equational theory of GCQ follow as a corollary. Categorically speaking, our contribution is a model-theoretic completeness theorem of free cartesian bicategories (on a relational signature) for the category of sets and relations.
\end{abstract}

\section{Introduction}

% conjunctive queries
Conjunctive queries (CCQ) are first-order logic formulas that use only relation symbols, equality,
truth, conjunction, and existential quantification. They are a kernel language of queries to relational databases and are the foundations of several languages:
they are select-project-join queries in
relational algebra \cite{codd1970relational}, or select-from-where queries in SQL \cite{chamberlin1974sequel}.
While expressive enough to encompass queries of practical interest,
they admit algorithmic analysis: in~\cite{chandra1977optimal}, Chandra and Merlin showed that the problem of \emph{query inclusion} is NP-complete.

%example
For an example of query inclusion in action, consider formulas
\[
  \phi = \exists z_0 \colon (x_0 = x_1) \wedge R(x_0,z_0) \text{ and }
  \psi = \exists z_0, z_1 \colon R(x_0,z_0) \wedge R(x_1,z_0) \wedge R(x_0,z_1) \wedge R(x_1,z_1),
\]
  with free variables $x_0,x_1$. Irrespective of model, and thus the interpretation of the relation symbol $R$, every free variable assignment satisfying $\phi$ satisfies $\psi$: i.e.\ $\phi$ is included in $\psi$.

%example of graph encoding
  Chandra and Merlin's insight involves an elegant reduction to graph theory, namely the existence of a hypergraph homomorphism from a graphical encoding of $\psi$ to that of $\phi$. Below on the left we give a graphical rendering of $\psi$ and $\phi$, respectively: vertices represent variables, while edges are labelled with relation symbols. The dotted connections are not, strictly speaking, a part of the underlying hypergraphs. They constitute an \emph{interface}: a mapping
 \begin{tabular}{cc|c}
\raise0pt\hbox{$\includegraphics[height=38pt]{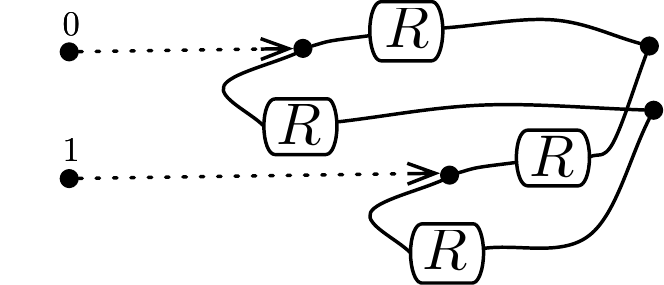}$} &
\raise13pt\hbox{$\includegraphics[height=20pt]{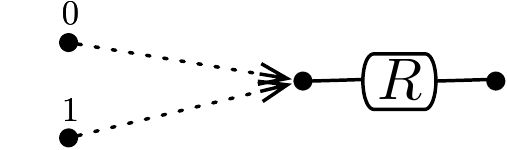}$} &
$\includegraphics[height=47pt]{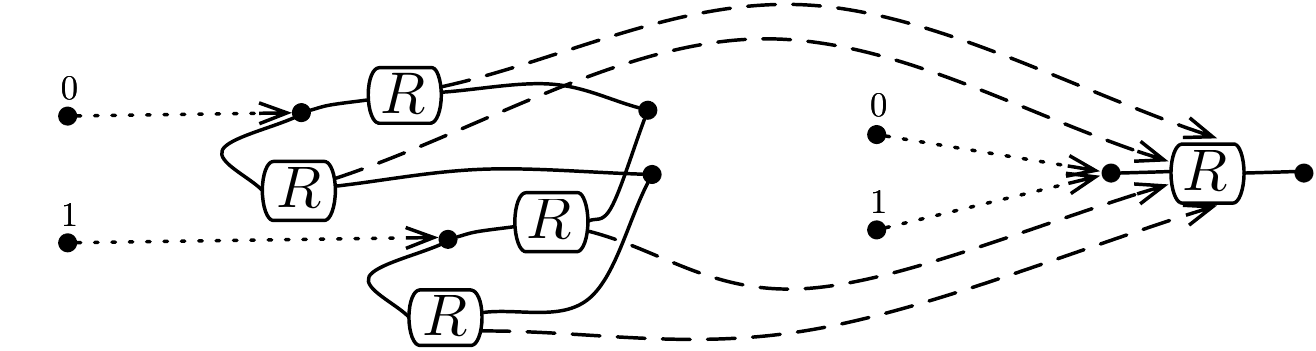}$
\end{tabular}\\
from the free variables $\{x_0,\,x_1\}$ to the vertices. The aforementioned query inclusion is witnessed by an interface-preserving hypergraph homomorphism, displayed above on the right. In category-theoretic terms, hypergraphs-with-interfaces are \emph{discrete cospans}, and the homomorphisms are \emph{cospan homomorphisms}.

%GCQ and all that
In previous work~\cite{bonchi2016rewriting}, the first and third authors with Gadducci, Kissinger and Zanasi showed that such cospans characterise an
important family of \emph{string diagrams}---i.e.\ diagrammatic representations of the arrows of monoidal categories---namely those equipped with
an algebraic structure known as a special Frobenius algebra.
This motivated us to study the connection between this fashionable algebraic structure---which has been used in fields as diverse as quantum computing~\cite{Abramsky2008:CQM,PQP,OxfordCompleteness,LoriaCompleteness}, concurrency theory~\cite{Bruni2006,Bruni2011,brunihierarchical,Bruni2013}, control theory~\cite{Bonchi2015,BaezErbele-CategoriesInControl} and linguistics~\cite{SadrzadehCC14}---and conjunctive queries.

We introduce the logic of Graphical Conjunctive Queries (GCQ).
Although superficially unlike CCQ, we show that it is equally expressive. Its syntax lends itself to string-diagrammatic representation and diagrammatic reasoning respects the underlying logical semantics. GCQ string diagrams for $\psi$ and $\phi$ are drawn below. Note that, while GCQ syntax does not have variables, the concept of CCQ free variable is mirrored by ``dangling'' wires in diagrams.
\[
  \raisebox{-6mm}{\scalebox{\exscale}{
\begin{tikzpicture}
\draw (0.75,2.0) .. controls (0.75,2.0) and (0.5,2.0) .. (0.0,2.0);
\draw (0.75,2.0) .. controls (0.75,1.0) and (1.0,1.0) .. (1.5,1.0);
\draw (0.75,2.0) .. controls (0.75,3.0) and (1.0,3.0) .. (1.5,3.0);
\draw (0.75,2.0) \blk {};
\draw (0.75,6.0) .. controls (0.75,6.0) and (0.5,6.0) .. (0.0,6.0);
\draw (0.75,6.0) .. controls (0.75,5.0) and (1.0,5.0) .. (1.5,5.0);
\draw (0.75,6.0) .. controls (0.75,7.0) and (1.0,7.0) .. (1.5,7.0);
\draw (0.75,6.0) \blk {};
\draw (2.25,1.0) .. controls (2.25,1.0) and (2.0,1.0) .. (1.5,1.0);
\draw (2.25,1.0) .. controls (2.25,1.0) and (2.5,1.0) .. (3.0,1.0);
\node[mat] at (2.25,1.0) {$R$};
\draw (2.25,3.0) .. controls (2.25,3.0) and (2.0,3.0) .. (1.5,3.0);
\draw (2.25,3.0) .. controls (2.25,3.0) and (2.5,3.0) .. (3.0,3.0);
\node[mat] at (2.25,3.0) {$R$};
\draw (2.25,5.0) .. controls (2.25,5.0) and (2.0,5.0) .. (1.5,5.0);
\draw (2.25,5.0) .. controls (2.25,5.0) and (2.5,5.0) .. (3.0,5.0);
\node[mat] at (2.25,5.0) {$R$};
\draw (2.25,7.0) .. controls (2.25,7.0) and (2.0,7.0) .. (1.5,7.0);
\draw (2.25,7.0) .. controls (2.25,7.0) and (2.5,7.0) .. (3.0,7.0);
\node[mat] at (2.25,7.0) {$R$};
\draw (3.0,1.0) .. controls (3.5,1.0) and (4.0,1.0) .. (4.5,1.0);
\draw (3.0,3.0) .. controls (3.5,3.0) and (4.0,5.0) .. (4.5,5.0);
\draw (3.0,5.0) .. controls (3.5,5.0) and (4.0,3.0) .. (4.5,3.0);
\draw (3.0,7.0) .. controls (3.5,7.0) and (4.0,7.0) .. (4.5,7.0);
\draw (5.25,2.0) .. controls (5.25,1.0) and (5.0,1.0) .. (4.5,1.0);
\draw (5.25,2.0) .. controls (5.25,3.0) and (5.0,3.0) .. (4.5,3.0);
\draw (5.25,2.0) .. controls (5.25,2.0) and (5.5,2.0) .. (6.0,2.0);
\draw (5.25,2.0) \blk {};
\draw (6.25,2.0) .. controls (6.25,2.0) and (6.5,2.0) .. (6.0,2.0);
\draw (6.25,2.0) \blk {};
\draw (5.25,6.0) .. controls (5.25,5.0) and (5.0,5.0) .. (4.5,5.0);
\draw (5.25,6.0) .. controls (5.25,7.0) and (5.0,7.0) .. (4.5,7.0);
\draw (5.25,6.0) .. controls (5.25,6.0) and (5.5,6.0) .. (6.0,6.0);
\draw (5.25,6.0) \blk {};
\draw (6.25,6.0) .. controls (6.25,6.0) and (6.5,6.0) .. (6.0,6.0);
\draw (6.25,6.0) \blk {};
\end{tikzpicture}}}
\qquad
\raisebox{-2.5mm}{\scalebox{\exscale}{
\begin{tikzpicture}
\draw (0.75,2.0) .. controls (0.75,1.0) and (0.5,1.0) .. (0.0,1.0);
\draw (0.75,2.0) .. controls (0.75,3.0) and (0.5,3.0) .. (0.0,3.0);
\draw (0.75,2.0) .. controls (0.75,2.0) and (1.0,2.0) .. (1.5,2.0);
\draw (0.75,2.0) \blk {};
\draw (2.25,2.0) .. controls (2.25,2.0) and (2.0,2.0) .. (1.5,2.0);
\draw (2.25,2.0) .. controls (2.25,2.0) and (2.5,2.0) .. (3.0,2.0);
\node[mat] at (2.25,2.0) {$R$};
\draw (3.25,2.0) .. controls (3.25,2.0) and (3.5,2.0) .. (3.0,2.0);
\draw (3.25,2.0) \blk {};
\end{tikzpicture}}}
\]

%cartesian bicategories
While interesting in its own right as an example of a string-diagrammatic representation of a logical language---which has itself  become a topic of recent interest~\cite{ghica}---GCQ comes into its own when reasoning about query inclusion, which is characterised by the laws of \emph{cartesian bicategories}. 
This important categorical structure was introduced by Carboni and Walters~\cite{carboni1987cartesian} who were, in fact, aware of the logical interpretation, mentioning it in passing without giving the details. Our definition of GCQ, its expressivity, and \emph{soundness} of the laws of cartesian bicategories w.r.t. query inclusion is testament to the depth of their insights.

\smallskip

The main contribution of our work is the \emph{completeness} of the laws of cartesian bicategories for query inclusion (Theorem~\ref{thm:main}). 

\smallskip

%completeness
As a side result, we obtain a categorical understanding of the proof by Chandra and Merlin. This uncovers a beautiful triangle relating logical, combinatorial and categorical structures, similar to the Curry-Howard-Lambek correspondence relating intuitionistic propositional logic, $\lambda$-calculus and free cartesian closed categories.

\[
\xymatrix@C=50pt{
{\relax\txt<1.5cm>{{\bf Logical} {CCQ=GCQ}}}
\ar@{<->}@/^/[rr]^-{\text{Chandra and Merlin~\cite{chandra1977optimal}}} 
\ar@{<->}@/_/[dr]_{\txt{Theorem~\ref{thm:main}}\qquad}
& & {\relax\txt<2.5cm>{{\bf Combinatorial } hypergraphs with interfaces}} 
\ar@{<->}@/^/[dl]^{\qquad\txt{ Theorem~\ref{thm:hypergraph}}}
\\
& {\relax\txt<6cm>{{\bf \phantom{\qquad}~Categorical~\phantom{\qquad}} {free cartesian bicategories}}}
}
\]

The rightmost side of the triangle (Theorem~\ref{thm:hypergraph}) provides a  combinatorial characterisation of free cartesian bicategories as discrete cospans of hypergraphs, with the Chandra and Merlin ordering: the existence of a cospan homomorphism in the opposite direction. This result can also be regarded as an extension of the aforementioned~\cite{bonchi2016rewriting} to an enriched setting. 
The fact that the Chandra and Merlin ordering is not antisymmetric forces us to consider preorder-enrichment as opposed to the usual~\cite{carboni1987cartesian} poset-enrichment of cartesian bicategories.\footnote{
While cartesian bicategories were later generalised~\cite{carboni2008cartesian} to a bona fide higher-dimensional setting, our preorder-enriched variant seems to be an interesting stop along the way.}

The step from posets to preorders is actually beneficial: it provides a one-to-one correspondence between hypergraphs and models which we see as functors, following the tradition of categorical logic. The model corresponding to a hypergraph $G$ is exactly the (contravariant) $\Hom$-functor represented by $G$.
By a Yoneda-like argument, we obtain  a ``preorder-enriched analogue'' of Theorem~\ref{thm:main} (Theorem~\ref{thm:completeness}).
With this result, proving Theorem~\ref{thm:main}  reduces to descending from the preorder-enriched setting down to poset-enrichment. 

\medskip

Working with both poset- and preorder-enriched categories means that there is a relatively large number of categories at play. We give a summary of the most important ones in the table below, together with pointers to their definitions. The remainder of this introduction is a roadmap for the paper, focussing on the roles played by the categories mentioned below.
\begin{figure}[h]
  \centering
  \begin{tabular}{c|c|c}
    & preordered & posetal \\ \hline
    free categories & $\preOrdSyntaxPROP$ (Def~\ref{defn:fterm}) & $\gcqpobc$ (Def~\ref{def:gcqpobc}) \\ \hline
    semantic domains for the logic & $\Spanleq{\Set}$ (Def~\ref{defn:span}) & $\Rel$ (Ex~\ref{ex:Rel}) \\ \hline
    combinatorial structures & $\DiscCospan{\Hyp_{\Sigma}}$ (Def~\ref{defn:disccospan}) & - \\
  \end{tabular}
\end{figure}

We begin by justifying the ``equation'' CCQ=GCQ in the triangle above: we recall CCQ and introduce GCQ in Sections~\ref{sec:CCQ} and~\ref{sec:GCQ}, respectively, and show that they have the same expressivity.
We explore the algebraic structure of GCQ in Sections~\ref{sec:todiagram} and~\ref{sec:axioms}, which---as we previously mentioned---is exactly that of cartesian bicategories. As instances of these, we introduce $\gcqpobc$, the free cartesian bicategory,  and $\Rel$, the category of sets and relations.

In Section~\ref{sec:cospans} we introduce \emph{preordered} cartesian bicategories (the free one denoted by $\preOrdSyntaxPROP$) and the category of discrete cospans of hypergraphs with the Chandra and Merlin preorder, denoted by $\DiscCospan{\Hyp_{\Sigma}}$. Theorem~\ref{thm:hypergraph} states that these two are isomorphic.

Theorem~\ref{thm:completeness} is proved in Section~\ref{sec:CompSpan}. Rather than
 $\Rel$, the preordered setting calls for models in $\Spanleq{\Set}$, the preordered cartesian bicategory of spans of sets. In Section~\ref{sec:rel}, we explain the passage from preorders to posets, completing the proof of Theorem~\ref{thm:main}.

We delay a discussion of the ramification of our work, a necessarily short and cursory account---due to space restrictions---of the considerable related work, and directions for future work to Section~\ref{sec:discussion}. We conclude with the observation that \emph{(i)} the diagrammatic language for formulas, \emph{(ii)} the semantics, e.g. of composition of diagrams---what we understand in modern terms as the combination of conjunction and existential quantification---and \emph{(iii)} the use of diagrammatic reasoning as a powerful method of logical reasoning actually go back to the pre-Frege work of the 19th century American polymath CS Peirce on \emph{existential graphs}. Interestingly, it is only recently (see, e.g.~\cite{mellieszeilberger}) that this work has been receiving the attention that it richly deserves.
\begin{figure}\label{fig:namestab}
\begin{tabular}{c|c|c}
\end{tabular}
\end{figure}

\noindent {\bf Preliminaries.}
We assume familiarity with basic categorical concepts, in particular symmetric monoidal,
ordered and
preordered categories. We do not assume familiarity with cartesian bicategories: the acquainted reader should note that what we call ``cartesian bicategories'' are ``cartesian bicategories of relations'' in~\cite{carboni1987cartesian}.
A \emph{prop} is a symmetric strict monoidal category where objects are natural numbers, and the monoidal product on objects is addition $m\tns n := m+n$.
Due to space restrictions, most proofs are in the Appendix.

\section{Calculus of Conjunctive Queries}
\label{sec:CCQ}

Assume a set $\Sigma$ of relation symbols with arity function $ar:\Sigma\to\N$ and a countable set 
$Var = \{ x_i \,|\, i\in \N\}$ of variables. The grammar for the calculus of conjunctive queries is:
\begin{equation}\label{eq:calculus}\tag{CCQ}
\Phi ::= \top \;|\; \Phi \wedge \Phi \;|\; x_i = x_j \;|\; R(\overset{\rightarrow}{x})  \;|\; \exists x. \Phi
\end{equation}
where $R\in\Sigma$, $ar(R)=n$, and $\overset{\rightarrow}{x}$ is a list of length $n$ of variables from $Var$. We assume the standard bound variable conventions and some basic metatheory of formulas; in particular we write $\phi[\overset{\rightarrow}{x}/\overset{\rightarrow}{y}]$, where $\overset{\rightarrow}{x},\overset{\rightarrow}{y}$ are variable lists of equal length, for the simultaneous substitution of variables from $\overset{\rightarrow}{x}$ for variables in $\overset{\rightarrow}{y}$. We write $\overset{\rightarrow}{x}_{[m,n]}$, where $m\leq n$, for the list of variables $x_m,x_{m+1},\dots,x_n$. 
Given a formula $\phi$, $fv(\phi)$ is the set of its free variables. 

\smallskip
The semantics of~\eqref{eq:calculus} formulas is standard and inherited from first order logic. 
\begin{definition}\label{defn:CCQmodel}
A model $\mathcal{M}=(X,\rho)$ is a set $X$ and, for each $R\in\Sigma$, a set $\rho(R)\subseteq X^{ar(R)}$.
\end{definition}
Given a model $\mathcal{M}=(X,\rho)$, the semantics $\densem{\phi}{\mathcal{M}}$ is the set of all assignments of elements from $X$ to $fv(\phi)$ that makes it evaluate to truth, given the usual propositional interpretation.

\smallskip
In order to facilitate a principled definition of the semantics (Definition~\ref{defn:semanticsCCQ}) and to serve the needs of our diagrammatic approach, we will need to take a closer look at free variables. To this end, we give an alternative, sorted presentation of~\eqref{eq:calculus} that features explicit free variable management. As we shall see, the system of judgments below will allow us to derive
$n \vdash \phi$ where $n\in \N$, whenever $\phi$ is a formula of CCQ and $fv(\phi)\subseteq \{x_0,\dots,x_{n-1}\}$. 
\[
\begin{prooftree}
{}
\justifies
0 \vdash \top
\using {(\top)}
\end{prooftree}
\quad  
\begin{prooftree}
{R \in \Sigma \quad ar(R)=n }
\justifies
n \vdash R(x_0,\dots, x_{n-1})
\using {(\Sigma)}
\end{prooftree}
\quad
\begin{prooftree}
n \vdash \phi
\justifies
n-1 \vdash \exists x_{n-1}.\phi
\using
{(\exists)}
\end{prooftree} \quad
\]

\[
\begin{prooftree}
{}
\justifies
2 \vdash x_0 = x_1
\using (=)
\end{prooftree} \quad
\begin{prooftree}
m \vdash \phi \quad 
n \vdash \psi 
\justifies
m+n \vdash \phi \wedge (\psi[\overset{\rightarrow}{x}_{[m,m+n-1]}/\overset{\rightarrow}{x}_{[0,x_{n-1}]}])
\using (\wedge)
\end{prooftree}
\]
Note that the above are restrictive: e.g.\ $(\wedge)$ enforces disjoint sets of variables, and $(\exists)$ allows  quantification only over the last variable. To overcome these limitations we include three structural rules that allow us to manipulate (swap, identify, and introduce) free variables.
\[
\begin{prooftree}
n \vdash \phi \quad (0\leq k < n-1)
\justifies
n \vdash \phi[x_{k+1},x_{k}/x_k,x_{k+1}]
\using {(\mathsf{Sw}_{n,k})}
\end{prooftree}
\quad
\begin{prooftree}
n \vdash \phi 
\justifies
n-1 \vdash \phi[x_{n-2}/x_{n-1}]  
\using{ (\mathsf{Id}_{n}) }
\end{prooftree}
\quad 
\begin{prooftree}
n\vdash \phi \justifies
n+1 \vdash \phi \using{ (\mathsf{Nu}_n)}
\end{prooftree}
\]
Rule $\mathsf{Sw}$ allows us to swap two free variables.
Alone, $\mathsf{Id}$ identifies the final and the penultimate free variable; used together
with $\mathsf{Sw}$ it allows for the identification of any two. Finally, 
$\mathsf{Nu}$ introduces a free variable.
The eight suffice for any CCQ formula, in the following sense:
\begin{proposition}\label{pro:syntax}
$\phi$ is a formula derived from~\eqref{eq:calculus} with $fv(\phi) \subseteq \{x_0,\dots,x_{n-1}\}$ iff $n\vdash \phi$. 
\end{proposition}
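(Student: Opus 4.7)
The plan is to prove both directions by straightforward induction, but with the interesting work concentrated on the forward direction, where the restrictive sorted rules must be shown to cover the full flexibility of the CCQ grammar.

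For the backward direction ($n \vdash \phi$ implies $\phi$ is CCQ with $fv(\phi) \subseteq \{x_0, \ldots, x_{n-1}\}$), I would induct on the derivation. Each of the eight rules has a straightforward interpretation: the four ``logical'' rules $(\top), (\Sigma), (\exists), (=), (\wedge)$ produce valid CCQ formulas whose free-variable sets sit inside $\{x_0, \ldots, x_{n-1}\}$ by direct inspection (for $(\wedge)$ one uses that the substitution reindexes $\psi$ to use variables $x_m, \ldots, x_{m+n-1}$, which are in range). The three structural rules only perform substitutions on free variables of an existing formula, so preservation of the free-variable bound is immediate.

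For the forward direction, I would induct on the structure of $\phi$, with the primary job being to simulate arbitrary variable usage via the restricted rules plus $\mathsf{Sw}, \mathsf{Id}, \mathsf{Nu}$. The key observation to establish first as a small lemma is a \emph{reindexing principle}: if $n \vdash \phi$, then for any function $f : \{0, \ldots, n-1\} \to \{0, \ldots, m-1\}$ one can derive $m \vdash \phi[x_{f(0)}, \ldots, x_{f(n-1)}/x_0, \ldots, x_{n-1}]$. This follows by factoring $f$ through injections (handled by iterated $\mathsf{Nu}$ and $\mathsf{Sw}$) and surjections with identifications (handled by iterated $\mathsf{Id}$ and $\mathsf{Sw}$). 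With this lemma in hand, each CCQ constructor is routine: for $R(\vec{x})$ start from the axiom $ar(R) \vdash R(x_0, \ldots, x_{ar(R)-1})$ and reindex; for $x_i = x_j$ start from $2 \vdash x_0 = x_1$ and reindex; for $\exists x.\phi$ apply the inductive hypothesis, use $\mathsf{Nu}$ and $\mathsf{Sw}$ to move $x$ into the last position, then apply $(\exists)$; for $\phi \wedge \psi$ apply the inductive hypothesis to obtain $m \vdash \phi$ and $k \vdash \psi$ (padding with $\mathsf{Nu}$ if needed so that $m, k$ cover all free variables), conjoin with $(\wedge)$ to get $m+k \vdash \phi \wedge \psi'$ where $\psi'$ uses fresh variables, then apply the reindexing lemma to identify overlapping variables back to $\{x_0, \ldots, x_{n-1}\}$.

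The main obstacle is the reindexing lemma itself: carefully choreographing the structural rules to realise an arbitrary function on variable indices. The trickiest case is an identification followed by a reshuffle, since $\mathsf{Id}$ only merges the last two variables; one must factor every identification through a sequence of swaps that brings the two target variables to the last two positions, performs $\mathsf{Id}$, and then swaps back. Once the lemma is isolated and its combinatorics verified, the rest of the induction is mechanical.
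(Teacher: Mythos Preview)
Your proposal is correct and follows essentially the same approach as the paper: induction on the sorted derivation for the backward direction and induction on the CCQ formula for the forward direction. The paper's proof is a two-line sketch that calls both inductions ``straightforward'' and ``trivial''; your reindexing lemma is exactly the work hidden behind the word ``straightforward'' in the forward direction, so you have simply unpacked what the paper leaves implicit.
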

We use the sorted presentation to define the semantics.
\begin{figure*}
  \footnotesize{
    \begin{minipage}{0.22\linewidth}
    \[
      \densem{{\scriptstyle 0\vdash\top}}{\mathcal{M}} = \{\bullet\} \tag{$\top$}
  \]
    \end{minipage}
    \begin{minipage}{0.72\linewidth}
    \[
    (\overset{\rightarrow}{u},v,w,\overset{\rightarrow}{x}) \in  \densem{{\scriptstyle n \vdash \phi[x_{k+1},x_{k}/x_k,x_{k+1}] }}{\mathcal{M}}
    \Leftrightarrow
    (\overset{\rightarrow}{u},w,v,\overset{\rightarrow}{x}) \in \densem{{\scriptstyle n \vdash \phi}}{\mathcal{M}} \tag{$\mathsf{Sw}_{n,k}$}
    \]
    \end{minipage}\hfill
    \begin{minipage}{0.35\linewidth}
    \[
    \densem{{\scriptstyle n\vdash R(x_0,\dots,x_{n-1})}}{\mathcal{M}} = \rho(R) \tag{$\Sigma$}
\]
    \end{minipage}
    \begin{minipage}{0.62\linewidth}
    \[
    (\overset{\rightarrow}{v},w)\in \densem{{\scriptstyle n-1 \vdash \phi[x_{n-2}/x_{n-1}]}}{\mathcal{M}} 
    \Leftrightarrow
    (\overset{\rightarrow}{v},w,w)\in \densem{{\scriptstyle n \vdash \phi}}{\mathcal{M}} \tag{$\mathsf{Id}_n$}
\]
    \end{minipage}\hfill
    \begin{minipage}{0.37\linewidth}
    \[
      \densem{{\scriptstyle 2\vdash x_0 = x_1}}{\mathcal{M}} = \{(v,v)\,|\, v\in X\} \tag{$=$}
    \]
    \end{minipage}
    \begin{minipage}{0.55\linewidth}
    \[
    \overset{\rightarrow}{v}\in\densem{{\scriptstyle n-1\vdash \exists x_{n-1}.\phi}}{\mathcal{M}} 
    \Leftrightarrow \exists w\in X.\,(\overset{\rightarrow}{v},w)\in \densem{{\scriptstyle n\vdash \phi}}{\mathcal{M}} \tag{$\exists$}
    \]
    \end{minipage}\hfill
    \begin{minipage}{0.36\linewidth}
    \[
    \densem{{\scriptstyle n+1 \vdash \phi}}{\mathcal{M}} = 
    \densem{{\scriptstyle n\vdash\phi}}{\mathcal{M}}\times X \tag{$\mathsf{Nu}_n$}
    \]
    \end{minipage}\hfill
    \begin{minipage}{0.47\linewidth}
    \[
    \densem{{\scriptstyle {m+n \vdash \phi \wedge (\psi[\dots])}}}{\mathcal{M}} = 
    \densem{{\scriptstyle m \vdash \phi}}{\mathcal{M}} \times \densem{{\scriptstyle n\vdash \psi}}{\mathcal{M}} \tag{$\wedge$}
    \]
    \end{minipage}\hfill\\
  }
  \caption{Semantics of CCQ for a model $\mathcal{M}=(X,\rho)$.  We write $\bullet$ for the unique element of $X^0$.}
\label{fig:semanticsCCQ}
\end{figure*}
\begin{definition}\label{defn:semanticsCCQ}
Given a model $\mathcal{M}=(X,\rho)$, the semantics of $n \vdash \phi$ is a set of tuples $\densem{n\vdash \phi}{\mathcal{M}}\subseteq X^n$.
We define it in Figure~\ref{fig:semanticsCCQ} by recursion on the derivation of $n\vdash \phi$.
\end{definition}

Finally, we define the concepts that are of central interest: \emph{query equivalence} and \emph{inclusion}. 
\begin{definition}\label{defn:equivalence}
Given $n \vdash \phi$ and $n\vdash \psi$, we say that $\phi$ and $\psi$ are equivalent and write $\phi \equiv \psi$ if for all models $\mathcal{M}$ we have $\densem{n\vdash \phi}{\mathcal{M}}=\densem{n\vdash \psi}{\mathcal{M}}$. We write $\phi \leqq \psi$
when, for all $\mathcal{M}$, $\densem{n\vdash \phi}{\mathcal{M}}\subseteq\densem{n\vdash \psi}{\mathcal{M}}$. Clearly $\phi \leqq \psi$ and $\psi \leqq \phi$ implies $\phi \equiv \psi$.
\end{definition}

\section{Graphical conjunctive queries}\label{sec:GCQ}
\begin{figure}[ht]
  \resizebox{0.75\hsize}{!}{
\[
\reductionRule{}{ \typeJudgment{}{\Bcomult}{\sort{1}{2}} }\quad
\reductionRule{}{ \typeJudgment{}{\Bcounit}{\sort{1}{0}} }\quad
\reductionRule{R\in \Sigma_{n,m}}{ \typeJudgment{}{R}{\sort{n}{m}} }\quad
\reductionRule{}{ \typeJudgment{}{\Bmult}{\sort{2}{1}} }\quad
\reductionRule{}{ \typeJudgment{}{\Bunit}{\sort{0}{1}} } \quad
\reductionRule{}{ \typeJudgment{}{\Zeronet}{\sort{0}{0}} }\quad
\reductionRule{}{ \typeJudgment{}{\idone}{\sort{1}{1}} }\quad
\reductionRule{}{ \typeJudgment{}{\dsymNet}{\sort{2}{2}} }\quad
\reductionRule{ \typeJudgment{}{c}{\sort{n}{z}} \quad \typeJudgment{}{d}{\sort{z}{m}} }
{ \typeJudgment{}{c \poi d}{\sort{n}{m}} }\quad
\reductionRule{ \typeJudgment{}{c}{\sort{n}{m}} \quad \typeJudgment{}{d}{\sort{p}{q}} }
{ \typeJudgment{}{c \tns d}{\sort{n+p}{m+q}} }
\]}
\caption{Sort inference rules.\label{fig:sortInferenceRules}}
\end{figure}

We  introduce an alternative logic, called Graphical Conjunctive Queries (GCQ).
GCQ and CCQ are---superficially---quite different. Nevertheless, in Propositions~\ref{prop:CCQGCQ} and~\ref{prop:GCQCCQ} we show that they have the same expressive power. 
The grammar of GCQ formulas is given below. 
\begin{equation}\label{eq:GCQsyntax}\tag{GCQ}
c \bnfEq   \Bcounit \bnfSep \Bcomult \bnfSep \Bmult \bnfSep \Bunit \bnfSep
 \Zeronet \bnfSep \idone \bnfSep \dsymNet  \bnfSep c\tns c \bnfSep c \poi c \bnfSep R
\end{equation}
GCQ syntax is a radical departure from~\eqref{eq:calculus}. Rather than use CCQ's existential quantification and conjunction, GCQ uses the operations of monoidal categories: composition and monoidal product. 
There are no variables, thus no assumptions of their countable supply, nor any associated metatheory of capture-avoiding substitution.

The price is
a simple sorting discipline. A \emph{sort} is a pair $\sort{n}{m}$, with $n,m\in \N$. We consider only terms sortable according to Figure~\ref{fig:sortInferenceRules}. 
There and in~\eqref{eq:GCQsyntax}, $R$ ranges over the symbols of a \emph{monoidal} signature $\Sigma$, a set of relation symbols equipped with both an arity and a \emph{coarity}: $\Sigma_{n,m}$ consists of the symbols in $\Sigma$ with arity $n$ and coarity $m$. A GCQ signature plays a similar role to relation symbols in CCQ: we abuse notation for this reason. 
A simple induction shows sort uniqueness: if $\typeJudgment{}{c}{\sort{n}{m}}$ and $\typeJudgment{}{c}{\sort{n'}{m'}}$ then $n=n'$ and $m=m'$.

In~\eqref{eq:GCQsyntax} we used a graphical rendering of GCQ constants. Indeed, we will not write terms of GCQ as formulas, but instead represent them as 2-dimensional diagrams. The justification for this is twofold: the diagrammatic conventions introduced in this section mean that a diagram is a readable, faithful and unambiguous representation of a sorted~\eqref{eq:GCQsyntax} term. More importantly, our characterisation of query inclusion in subsequent sections consists of intuitive topological deformations of the diagrammatic representations of formulas.

A GCQ term $c: \sort{n}{m}$ is drawn as a diagram with $n$ ``dangling wires'' on the left, and $m$ on the right. Roughly speaking, dangling wires are GCQ's answer to the free variables of CCQ. 
Composing $(\poi)$ means connecting diagrams in series and tensoring means stacking. The shorthand {\lower3pt\hbox{$\includegraphics[height=15pt]{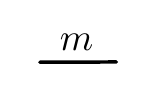}$}} stands for $m$ wires in parallel. The box $\Rcircuit$ stands for a relation symbol $R\in \Sigma_{n,m}$.  Thus, given $c: \sort{n}{m}$, $c': \sort{m}{k}$, $c \poi c': \sort{n}{k}$ is drawn \lower10pt\hbox{$\includegraphics[height=25pt]{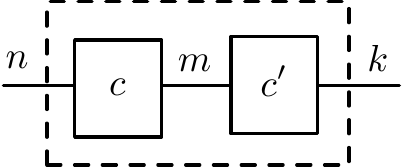}$}, and given $d: \sort{p}{q}$, $c\tns d: \sort{n+p}{m+q}$ is drawn \lower20pt\hbox{$\includegraphics[height=40pt]{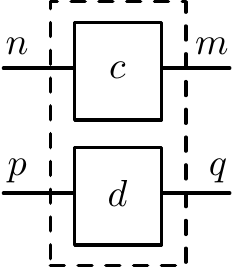}$}.
\begin{minipage}{0.85\textwidth}
\begin{example}
Consider 
$((\idone \tns \Bcomult) \tns \Zeronet) ; (R \tns S): \sort{2}{1}$, assuming $R\in \Sigma_{2,0}$, $S\in\Sigma_{1,1}$. Its diagrammatic rendering is on the right.
Note that the use of the dotted boxes induces a tree-like quality to diagrams. Indeed, they are a faithful representation for syntactic terms constructed from~\eqref{eq:GCQsyntax}.
\end{example}
\end{minipage}
\begin{minipage}{0.15\textwidth}
  \[
    \raisebox{4mm}{\includegraphics[height=55pt]{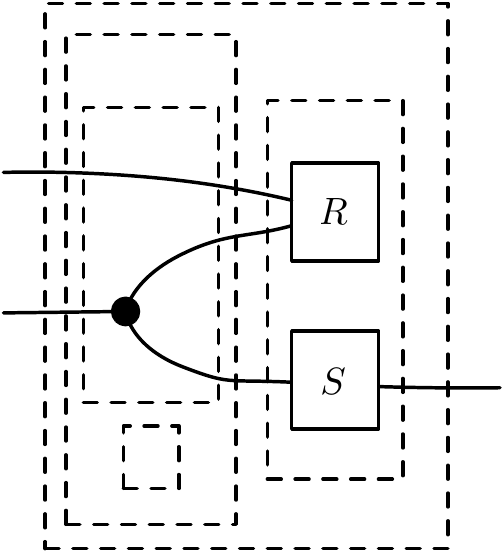}}
  \]
\end{minipage}
We now turn to semantics. First, the notion of model of GCQ is similar to a model of CCQ.
\begin{definition}\label{defn:GCQmodel}
  A model $\mathcal{M}=(X,\rho)$ is a set $X$ and, for each $R\in\Sigma_{n,m}$, $\rho(R)\subseteq X^n\times X^m$.
\end{definition}

Given a model $\mathcal{M}=(X,\rho)$, the semantics of $\typeJudgment{}{c}{\sort{n}{m}}$ is the relation $\densem {c}{\mathcal{M}} \subseteq X^n\times X^m$  defined recursively in Figure~\ref{fig:gcqsemantics}.
Armed with a notion of semantics, we can define query equivalence ($\equiv$) and inclusion ($\leqq$) for GCQ terms analogously to Definition~\ref{defn:equivalence}.

\begin{figure}[t]
  \resizebox{\linewidth}{!}{
    \begin{minipage}{\linewidth}
\begin{align*}
\densem{\Bcomult}{\mathcal{M}} &= \left\{\left(x, 
{\tiny\begin{pmatrix}
    x \\
    x
\end{pmatrix}}
\right)\; |\; x\in X\right\} &
\densem{\Bcounit}{\mathcal{M}} &= \{(x, \bullet)\; |\; x\in X\} &
\densem{ c\tns d }{\mathcal{M}} &= \densem{ c }{\mathcal{M}} \tns \densem{ d }{\mathcal{M}} &
\densem{\Zeronet}{\mathcal{M}} &= \{(\bullet,\bullet)\} \\
\densem{\Bmult}{\mathcal{M}} &= \left\{\left(
{\tiny\begin{pmatrix}
    x \\
    x
\end{pmatrix}}
, x\right)\; |\; x\in X\right\} &
\densem{\Bunit}{\mathcal{M}} &= \{(\bullet,x)\; |\; x\in X\} &
\densem{ c\poi d }{\mathcal{M}} &= \densem{ c }{\mathcal{M}} \poi \densem{ d }{\mathcal{M}} &
\densem{ R }{\mathcal{M}} &= \rho(R) \\
\densem{ \dsymNet}{\mathcal{M}} &= \left\{\left(
{\tiny \begin{pmatrix}
    x \\
    y
\end{pmatrix}},
{\tiny \begin{pmatrix}
    y \\
    x
\end{pmatrix}}\right) \; |\; x,y\in X\right\} &
\densem{\idone}{\mathcal{M}} &= \{(x,x) \; |\; x\in X\}
\end{align*}
\end{minipage}
}

\caption[Semantics of GCQ]{Semantics of GCQ for a model $\mathcal{M}=(X,\rho)$.
We used the notation $R\poi S = \{(x,z) \;|\; \exists y\in Y \text{ s.t. }(x,y)\in R \text{ and } (y,z)\in S \}$ and $R\tns S = \{\left({\tiny\begin{pmatrix}
    x \\
    u
\end{pmatrix}},{\tiny\begin{pmatrix}
    y \\
    v
\end{pmatrix}}\right) \;|\; (x,y)\in R \text{ and } (u,v)\in S \}$.
$\bullet$ is the unique element of $X^0$ and \scalebox{0.5}{$\begin{pmatrix}
    x_0 \\
    \scriptsize{\vdots} \\
    x_{n-1}
\end{pmatrix}$} an element of $X^n$.}
\label{fig:gcqsemantics}
\end{figure}

\begin{example}\label{ex:boneempty}
Consider the GCQ term $\lower5pt\hbox{$\includegraphics[height=15pt]{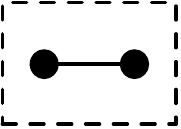}$}$ of sort $\sort{0}{0}$. For a model $\mathcal{M}=(X,\rho)$, its semantics $\densem{\lower5pt\hbox{$\includegraphics[height=15pt]{graffles/boxedbone.pdf}$}}{\mathcal{M}} \subseteq X^0\times X^0$ is either the empty relation $\emptyset$, if $X$ is empty, or the relation $\{(\bullet, \bullet)\}$, if $X$ is not empty. Since $\emptyset \subseteq \{(\bullet, \bullet)\}$, and since $\densem{\Zeronet}{\mathcal{M}} = \{(\bullet, \bullet)\} $ for all models $\mathcal{M}$, it holds that $\lower5pt\hbox{$\includegraphics[height=15pt]{graffles/boxedbone.pdf}$} \leqq \Zeronet$. Intuitively, the first term corresponds to the CCQ formula $\exists x. \top$, holding in all non empty models, while the second corresponds to the formula $\top$. In the remainder of this section we will make this intuition precise.
\end{example}

\subsection{Expressivity} \label{sec:expressivity}

\begin{figure*}
  \begin{minipage}{0.3\textwidth}
    \[
      \Theta \left( 0 \vdash \top \right) = \Zeronet \tag{$\top$}
    \]
  \end{minipage}\hfill
  \begin{minipage}{0.7\textwidth}
    \[
      \Theta \left( n \vdash \phi[x_{k+1},x_{k}/x_k,x_{k+1}] \right) =
      \lower20pt\hbox{$\includegraphics[height=40pt]{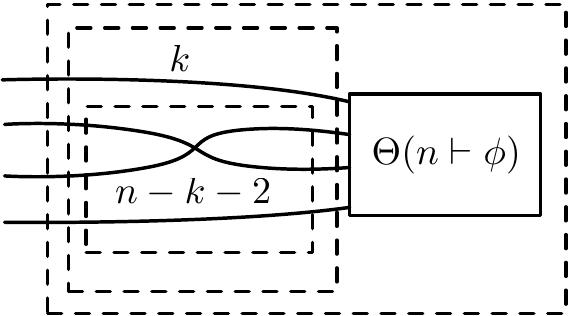}$} \tag{$\mathsf{Sw}_{n,k}$}
    \]
  \end{minipage}\hfill

  \begin{minipage}{0.35\textwidth}
    \[
      \Theta \left( 2 \vdash x_0 = x_1 \right) =
      \lower6pt\hbox{$\includegraphics[height=20pt]{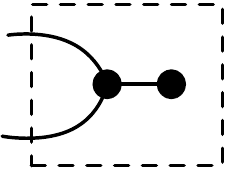}$} \tag{$=$}
    \]
  \end{minipage}\hfill
  \begin{minipage}{0.65\textwidth}
    \[
      \Theta \left( n-1 \vdash \phi[x_{n-2}/x_{n-1}] \right) =
      \lower16pt\hbox{$\includegraphics[height=45pt]{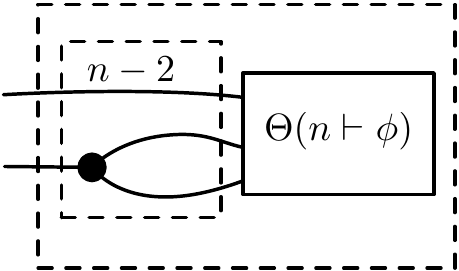}$} \tag{$\mathsf{Id}_{n}$}
    \]
  \end{minipage}\hfill

  \begin{minipage}{0.45\textwidth}
    \[
      \Theta \left( n \vdash R(x_0,\dots,x_{n-1}) \right) = \lower6pt\hbox{$\includegraphics[height=20pt]{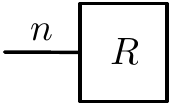}$} \tag{$\Sigma$}
    \]
  \end{minipage}
  \begin{minipage}{0.55\textwidth}
    \[
      \Theta \left( n-1 \vdash \exists x_{n-1}.\phi \right) =
      \lower10pt\hbox{$\includegraphics[height=30pt]{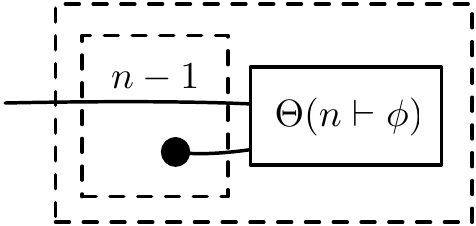}$} \tag{$\exists$}
    \]
  \end{minipage}\hfill

  \begin{minipage}{0.5\textwidth}
    \[
      \Theta \left( n+1 \vdash \phi \right) = \lower16pt\hbox{$\includegraphics[height=40pt]{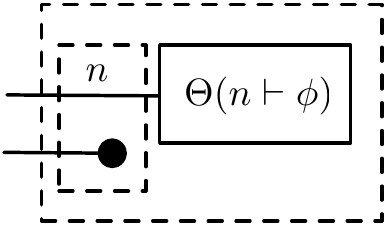}$} \tag{$\mathsf{Nu}_n$}
    \]
  \end{minipage}\hfill
  \begin{minipage}{0.5\textwidth}
    \[
      \Theta \left( m+n \vdash \phi \wedge (\psi[\dots]) \right) =
      \lower20pt\hbox{$\includegraphics[height=40pt]{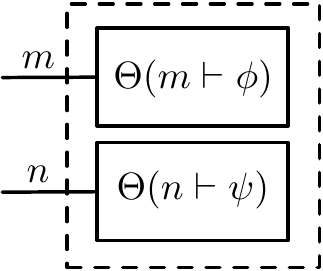}$} \tag{$\wedge$}
    \]
  \end{minipage}\hfill

\caption{Translation $\Theta$ from CCQ to GCQ.\label{fig:trans}}
\end{figure*}

We now give a semantics preserving translation $\Theta$ from CCQ to GCQ. For each CCQ relation symbol $R\in\Sigma$ of arity $n$, we assume a corresponding GCQ symbol $R\in\Sigma_{n,0}$.
Using Proposition~\ref{pro:syntax}, it suffices to consider judgments $n\vdash \phi$.
 For each, we obtain a GCQ term $\Theta(n\vdash \phi):\sort{n}{0}$.
The translation $\Theta$, given in Figure~\ref{fig:trans}, is defined by recursion on the derivation of $n \vdash \phi$. Given a CCQ model $\mathcal{M}=(X,\rho)$, let $\Theta(\mathcal{M})=(X,\rho')$ be the obvious corresponding GCQ model: $\rho'(R)=\rho(R)\times\{\bullet\}$.
The following confirms that semantics is preserved.
\begin{proposition}\label{thm:logictodiagrams}
For a CCQ model $\mathcal{M}=(X,\rho)$:
$\overset{\rightarrow}{v}\in \densem{n\vdash \phi}{\mathcal{M}}$ iff
$(\overset{\rightarrow}{v}, \bullet) \in \densem{\Theta(n\vdash \phi)}{\Theta(\mathcal{M})}$.
\end{proposition}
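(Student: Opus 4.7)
The plan is to proceed by structural induction on the derivation of $n \vdash \phi$, which is precisely the recursion defining $\Theta$ in Figure \ref{fig:trans}. Each case reduces to unfolding one clause of $\Theta$ against the corresponding clauses of Figures \ref{fig:semanticsCCQ} and \ref{fig:gcqsemantics}, invoking the induction hypothesis on subderivations where present.

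The base cases are immediate. For $(\top)$, both $\densem{0 \vdash \top}{\mathcal{M}} = \{\bullet\}$ and $\densem{\Zeronet}{\Theta(\mathcal{M})} = \{(\bullet,\bullet)\}$ are singletons, so the biconditional holds trivially. For $(=)$, the cup has GCQ semantics $\{((v,v),\bullet) \mid v \in X\}$, which matches the CCQ diagonal $\{(v,v) \mid v \in X\}$. For $(\Sigma)$, an $n$-ary relation symbol $R$ is translated to a GCQ box of sort $\sort{n}{0}$, and by construction of $\Theta(\mathcal{M})$ we have $\rho'(R) = \rho(R) \times \{\bullet\}$, giving the equivalence on the nose.

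In the inductive cases, $\Theta$ builds the diagram by pre-composing or tensoring the translations of premises with specific atomic GCQ gadgets whose semantics enact the corresponding CCQ clause. For $(\exists)$, precomposition with $\Bunit$ on the last wire implements existential quantification, since $\densem{\Bunit}{} = \{(\bullet,x) \mid x \in X\}$ turns relational composition into quantification over the extra coordinate. For $(\wedge)$, the tensor of diagrams matches the Cartesian product of denotations. The structural rules correspond to: $(\mathsf{Sw}_{n,k})$ precomposition with a $\dsymNet$ on wires $k, k{+}1$; $(\mathsf{Id}_n)$ precomposition with a $\Bcomult$ duplicating the last input, so that the IH instance $(\vec v, w, w) \in \densem{\Theta(n \vdash \phi)}{\Theta(\mathcal{M})}$ captures the substitution $\phi[x_{n-2}/x_{n-1}]$; and $(\mathsf{Nu}_n)$ precomposition with $\Bcounit$ discarding the fresh input. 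In each case, unfolding the definitions of $\poi$ and $\tns$ in Figure \ref{fig:gcqsemantics} together with a single application of the IH yields the desired biconditional.

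I expect the main obstacle to be purely notational bookkeeping of wire assignments rather than any conceptual difficulty. The $(\mathsf{Id}_n)$ case is the most delicate: the two outputs of $\Bcomult$ must be wired to precisely the two input slots of $\Theta(n \vdash \phi)$ corresponding to the identified variables, so that the intermediate tuple $(\vec v, w, w)$ arising from relational composition is exactly what the CCQ clause for $(\mathsf{Id}_n)$ calls for. Once this wiring is pinned down and the sort of each composite gadget checked, no further ingredient is required.
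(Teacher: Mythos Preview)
Your proposal is correct and takes exactly the same approach as the paper, which records the proof simply as ``Easy induction on the derivation of $n\vdash \phi$.'' Your case analysis spells out precisely what that induction amounts to, and the bookkeeping concerns you flag (notably for $(\mathsf{Id}_n)$) are the only content of the argument.
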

Furthermore, to characterise query inclusion in CCQ, it is enough to characterise it in GCQ.
\begin{proposition}\label{prop:CCQGCQ}
For all CCQ formulas $n\vdash \phi$ and $n\vdash \psi$, $\phi \leqq_{CCQ} \psi$ iff
$\Theta(\phi) \leqq_{GCQ} \Theta(\psi)$.
\end{proposition}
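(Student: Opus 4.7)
The plan is to reduce query inclusion on both sides to pointwise comparisons of the semantic brackets and then to invoke Proposition~\ref{thm:logictodiagrams}, which already bridges the two semantics tuple-by-tuple (modulo the trivial discrepancy $X^n \cong X^n \times \{\bullet\}$). The key observation is that the GCQ signature produced by the translation $\Theta$ contains only symbols of coarity $0$, so a GCQ model $\mathcal{N}=(X,\sigma)$ over it has each $\sigma(R) \subseteq X^n \times X^0$ canonically isomorphic to a subset of $X^n$. This yields a bijection between CCQ models $\mathcal{M}$ over $\Sigma$ and GCQ models $\mathcal{N}$ over the translated signature, with $\Theta$ realising one direction and the obvious relabelling $\rho(R) = \{\overset{\rightarrow}{v} \,:\, (\overset{\rightarrow}{v},\bullet) \in \sigma(R)\}$ realising the other.

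For the forward implication, given $\phi \leqq_{CCQ} \psi$ and an arbitrary GCQ model $\mathcal{N}$, I would take $\mathcal{M}$ to be the CCQ model with $\Theta(\mathcal{M})=\mathcal{N}$. Any $(\overset{\rightarrow}{v},\bullet) \in \densem{\Theta(\phi)}{\mathcal{N}}$ translates via Proposition~\ref{thm:logictodiagrams} to $\overset{\rightarrow}{v} \in \densem{\phi}{\mathcal{M}}$; the CCQ hypothesis gives $\overset{\rightarrow}{v} \in \densem{\psi}{\mathcal{M}}$; a second application of Proposition~\ref{thm:logictodiagrams} delivers $(\overset{\rightarrow}{v},\bullet) \in \densem{\Theta(\psi)}{\mathcal{N}}$. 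The converse is strictly symmetric: starting from a CCQ model $\mathcal{M}$ and $\overset{\rightarrow}{v} \in \densem{\phi}{\mathcal{M}}$, Proposition~\ref{thm:logictodiagrams} lifts the witness to $\densem{\Theta(\phi)}{\Theta(\mathcal{M})}$, the GCQ hypothesis instantiated at that particular model pushes it across to $\densem{\Theta(\psi)}{\Theta(\mathcal{M})}$, and a final appeal to Proposition~\ref{thm:logictodiagrams} brings the conclusion back to $\overset{\rightarrow}{v} \in \densem{\psi}{\mathcal{M}}$.

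The only point that deserves care — and what I expect to be the mildly subtle step — is the implicit range of universal quantification in $\leqq_{GCQ}$: a priori it ranges over \emph{all} GCQ models of the translated signature, yet we may restrict to those of the coarity-$0$ sub-signature generated by the symbols actually appearing in $\Theta(\phi)$ and $\Theta(\psi)$, since the GCQ semantics of a term depends only on interpretations of symbols it contains. With this restriction the bijection above applies and no real obstacle remains; the entire argument is essentially a repackaging of Proposition~\ref{thm:logictodiagrams} along both directions of the bijection between CCQ and (coarity-$0$) GCQ models.
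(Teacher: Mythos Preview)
Your argument is correct. Both directions follow from Proposition~\ref{thm:logictodiagrams} once you observe that $\Theta$ is a bijection between CCQ models over $\Sigma$ and GCQ models over the translated (coarity-$0$) signature; your remark that the semantics of a GCQ term depends only on the symbols it contains cleanly handles any residual worry about the range of the universal quantifier in~$\leqq_{GCQ}$.

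The paper proceeds somewhat differently. Rather than appealing to surjectivity of $\Theta$ on models, it develops the reverse translation $\Lambda$ at the level of \emph{terms} (Figure~\ref{fig:revtrans}), proves that $\Lambda$ preserves semantics (Proposition~\ref{thm:diagramstologic}), and records that $\Lambda\Theta$ is the identity on semantics (Lemma~\ref{lem:lambdatheta}); the appendix then cites all three results. Your route is more economical for this particular proposition: the model-level bijection is a one-line observation, whereas $\Lambda$ on terms requires a full recursive definition. The paper's detour is not wasted, however, since $\Lambda$ is needed anyway for the companion Proposition~\ref{prop:GCQCCQ}, which concerns arbitrary GCQ terms rather than those in the image of $\Theta$.
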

Proposition~\ref{thm:logictodiagrams} yields the left-to-right direction. For right-to-left, we give a semantics-preserving translation $\Lambda$ from GCQ to CCQ in Appendix~\ref{s:GCQtoCCQ}. Modulo $\equiv$, $\Lambda$ is inverse of $\Theta$. 
\begin{proposition}\label{prop:GCQCCQ} 
There exists a semantics preserving translation $\Lambda$ from GCQ to CCQ such that
for all GCQ terms $c,d \colon \sort{n}{m}$,  
it holds that  $c \leqq_{GCQ} d$ iff
$\Lambda(c) \leqq_{CCQ} \Lambda(d)$.
\end{proposition}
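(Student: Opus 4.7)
The plan is to define $\Lambda$ by structural induction on sorted GCQ terms, sending each $c \colon \sort{n}{m}$ to a CCQ derivation $n+m \vdash \Lambda(c)$ in which the first $n$ free variables track the input wires of $c$ and the last $m$ the output wires. For the two signatures to match, each GCQ symbol $R \in \Sigma_{n,m}$ is reinterpreted as a CCQ relation symbol of arity $n+m$; under the identification $X^n \times X^m \cong X^{n+m}$, GCQ and CCQ models over this common signature become exactly the same data, which is what will allow semantic inclusion to transfer in both directions.

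On generators the definition is essentially forced by Figure~\ref{fig:gcqsemantics}. I set $\Lambda(\idone) = (x_0 = x_1)$, $\Lambda(\dsymNet) = (x_0 = x_3) \wedge (x_1 = x_2)$, $\Lambda(\Bcomult) = (x_0 = x_1) \wedge (x_0 = x_2)$, $\Lambda(\Bmult) = (x_0 = x_2) \wedge (x_1 = x_2)$, $\Lambda(\Bunit) = \Lambda(\Bcounit) = \top$ (derived at $1 \vdash$ by applying $\mathsf{Nu}_0$ to the axiom $(\top)$), $\Lambda(\Zeronet) = (0 \vdash \top)$, and $\Lambda(R) = R(x_0,\dots,x_{n+m-1})$ for $R \in \Sigma_{n,m}$. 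Each such derivation is obtained by combining rules $(=)$, $(\wedge)$ and a few applications of $\mathsf{Sw}$ to place free variables in the correct positions.

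For composition $c \poi d$ with $c \colon \sort{n}{k}$ and $d \colon \sort{k}{m}$, I would combine $\Lambda(c)$ and $\Lambda(d)$ via $(\wedge)$, yielding a formula with $n+2k+m$ free variables; then use repeated $\mathsf{Sw}$'s to bring the $k$ outputs of $c$ and the $k$ inputs of $d$ pairwise to the end, apply $\mathsf{Id}$ to identify each such pair, and finally apply $(\exists)$ $k$ times, leaving $n+m$ free variables in the required order. For tensor $c \tns d$ with $c \colon \sort{n}{m}$ and $d \colon \sort{p}{q}$, I would combine $\Lambda(c)$ and $\Lambda(d)$ via $(\wedge)$ and then apply a sequence of $\mathsf{Sw}$'s that interleaves the resulting free variables so that the first $n+p$ list the inputs of $c$ followed by those of $d$, and the last $m+q$ list the corresponding outputs.

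The statement then follows by first establishing semantic preservation, $(\vec v,\vec w)\in \densem{c}{\mathcal{M}}$ iff $(\vec v,\vec w)\in \densem{\Lambda(c)}{\mathcal{M}}$, by induction on $c$: generator cases reduce to direct comparison of Figures~\ref{fig:semanticsCCQ} and~\ref{fig:gcqsemantics}, the tensor case uses the fact that the semantics of rule $(\wedge)$ is precisely the product of relations, and the composition case combines the semantics of $(\exists)$ with that of $\mathsf{Id}$ to reproduce relational composition. The biconditional $c \leqq_{GCQ} d$ iff $\Lambda(c) \leqq_{CCQ} \Lambda(d)$ is then immediate, since the identification between GCQ and CCQ models over the common signature is a bijection that preserves denotation on both sides. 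The main obstacle will be the variable bookkeeping imposed by the restrictive forms of $(\exists)$ and $\mathsf{Id}$, which only act on the last free variable(s); this is best factored into a preliminary lemma stating that each derivable permutation of free variables induces the corresponding permutation on denotations, so that the semantic calculations for $\poi$ and $\tns$ reduce to canonical forms.
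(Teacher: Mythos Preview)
Your proposal is correct and follows essentially the same approach as the paper: an inductive translation $\Lambda$ sending $c:\sort{n}{m}$ to a CCQ formula with $n+m$ free variables, semantic preservation proved by induction, and the biconditional obtained from the bijection between GCQ and CCQ models over the common signature. The only difference is cosmetic: the paper introduces a two-sided variable convention (writing $n,m\vdash\phi$ with left variables $x_0,\dots,x_{n-1}$ and right variables $y_0,\dots,y_{m-1}$), which lets it express the composition clause directly as $\exists\vec z.\,\Lambda(c_1)[\vec z/\vec y]\wedge\Lambda(c_2)[\vec z/\vec x]$ and thereby absorb the $\mathsf{Sw}/\mathsf{Id}$ bookkeeping you spell out explicitly.
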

Propositions~\ref{prop:CCQGCQ} and~\ref{prop:GCQCCQ} together imply that $CCQ$ and $GCQ$ have the same expressive power.

\begin{example}
Recall from Example~\ref{ex:boneempty}, that $\lower5pt\hbox{$\includegraphics[height=15pt]{graffles/boxedbone.pdf}$}$ is related to $\exists x. \top$. By translating the CCQ formula $0\vdash \exists x_0.\top$ via $\Theta$, one obtains $\lower10pt\hbox{$\includegraphics[height=25pt]{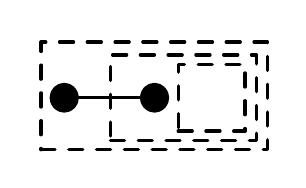}$}$. The latter and $\lower5pt\hbox{$\includegraphics[height=15pt]{graffles/boxedbone.pdf}$}$ are different---syntactically---but they are equal modulo $\equiv$. Note that their diagrams are similar: in the next section, we prove that terms differing only by dashed boxes are equal modulo $\equiv$.
\end{example}

\section{From terms to string diagrams}\label{sec:todiagram}

The first step towards an equational characterisation of query inclusion is to move from GCQ, where the graphical notation was a faithful representation of ordinary syntactic terms, to bona fide \emph{string diagrams}; that is, graphical notation for the arrows of a prop, a particularly simple kind of symmetric monoidal category (SMC). This is an advantage of GCQ syntax: its operations are amenable to an elegant axiomatisation. A hint of the good behaviour of GCQ operations is that query inclusion (and, therefore, query equivalence is) a (pre)congruence.
\begin{lemma}~
\begin{enumerate}[(i)]
\item Let $c,c':\sort{n}{m}$ and $d,d':\sort{m}{k}$ with $c\leqq c'$ and $d\leqq d'$. Then 
$(c\poi d) \leqq (c'\poi d')$.
\item Let $c,c':\sort{n}{m}$ and $d,d:\sort{p}{q}$ with $c\leqq c'$ and $d\leqq d'$
Then $(c\tns d) \leqq (c' \tns d')$.
\end{enumerate}
\label{lem:precongruence}
\end{lemma}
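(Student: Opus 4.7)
The plan is to unfold the definition of $\leqq$ and exploit the compositional nature of the GCQ semantics in Figure~\ref{fig:gcqsemantics}, reducing both claims to the monotonicity of the two operations on relations, $\poi$ and $\tns$, with respect to set inclusion.

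Concretely, fix an arbitrary model $\mathcal{M}=(X,\rho)$. By hypothesis we have $\densem{c}{\mathcal{M}}\subseteq \densem{c'}{\mathcal{M}}$ and $\densem{d}{\mathcal{M}}\subseteq \densem{d'}{\mathcal{M}}$, and by the clauses for composition and tensor in Figure~\ref{fig:gcqsemantics} the two inequalities to be established reduce to
\[
\densem{c}{\mathcal{M}} \poi \densem{d}{\mathcal{M}} \;\subseteq\; \densem{c'}{\mathcal{M}} \poi \densem{d'}{\mathcal{M}}
\qquad\text{and}\qquad
\densem{c}{\mathcal{M}} \tns \densem{d}{\mathcal{M}} \;\subseteq\; \densem{c'}{\mathcal{M}} \tns \densem{d'}{\mathcal{M}}.
\]
For (i), I would take a pair $(\overset{\rightarrow}{x},\overset{\rightarrow}{z})$ on the left, invoke the definition of relational composition to extract a witness $\overset{\rightarrow}{y}$ with $(\overset{\rightarrow}{x},\overset{\rightarrow}{y})\in \densem{c}{\mathcal{M}}$ and $(\overset{\rightarrow}{y},\overset{\rightarrow}{z})\in \densem{d}{\mathcal{M}}$, apply the two hypotheses pointwise, and recompose. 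For (ii), the argument is similar but even simpler: an element of $\densem{c}{\mathcal{M}} \tns \densem{d}{\mathcal{M}}$ is essentially a pair of elements, one in each factor, and each factor embeds componentwise into the corresponding primed factor by hypothesis. Since $\mathcal{M}$ was arbitrary, both inclusions hold for every model, giving $c\poi d \leqq c'\poi d'$ and $c\tns d \leqq c'\tns d'$.

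There is no real obstacle here: the result is essentially the observation that $\poi$ and $\tns$ on $\Rel$ are monotone with respect to componentwise inclusion, and the GCQ semantics is defined so as to propagate these operations homomorphically. The lemma is used as plumbing in later sections, where it licenses replacing subterms by $\leqq$-smaller ones inside larger diagrams; that this substitution is sound is exactly the content of the two monotonicity clauses above.
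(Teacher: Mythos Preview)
Your proof is correct and follows exactly the same approach as the paper: the paper's proof is the one-liner ``Follows immediately from the definition of semantics and relational composition / tensor in Figure~\ref{fig:gcqsemantics},'' and you have simply spelled this out by fixing a model, invoking the compositional clauses for $\poi$ and $\tns$, and checking monotonicity of relational composition and tensor with respect to inclusion.
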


We now consider the laws of strict symmetric monoidal categories (Figure~\ref{fig:axsmc}) and discover that any two GCQ terms identified by them are logically equivalent. This means that we can eliminate the clutter of dashed boxes from our graphical notation. 

\begin{figure*}
\begin{tabular}{cc}
$\begin{array}{c}
\lower11pt\hbox{$\includegraphics[height=25pt]{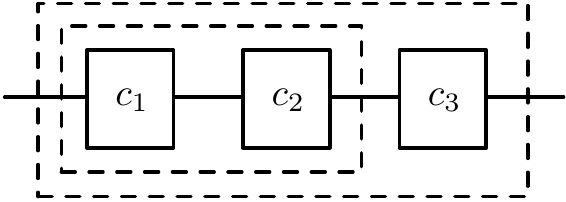}$} \stackrel{(i)}{=} 
\lower11pt\hbox{$\includegraphics[height=25pt]{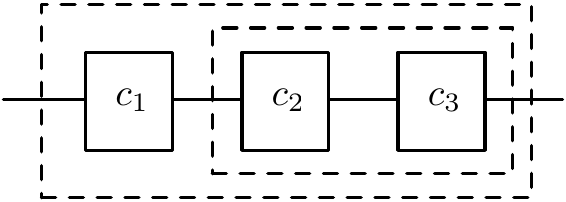}$} \vspace{2pt} \\
\lower8pt\hbox{$\includegraphics[height=20pt]{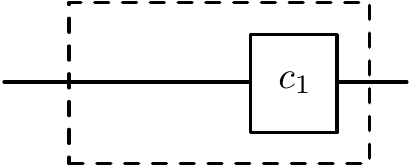}$} \stackrel{(ii)}{=} 
\lower4pt\hbox{$\includegraphics[height=10pt]{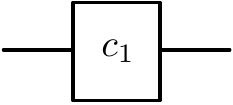}$} \stackrel{(ii)}{=} 
\lower8pt\hbox{$\includegraphics[height=20pt]{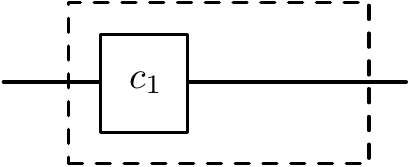}$}
\end{array}$
&
$\begin{array}{cc}
\lower23pt\hbox{$\includegraphics[height=50pt]{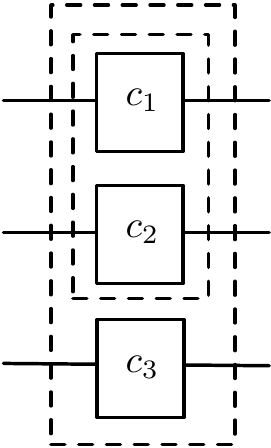}$} \stackrel{(iii)}{=}  
\lower24pt\hbox{$\includegraphics[height=50pt]{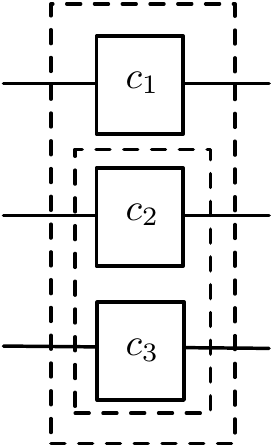}$} &
\lower15pt\hbox{$\includegraphics[height=30pt]{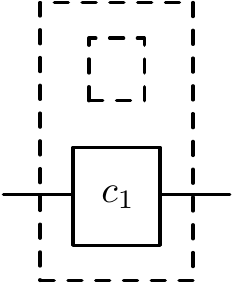}$} \stackrel{(iv)}{=} 
\lower3pt\hbox{$\includegraphics[height=10pt]{graffles/c1clean.pdf}$} \stackrel{(iv)}{=} 
\lower15pt\hbox{$\includegraphics[height=30pt]{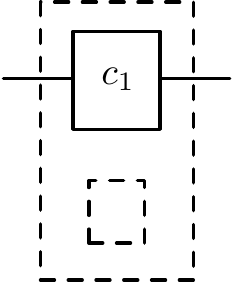}$}
\end{array}$
\end{tabular}
\begin{tabular}{ccc}
$\begin{array}{c}\lower20pt\hbox{$\includegraphics[height=40pt]{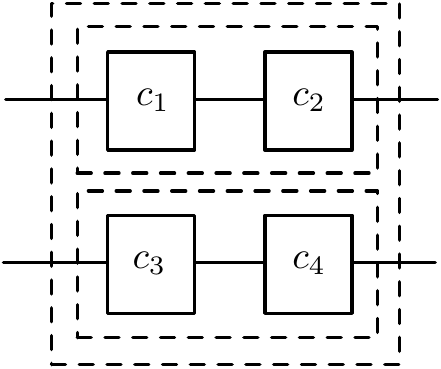}$} \stackrel{(v)}{=}  
\lower20pt\hbox{$\includegraphics[height=40pt]{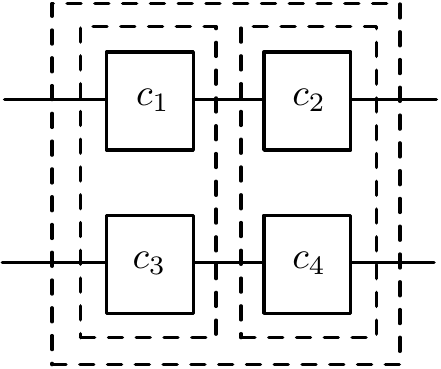}$}
\end{array}$
&
$\begin{array}{cc}
\lower9pt\hbox{$\includegraphics[height=23pt]{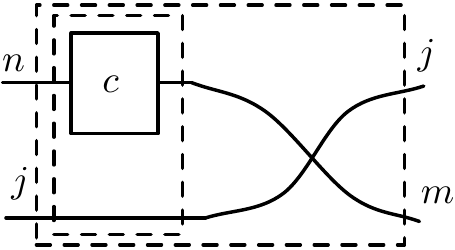}$}  \stackrel{(vi)}{=} 
\lower13pt\hbox{$\includegraphics[height=28pt]{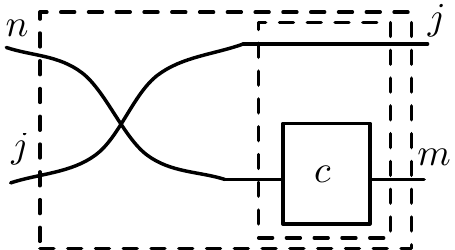}$} \vspace{2pt} \\
\lower13pt\hbox{$\includegraphics[height=28pt]{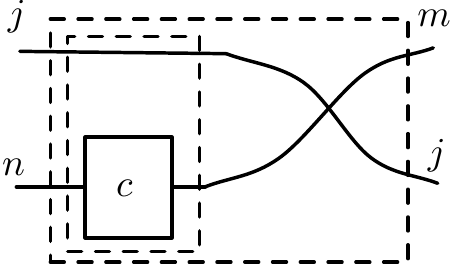}$} \stackrel{(vii)}{=} 
\lower9pt\hbox{$\includegraphics[height=23pt]{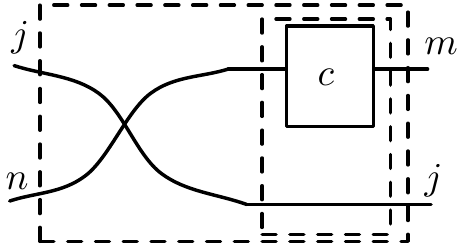}$}
\end{array}$
&
$\begin{array}{c}
\lower8pt\hbox{$\includegraphics[height=18pt]{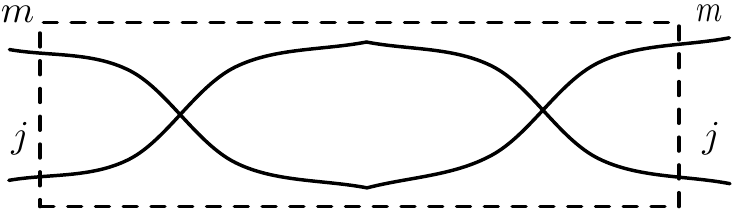}$}  \stackrel{(viii)}{=} 
\lower10pt\hbox{$\includegraphics[height=18pt]{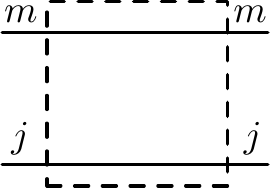}$}
\end{array}$
\end{tabular}
\caption{Axioms of strict symmetric monoidal categories. Wire annotations in $(i)$-$(v)$ have been omitted for clarity.\label{fig:axsmc}}
\end{figure*}

\begin{proposition}\label{pro:smc}
$\equiv$ satisfies the axioms in Figure~\ref{fig:axsmc}.
\end{proposition}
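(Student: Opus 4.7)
The plan is to reduce the proposition to the fact that the category $\Rel$ of sets and relations, equipped with relational composition $\poi$ and cartesian product $\tns$ of relations, together with the diagonal-swap interpretation of $\dsymNet$, is a strict symmetric monoidal category. By Definition~\ref{defn:equivalence} applied to GCQ, checking that $\equiv$ satisfies an axiom $c \equiv c'$ means checking that $\densem{c}{\mathcal M} = \densem{c'}{\mathcal M}$ for every model $\mathcal M = (X,\rho)$. The semantics in Figure~\ref{fig:gcqsemantics} is compositional: $\densem{\cdot\poi\cdot}{\mathcal M}$ is relational composition and $\densem{\cdot\tns\cdot}{\mathcal M}$ is cartesian product of relations. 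Hence each axiom of Figure~\ref{fig:axsmc} holds for $\equiv$ if and only if the corresponding identity holds in $\Rel$ for arbitrary relations in place of the constituent subterms.

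I would then dispatch the axioms in three groups. For $(i)$ and $(ii)$, this is the associativity of relational composition and the fact that the diagonal on $X^n$ (which is the denotation of the parallel identity $\idone^{\tns n}$) is a two-sided unit — immediate from the definitions. For $(iii)$ and $(iv)$, note that the strict tensor on objects corresponds, at the semantic level, to the canonical bijection $X^{n+p} \cong X^n \times X^p$ obtained by concatenating tuples; under this identification cartesian product of relations is strictly associative and the singleton $X^0 = \{\bullet\}$ is a strict unit, matching $\densem{\Zeronet}{\mathcal M} = \{(\bullet,\bullet)\}$. Axiom $(v)$ (middle-four interchange) is the familiar fact that for relations $R,S,R',S'$ of matching sorts,
\[
(R\tns S)\poi(R'\tns S') \;=\; (R\poi R') \tns (S\poi S'),
\]
verified by unfolding both sides into a pair of existential witnesses and observing that the witnesses on the two sides can be chosen independently in the left and right components.

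For $(vi)$–$(viii)$, the semantics of $\dsymNet$ is the swap relation $\sigma_{X,X} = \{((x,y),(y,x)) \mid x,y\in X\}$, and more generally the denotation of the symmetry on $m+n$ wires is the corresponding tuple-swap. Axioms $(vi)$ and $(vii)$ are naturality of $\sigma$ with respect to arbitrary relations, which is a direct set-theoretic calculation: both sides describe the relation obtained by applying the two subrelations ``in either order'' and then swapping output blocks. Axiom $(viii)$ is the involutivity $\sigma \poi \sigma = \id$, again immediate. There is no substantive obstacle here; the proposition is essentially the observation that the interpretation functor $\densem{-}{\mathcal M}$ factors through the strict symmetric monoidal category $\Rel$, so all SMC equations are inherited. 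The one point to handle with care is strictness: because tensor on objects is given by addition of arities and tuples concatenate associatively with $X^0$ as a genuine unit, the identities of Figure~\ref{fig:axsmc} hold on the nose rather than merely up to coherent isomorphism, matching the intended reading of the diagrams where dashed boxes may be erased.
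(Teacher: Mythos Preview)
Your proposal is correct and follows essentially the same approach as the paper: the paper's proof is the single sentence ``For each fixed model the axioms of Figure~\ref{fig:axsmc} are satisfied because the category of relations with monoidal product $\times$ is symmetric monoidal,'' and you have simply unpacked this by going through the axiom groups explicitly and noting the strictness coming from tuple concatenation on $X^n$.
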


The terms of GCQ up-to query equivalence, therefore, organise themselves as arrows of a \emph{monoidal category} (axioms $(i)$-$(v)$), and the operation of ``erasing all dotted boxes'' from diagrams is well-defined. The resulting structure is the well-known combinatorial/topological concept of \emph{string diagram}. Equality reduces to the connectivity of their components, and is thus stable under intuitive topological transformations, known as \emph{diagrammatic reasoning}. For instance, axioms (ii) and (v) in Figure~\ref{fig:axsmc} imply that for 
$c_1:\sort{m_1}{n_1}$ and $c_2:\sort{m_2}{n_2}$  
\[
\lower14pt\hbox{$\includegraphics[height=30pt]{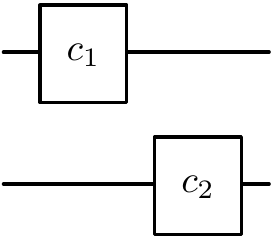}$} \quad \equiv \quad \lower14pt\hbox{$\includegraphics[height=30pt]{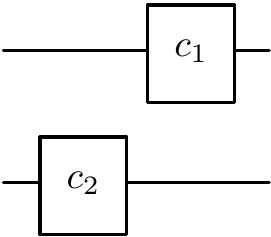}$}.
\]
Axioms $(vi)$-$(viii)$ assert that GCQ terms modulo $\equiv$ form a \emph{symmetric} monoidal category (SMC). Therein, \dsymNetL{n}{m}
 stands for the crossing of $n$ wires over $m$ wires. This has a standard recursive definition, using $\dsymNet$, $\idone$ and the operations of GCQ (see Appendix~\ref{app:sugar}). Intuitively, boxes ``slide over'' wire crossings. Moreover, it is well-known that 
$(vi)$ and $(vii)$ of Figure~\ref{fig:axsmc} imply the Yang-Baxter equation for crossings, which---with $(viii)$---implies that in diagrammatic reasoning wires do not ``tangle'' and crossings act like permutations of finite sets.

\section{Axiomatisation}
\label{sec:axioms}

We have seen that, up-to query equivalence, GCQ enjoys the structural properties of SMCs. Here we give further properties that \emph{characterise} query equivalence ($\equiv$) and inclusion ($\leqq$).

Our first observation is that $\Bmult$ and $\Bunit$ form, modulo $\equiv$, a \emph{commutative monoid}, i.e., they satisfy axioms $(A)$, $(C)$ and $(U)$ in Figure~\ref{fig:frobeniusBimonoid}. Similarly, $\Bcomult$ and $\Bcounit$ form a \emph{cocommutative comonoid} (axioms $(\op{A})$, $(\op{C})$ and $(\op{U})$). Monoid and comonoid together give rise to a \emph{special Frobenius bimonoid} (axioms $(S)$ and $(F)$), a well-known algebraic structure that is important in various domains \cite{Abramsky2008:CQM,PQP,Bruni2006,Bonchi2015}.

\begin{proposition}
$\equiv$ satisfies the axioms in Figure~\ref{fig:frobeniusBimonoid}.
\end{proposition}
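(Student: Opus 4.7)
The plan is to verify each of the axioms $(A)$, $(C)$, $(U)$, $(\op{A})$, $(\op{C})$, $(\op{U})$, $(S)$, $(F)$ directly from the definition of $\equiv$: since $c\equiv d$ iff $\densem{c}{\mathcal{M}}=\densem{d}{\mathcal{M}}$ for \emph{every} model $\mathcal{M}=(X,\rho)$, and since the generators $\Bmult,\Bunit,\Bcomult,\Bcounit$ are interpreted purely in terms of the set $X$ (independently of $\rho$), it suffices to compute both sides of each axiom as relations on powers of $X$ and observe pointwise set equality. Thus the entire proposition reduces to eight routine relational calculations using Figure~\ref{fig:gcqsemantics} together with the definitions of $\poi$ and $\tns$ on relations.

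First I would handle the monoid axioms for $(\Bmult,\Bunit)$. Unfolding the semantics, $\densem{\Bmult}{\mathcal{M}}$ is the graph of the codiagonal $\nabla_X\from X^2\to X$, and $\densem{\Bunit}{\mathcal{M}}=\{\bullet\}\times X$. Associativity $(A)$ then amounts to showing both sides equal the relation $\{((x,x,x),x)\mid x\in X\}\subseteq X^3\times X$, which is immediate since both composites force the three leftmost entries and the rightmost entry to coincide. Commutativity $(C)$ follows because the symmetry $\dsymNet$ is interpreted as the swap on $X^2$, and $\nabla_X$ is invariant under swap. The unit laws $(U)$ reduce to the identity $\{(x,x)\mid x\in X\}$ after cancelling $\bullet$. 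The dual axioms $(\op{A})$, $(\op{C})$, $(\op{U})$ are entirely analogous: $\densem{\Bcomult}{\mathcal{M}}$ is the graph of the diagonal $\Delta_X$, and the same calculations go through with the arrows reversed.

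Next I would verify $(F)$ and $(S)$. For the Frobenius law, the key observation is that each side of $(F)$ is a composite of diagonals and codiagonals whose evaluation collapses all participating wires onto a single element of $X$; concretely, both sides evaluate to $\{((x,x),(x,x))\mid x\in X\}\subseteq X^2\times X^2$, because in each composite the relational constraint imposed by $\Bmult$ and $\Bcomult$ is precisely ``equal to the value on the linking wire''. Specialness $(S)$ asks $\densem{\Bcomult\poi\Bmult}{\mathcal{M}}=\densem{\idone}{\mathcal{M}}$; composing $\Delta_X$ with $\nabla_X$ relationally yields $\{(x,x)\mid x\in X\}$, as required.

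The only real ``obstacle'' is bookkeeping: one must correctly track how composite relations on $X^n\times X^m$ factor through intermediate sets, and handle the inductive definitions of $\dsymNetL{n}{m}$ and parallel wires when these appear implicitly (e.g.\ inside $(F)$). I would organise the argument as one worked example presented in full --- the Frobenius equation $(F)$, which is the most representative --- and relegate the remaining seven routine verifications to the appendix, noting that each reduces to the same style of pointwise set manipulation.
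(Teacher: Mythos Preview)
Your proposal is correct. The paper does not give an explicit proof of this proposition; in the analogous case of Proposition~\ref{pro:smc} (the SMC axioms) the paper's entire argument is the one-liner ``for each fixed model the axioms are satisfied because the category of relations with monoidal product $\times$ is symmetric monoidal'', and the present proposition is evidently meant to be discharged in the same spirit, by appeal to the fact (recorded as Example~\ref{ex:Rel}) that $\Rel$ is a cartesian bicategory, so in particular each set $X$ carries a special Frobenius bimonoid given by the diagonal/codiagonal. Your explicit pointwise verification is exactly what lies underneath that appeal, so the approaches coincide; you are simply unpacking what the paper takes for granted.
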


Figure~\ref{fig:adjoint} shows a set of properties of query inclusion. The two axioms on the left state that $\Bcounit$ is the left adjoint of $\Bunit$ and the central axioms assert that $\Bcomult$ is the left adjoint of $\Bmult$.
For the rightmost ones, it is convenient to introduce some syntactic sugar: $\BcomultL{n}$, $\BmultL{n}$, $\BcounitL{n}$ and $\BunitL{n}$ stand for the $n$-fold versions of monoid and comonoid (see Appendix~\ref{app:sugar} for the definition). Now, axiom $(L_1)$ asserts that $\Rcircuit$ laxly commutes with $\BcounitL{m}$, while axiom $(L_2)$ states that it laxly commutes with $\BcomultL{m}$. In a nutshell, $\Rcircuit$ is required to be a \emph{lax comonoid morphism}. 
\begin{proposition}
$\leqq$ satisfies the axioms of Figure~\ref{fig:adjoint}.
\end{proposition}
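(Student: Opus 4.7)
The plan is to verify each of the axioms in Figure~\ref{fig:adjoint} directly at the semantic level: for every axiom $c \leqq d$ with $c,d \colon \sort{n}{m}$, I will unfold $\densem{c}{\mathcal{M}}$ and $\densem{d}{\mathcal{M}}$ using Figure~\ref{fig:gcqsemantics} and check the resulting inclusion of subsets of $X^n \times X^m$ for an arbitrary model $\mathcal{M}=(X,\rho)$. Because query inclusion is defined as semantic containment in every model (Definition~\ref{defn:equivalence}), this suffices.

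For the adjunction $\Bcounit \dashv \Bunit$, I would first observe that $\densem{\Bcounit \poi \Bunit}{\mathcal{M}} = X \times X$, which trivially contains $\densem{\idone}{\mathcal{M}} = \{(x,x) \mid x\in X\}$, giving the unit inequality. The counit inequality is more interesting: $\densem{\Bunit \poi \Bcounit}{\mathcal{M}}$ equals $\{(\bullet,\bullet)\}$ if $X$ is nonempty and $\emptyset$ if $X$ is empty, so in either case it is contained in $\densem{\Zeronet}{\mathcal{M}} = \{(\bullet,\bullet)\}$. For $\Bcomult \dashv \Bmult$ the unit is in fact an equality ($\densem{\Bcomult\poi\Bmult}{\mathcal{M}} = \{(x,x)\} = \densem{\idone}{\mathcal{M}}$), while the counit is the genuine inclusion $\{((x,x),(x,x)) \mid x\in X\} \subseteq \{((a,b),(a,b)) \mid a,b\in X\} = \densem{\idtwo}{\mathcal{M}}$, obtained by restricting to the diagonal.

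For the lax-comonoid-morphism axioms, I would first unpack the sugar $\BcounitL{n}$ and $\BcomultL{n}$ (Appendix~\ref{app:sugar}), whose semantics are respectively $X^n \times \{\bullet\}$ and $\{(\vec x,(\vec x,\vec x)) \mid \vec x \in X^n\}$. Then $L_1$ becomes $\{\vec x \in X^n \mid \exists \vec y \colon (\vec x,\vec y)\in\rho(R)\} \times \{\bullet\} \subseteq X^n \times \{\bullet\}$, which is immediate. Axiom $L_2$ unfolds to $\{(\vec x,(\vec y,\vec y)) \mid (\vec x,\vec y)\in\rho(R)\} \subseteq \{(\vec x,(\vec y_1,\vec y_2)) \mid (\vec x,\vec y_1),(\vec x,\vec y_2)\in\rho(R)\}$, which holds by specialising $\vec y_1=\vec y_2=\vec y$ on the right-hand side.

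The argument is essentially bookkeeping: the only mild obstacle is keeping the notation for the $n$-fold versions $\BmultL{n}$, $\BcomultL{n}$, $\BunitL{n}$, $\BcounitL{n}$ straight when computing the semantics, which is handled by a straightforward induction on $n$ using the compositional clauses for $\poi$ and $\tns$ in Figure~\ref{fig:gcqsemantics}. Since each individual inequality is checked pointwise in a model and each holds for every choice of $\mathcal{M}$, the proposition follows.
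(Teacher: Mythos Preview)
Your proposal is correct and is precisely the intended argument. The paper does not spell out a proof of this proposition (it is treated as routine soundness checking), and your direct verification of each axiom of Figure~\ref{fig:adjoint} by unfolding the semantics from Figure~\ref{fig:gcqsemantics} in an arbitrary model is exactly the expected approach.
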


\begin{figure*}[th]
\begin{tabular}{c|c|c}
$\begin{array}{ccc}
  \lower10pt\hbox{$\includegraphics[height=0.8cm,keepaspectratio]{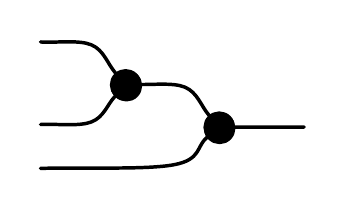}$} & \stackrel{(A)}{=} & \lower10pt\hbox{$\includegraphics[height=0.8cm,keepaspectratio]{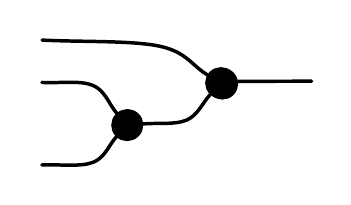}$}\\
  \lower7pt\hbox{$\includegraphics[height=0.6cm,keepaspectratio]{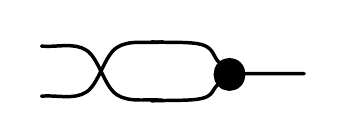}$} &  \stackrel{(C)}{=} & \lower10pt\hbox{$\includegraphics[height=0.8cm,keepaspectratio]{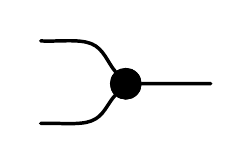}$}\\
\lower10pt\hbox{$\includegraphics[height=0.8cm,keepaspectratio]{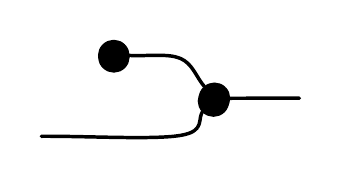}$} &  \stackrel{(U)}{=} & \lower3pt\hbox{$\includegraphics[width=1.0cm,keepaspectratio]{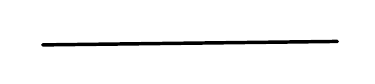}$}
\end{array}$
&
$\begin{array}{ccc}
  \lower10pt\hbox{$\includegraphics[height=0.8cm]{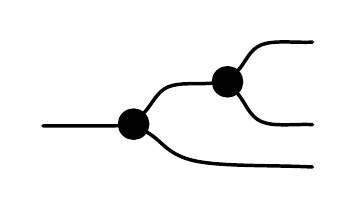}$} & \stackrel{(\op{A})}{=} & \lower10pt\hbox{$\includegraphics[height=0.8cm]{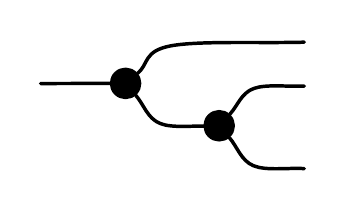}$}\\
  \lower7pt\hbox{$\includegraphics[height=0.6cm]{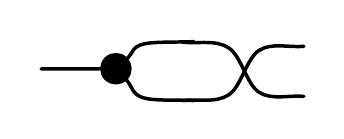}$} &  \stackrel{(\op{C})}{=} & \lower10pt\hbox{$\includegraphics[height=0.8cm]{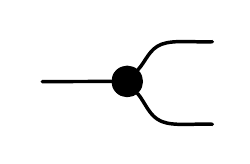}$}\\
  \lower10pt\hbox{\reflectbox{\rotatebox[origin=c]{180}{$\includegraphics[height=0.8cm]{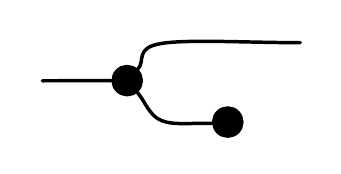}$}}} &  \stackrel{(\op{U})}{=} & \lower3pt\hbox{$\includegraphics[width=1.0cm]{graffles/id.pdf}$}
\end{array}$
&
$\begin{array}{ccc}
  \lower4pt\hbox{$\includegraphics[height=0.4cm]{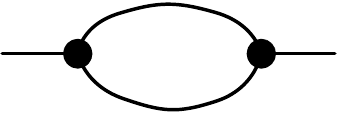}$} &  \stackrel{(S)}{=} & \lower3pt\hbox{$\includegraphics[width=1.0cm]{graffles/id.pdf}$}\\
\lower10pt\hbox{$\includegraphics[height=0.8cm]{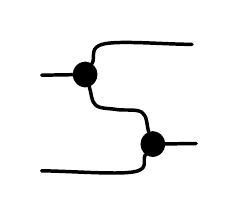}$} & \stackrel{(F)}{=} & \lower10pt\hbox{$\includegraphics[height=0.8cm]{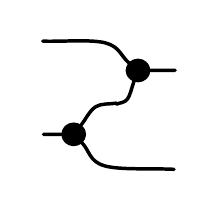}$}
\end{array}$
\end{tabular}
\caption{Axioms for special Frobenius bimonoids.}
\label{fig:frobeniusBimonoid}
\end{figure*}

\begin{figure*}[th]
\begin{tabular}{c|c|c}
$\begin{array}{ccc}
  \lower7pt\hbox{$\includegraphics[height=0.5cm,keepaspectratio]{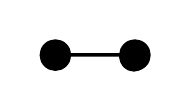}$} & \stackrel{(UC)}{\leq} & \lower10pt\hbox{$\includegraphics[height=1cm]{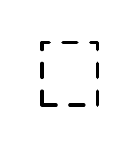}$}\\
  \lower2pt\hbox{$\includegraphics[width=1cm]{graffles/id.pdf}$} &
 \stackrel{(CU)}{\leq} & 
\lower7pt\hbox{$\includegraphics[height=0.5cm]{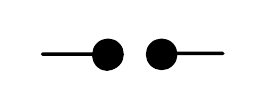}$}
\end{array}$
&
$\begin{array}{ccc}
  \lower10pt\hbox{$\includegraphics[height=0.8cm]{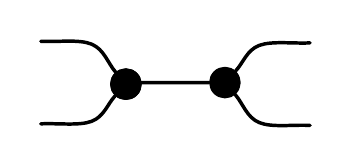}$} & \stackrel{(MC)}{\leq} & \lower6pt\hbox{$\includegraphics[width=1cm]{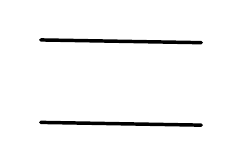}$}\\
\lower3pt\hbox{$\includegraphics[width=1cm]{graffles/id.pdf}$}
 &  \stackrel{(CM)}{\leq} &
\lower4pt\hbox{$\includegraphics[height=0.4cm,keepaspectratio]{graffles/special.pdf}$}
\end{array}$
&
$\begin{array}{ccc}
  \lower8pt\hbox{$\includegraphics[height=0.6cm]{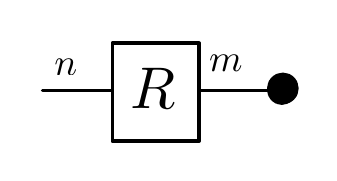}$} & \stackrel{(L_1)}{\leq} & \hspace*{-1.5cm}\lower4pt\hbox{$\includegraphics[height=0.5cm]{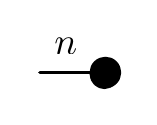}$}\\
\lower7pt\hbox{$\includegraphics[height=0.6cm]{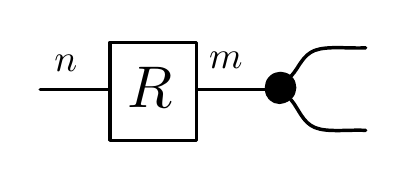}$} &  \stackrel{(L_2)}{\leq} & \lower13pt\hbox{$\includegraphics[height=1.1cm]{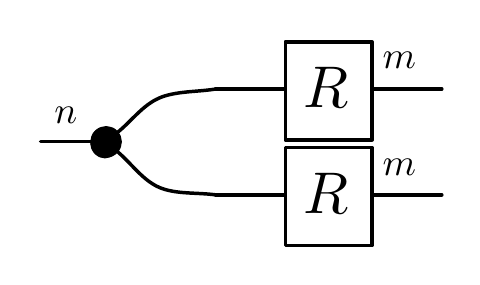}$}
\end{array}$
\end{tabular}
\caption[Axioms for adjointness and lax comonoid homomorphism]{Axioms for adjointness of $\Bcounit$ and $\Bunit$ (left) adjointness of $\Bcomult$ and $\Bmult$ (center) lax comonoid morphism (right).}
\label{fig:adjoint}
\end{figure*}

Interestingly, 
the observations we made so far suffice to characterise query equivalence and inclusion.
This is the main theorem which we will prove in the remainder of this paper.

\begin{definition}
  The relation $\synleq$ on the terms of GCQ is the smallest precongruence containing the equalities in Figures~\ref{fig:axsmc},~\ref{fig:frobeniusBimonoid}, their converses and the inequalities in Figure~\ref{fig:adjoint}. The relation $\syneq$ is the intersection of $\synleq$ and its converse.
\end{definition}

\begin{theorem}\label{thm:main}
  $ \synleq = \leqq$
\end{theorem}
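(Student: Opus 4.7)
The plan is to prove soundness ($\synleq \subseteq \leqq$) and completeness ($\leqq \subseteq \synleq$) separately, the former being essentially free from the preceding propositions, while the latter is obtained by a detour through the preorder-enriched setting, following the triangle sketched in the introduction.

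For \textbf{soundness}, I would simply package the propositions of Section~\ref{sec:axioms} stating that $\equiv$ satisfies the equations of Figures~\ref{fig:axsmc} and~\ref{fig:frobeniusBimonoid}, and that $\leqq$ satisfies the inequalities of Figure~\ref{fig:adjoint}, together with Lemma~\ref{lem:precongruence} saying that $\leqq$ is a precongruence. Since $\synleq$ is by definition the \emph{smallest} precongruence containing these (in)equalities, soundness is immediate.

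For \textbf{completeness}, the plan is to reduce to the preorder-enriched analogue (Theorem~\ref{thm:completeness}), which asserts completeness of $\preOrdSyntaxPROP$ with respect to $\Spanleq{\Set}$-valued models. First, I would observe that any GCQ term $c$ has an underlying term $\hat c$ in $\preOrdSyntaxPROP$ obtained by forgetting the extra posetal axioms, and that there is a quotient prop-functor $Q \colon \preOrdSyntaxPROP \to \gcqpobc$. The key step is then to show that $c \leqq d$ in $\gcqpobc$ (i.e.\ $\densem{c}{\mathcal M} \subseteq \densem{d}{\mathcal M}$ in every $\Rel$-model) implies $\hat c \leqq \hat d$ in $\preOrdSyntaxPROP$ (i.e.\ there is a span-morphism $\densem{\hat c}{\mathcal N} \to \densem{\hat d}{\mathcal N}$ in every $\Spanleq{\Set}$-model). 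This is the content of Section~\ref{sec:rel}: every $\Spanleq{\Set}$-model can be replaced, up to the relevant (in)equalities on hom-sets, by its jointly-monic image, which is a bona fide relation. Concretely, given a span-model $\mathcal N$, factoring each $\mathcal N(R)$ through its jointly-epi/jointly-mono factorisation produces a $\Rel$-model $\mathcal N^\sharp$ such that the existence of a span morphism $\densem{\hat c}{\mathcal N} \to \densem{\hat d}{\mathcal N}$ is equivalent to the inclusion $\densem{c}{\mathcal N^\sharp} \subseteq \densem{d}{\mathcal N^\sharp}$; conversely, any $\Rel$-model is trivially a $\Spanleq{\Set}$-model. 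Hence $c \leqq d$ in $\gcqpobc$ forces $\hat c \leqq \hat d$ in $\preOrdSyntaxPROP$, and by Theorem~\ref{thm:completeness} this gives $\hat c \synleq \hat d$ in $\preOrdSyntaxPROP$, which projects under $Q$ to $c \synleq d$ in $\gcqpobc$.

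The \textbf{main obstacle} is the descent step from preorders to posets: one must verify that the jointly-monic factorisation of a span-valued model really is functorial on the GCQ operators, i.e.\ it commutes (up to $\syneq$) with composition and monoidal product. This relies on the specific shape of GCQ terms---crucially, that the Frobenius and adjointness axioms let every term be presented as an existentially quantified conjunction whose semantics are images of products of basic relations, so that the jointly-monic factorisation is preserved by $\poi$ and $\tns$. Once this is established, the remaining ingredient---the preorder-enriched completeness Theorem~\ref{thm:completeness}---is proved by the Yoneda-style argument sketched in the introduction: via the isomorphism $\preOrdSyntaxPROP \cong \DiscCospan{\Hyp_\Sigma}$ of Theorem~\ref{thm:hypergraph}, an inequality $\hat c \leqq \hat d$ corresponds to a hypergraph homomorphism from the cospan presenting $\hat d$ to that presenting $\hat c$, which by represented-functor arguments is detected by the canonical $\Spanleq{\Set}$-model associated with the cospan of $\hat d$ itself.
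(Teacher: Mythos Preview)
Your overall strategy is the same as the paper's: soundness is immediate from the propositions of Section~\ref{sec:axioms} together with Lemma~\ref{lem:precongruence}; completeness is obtained by descending from the preorder-enriched completeness result (Theorem~\ref{thm:completeness}) via the passage $\Spanleq{\Set} \rightsquigarrow \Rel$. Your quotient functor $Q$ is the paper's observation that $\PosetCat{(\preOrdSyntaxPROP)} = \gcqpobc$, and your $\mathcal N \mapsto \mathcal N^\sharp$ construction is exactly the paper's $\PosetCat{(\cdot)}$ applied to a span-valued model (Definition~\ref{def:reduction}, Lemma~\ref{lem:transfer}, Proposition~\ref{pro:spantilde}).

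Where your proposal goes slightly astray is in the ``main obstacle'' paragraph. You locate the difficulty in the structure of GCQ terms (``the Frobenius and adjointness axioms let every term be presented as an existentially quantified conjunction\ldots''), but this is the wrong place to look. What you need is that the image construction $\Spanleq{\Set}\to\Rel$ is a \emph{functor} of (preordered) cartesian bicategories---that is, image commutes with composition by pullback and with monoidal product---and this is a general fact about sets and functions, entirely independent of GCQ syntax. The paper packages this as Proposition~\ref{pro:spantilde}: the poset-reduction of $\Spanleq{\Set}$ \emph{is} $\Rel$, proved via epi--mono factorisation of $[f,g]\colon A\to X\times Y$. Once that is in hand, your argument that ``span-morphism $\densem{\hat c}{\mathcal N}\to\densem{\hat d}{\mathcal N}$ exists iff $\densem{c}{\mathcal N^\sharp}\subseteq\densem{d}{\mathcal N^\sharp}$'' is correct for free, since $\mathcal N^\sharp$ is literally $\mathcal N$ post-composed with this image functor. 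So your plan works, but the obstacle is purely about $\Spanleq{\Set}$ and $\Rel$, not about normal forms of terms.
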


\begin{remark}
  There is an apparent redundancy in Figure~\ref{fig:adjoint}:
  $(CM)$ follows immediately from $(S)$ in Figure~\ref{fig:frobeniusBimonoid}, while
  $(S)$ can by derived from $(CU)$, $(\op{U})$ and $(U)$ for one inclusion and $(CM)$ for the other.
  We kept both $(CM)$ and $(S)$ because, as we shall see in \S\ref{sec:cospans}, it is important to keep the algebraic structures of Figures~\ref{fig:frobeniusBimonoid} and~\ref{fig:adjoint} separate. 
\end{remark}

\begin{example}\label{ex:derivation}
  Recall the example from the Introduction. We can now prove the inclusion of queries using diagrammatic reasoning, as shown below. In the unlabeled equality we make use of the well-known \emph{spider theorem}, which holds in every special Frobenius algebra~\cite{lack2004composing}.
\[
\raisebox{-6mm}{\scalebox{\exscale}{
\begin{tikzpicture}
\draw (0.75,2.0) .. controls (0.75,2.0) and (0.5,2.0) .. (0.0,2.0);
\draw (0.75,2.0) .. controls (0.75,1.0) and (1.0,1.0) .. (1.5,1.0);
\draw (0.75,2.0) .. controls (0.75,3.0) and (1.0,3.0) .. (1.5,3.0);
\draw (0.75,2.0) \blk {};
\draw (0.75,6.0) .. controls (0.75,6.0) and (0.5,6.0) .. (0.0,6.0);
\draw (0.75,6.0) .. controls (0.75,5.0) and (1.0,5.0) .. (1.5,5.0);
\draw (0.75,6.0) .. controls (0.75,7.0) and (1.0,7.0) .. (1.5,7.0);
\draw (0.75,6.0) \blk {};
\draw (2.25,1.0) .. controls (2.25,1.0) and (2.0,1.0) .. (1.5,1.0);
\draw (2.25,1.0) .. controls (2.25,1.0) and (2.5,1.0) .. (3.0,1.0);
\node[mat] at (2.25,1.0) {$R$};
\draw (2.25,3.0) .. controls (2.25,3.0) and (2.0,3.0) .. (1.5,3.0);
\draw (2.25,3.0) .. controls (2.25,3.0) and (2.5,3.0) .. (3.0,3.0);
\node[mat] at (2.25,3.0) {$R$};
\draw (2.25,5.0) .. controls (2.25,5.0) and (2.0,5.0) .. (1.5,5.0);
\draw (2.25,5.0) .. controls (2.25,5.0) and (2.5,5.0) .. (3.0,5.0);
\node[mat] at (2.25,5.0) {$R$};
\draw (2.25,7.0) .. controls (2.25,7.0) and (2.0,7.0) .. (1.5,7.0);
\draw (2.25,7.0) .. controls (2.25,7.0) and (2.5,7.0) .. (3.0,7.0);
\node[mat] at (2.25,7.0) {$R$};
\draw (3.0,1.0) .. controls (3.5,1.0) and (4.0,1.0) .. (4.5,1.0);
\draw (3.0,3.0) .. controls (3.5,3.0) and (4.0,5.0) .. (4.5,5.0);
\draw (3.0,5.0) .. controls (3.5,5.0) and (4.0,3.0) .. (4.5,3.0);
\draw (3.0,7.0) .. controls (3.5,7.0) and (4.0,7.0) .. (4.5,7.0);
\draw (5.25,2.0) .. controls (5.25,1.0) and (5.0,1.0) .. (4.5,1.0);
\draw (5.25,2.0) .. controls (5.25,3.0) and (5.0,3.0) .. (4.5,3.0);
\draw (5.25,2.0) .. controls (5.25,2.0) and (5.5,2.0) .. (6.0,2.0);
\draw (5.25,2.0) \blk {};
\draw (6.25,2.0) .. controls (6.25,2.0) and (6.5,2.0) .. (6.0,2.0);
\draw (6.25,2.0) \blk {};
\draw (5.25,6.0) .. controls (5.25,5.0) and (5.0,5.0) .. (4.5,5.0);
\draw (5.25,6.0) .. controls (5.25,7.0) and (5.0,7.0) .. (4.5,7.0);
\draw (5.25,6.0) .. controls (5.25,6.0) and (5.5,6.0) .. (6.0,6.0);
\draw (5.25,6.0) \blk {};
\draw (6.25,6.0) .. controls (6.25,6.0) and (6.5,6.0) .. (6.0,6.0);
\draw (6.25,6.0) \blk {};
\end{tikzpicture}}}
\overset{(L_2)}{\geq}
  \raisebox{-6mm}{\scalebox{\exscale}{
\begin{tikzpicture}
\draw (0.0,2.0) .. controls (0.5,2.0) and (-0.05,2.0) .. (0.45000002,2.0);
\draw (1.2,2.0) .. controls (1.2,2.0) and (0.95000005,2.0) .. (0.45000002,2.0);
\draw (1.2,2.0) .. controls (1.2,2.0) and (1.45,2.0) .. (1.95,2.0);
\node[mat] at (1.2,2.0) {$R$};
\draw (2.7,2.0) .. controls (2.7,2.0) and (2.45,2.0) .. (1.95,2.0);
\draw (2.7,2.0) .. controls (2.7,1.0) and (2.95,1.0) .. (3.45,1.0);
\draw (2.7,2.0) .. controls (2.7,3.0) and (2.95,3.0) .. (3.45,3.0);
\draw (2.7,2.0) \blk {};
\draw (0.0,6.0) .. controls (0.5,6.0) and (-0.05,6.0) .. (0.45000002,6.0);
\draw (1.2,6.0) .. controls (1.2,6.0) and (0.95000005,6.0) .. (0.45000002,6.0);
\draw (1.2,6.0) .. controls (1.2,6.0) and (1.45,6.0) .. (1.95,6.0);
\node[mat] at (1.2,6.0) {$R$};
\draw (2.7,6.0) .. controls (2.7,6.0) and (2.45,6.0) .. (1.95,6.0);
\draw (2.7,6.0) .. controls (2.7,5.0) and (2.95,5.0) .. (3.45,5.0);
\draw (2.7,6.0) .. controls (2.7,7.0) and (2.95,7.0) .. (3.45,7.0);
\draw (2.7,6.0) \blk {};
\draw (3.45,1.0) .. controls (3.95,1.0) and (4.45,1.0) .. (4.95,1.0);
\draw (3.45,3.0) .. controls (3.95,3.0) and (4.45,5.0) .. (4.95,5.0);
\draw (3.45,5.0) .. controls (3.95,5.0) and (4.45,3.0) .. (4.95,3.0);
\draw (3.45,7.0) .. controls (3.95,7.0) and (4.45,7.0) .. (4.95,7.0);
\draw (5.7,2.0) .. controls (5.7,1.0) and (5.45,1.0) .. (4.95,1.0);
\draw (5.7,2.0) .. controls (5.7,3.0) and (5.45,3.0) .. (4.95,3.0);
\draw (5.7,2.0) .. controls (5.7,2.0) and (5.95,2.0) .. (6.45,2.0);
\draw (5.7,2.0) \blk {};
\draw (6.7,2.0) .. controls (6.7,2.0) and (6.95,2.0) .. (6.45,2.0);
\draw (6.7,2.0) \blk {};
\draw (5.7,6.0) .. controls (5.7,5.0) and (5.45,5.0) .. (4.95,5.0);
\draw (5.7,6.0) .. controls (5.7,7.0) and (5.45,7.0) .. (4.95,7.0);
\draw (5.7,6.0) .. controls (5.7,6.0) and (5.95,6.0) .. (6.45,6.0);
\draw (5.7,6.0) \blk {};
\draw (6.7,6.0) .. controls (6.7,6.0) and (6.95,6.0) .. (6.45,6.0);
\draw (6.7,6.0) \blk {};
\end{tikzpicture}}} =
\raisebox{-2.5mm}{\scalebox{\exscale}{
\begin{tikzpicture}
\draw (0.0,1.0) .. controls (0.5,1.0) and (-0.05,1.0) .. (0.45000002,1.0);
\draw (1.2,1.0) .. controls (1.2,1.0) and (0.95000005,1.0) .. (0.45000002,1.0);
\draw (1.2,1.0) .. controls (1.2,1.0) and (1.45,1.0) .. (1.95,1.0);
\node[mat] at (1.2,1.0) {$R$};
\draw (0.0,3.0) .. controls (0.5,3.0) and (-0.05,3.0) .. (0.45000002,3.0);
\draw (1.2,3.0) .. controls (1.2,3.0) and (0.95000005,3.0) .. (0.45000002,3.0);
\draw (1.2,3.0) .. controls (1.2,3.0) and (1.45,3.0) .. (1.95,3.0);
\node[mat] at (1.2,3.0) {$R$};
\draw (2.7,2.0) .. controls (2.7,1.0) and (2.45,1.0) .. (1.95,1.0);
\draw (2.7,2.0) .. controls (2.7,3.0) and (2.45,3.0) .. (1.95,3.0);
\draw (2.7,2.0) .. controls (2.7,2.0) and (2.95,2.0) .. (3.45,2.0);
\draw (2.7,2.0) \blk {};
\draw (4.2,2.0) .. controls (4.2,2.0) and (3.95,2.0) .. (3.45,2.0);
\draw (4.2,2.0) .. controls (4.2,1.0) and (4.45,1.0) .. (4.95,1.0);
\draw (4.2,2.0) .. controls (4.2,3.0) and (4.45,3.0) .. (4.95,3.0);
\draw (4.2,2.0) \blk {};
\draw (5.2,1.0) .. controls (5.2,1.0) and (5.45,1.0) .. (4.95,1.0);
\draw (5.2,1.0) \blk {};
\draw (5.2,3.0) .. controls (5.2,3.0) and (5.45,3.0) .. (4.95,3.0);
\draw (5.2,3.0) \blk {};
\end{tikzpicture}}}
  \overset{(MC)}{\geq}
\raisebox{-2.5mm}{\scalebox{\exscale}{
\begin{tikzpicture}
\draw (0.75,2.0) .. controls (0.75,1.0) and (0.5,1.0) .. (0.0,1.0);
\draw (0.75,2.0) .. controls (0.75,3.0) and (0.5,3.0) .. (0.0,3.0);
\draw (0.75,2.0) .. controls (0.75,2.0) and (1.0,2.0) .. (1.5,2.0);
\draw (0.75,2.0) \blk {};
\draw (2.25,2.0) .. controls (2.25,2.0) and (2.0,2.0) .. (1.5,2.0);
\draw (2.25,2.0) .. controls (2.25,1.0) and (2.5,1.0) .. (3.0,1.0);
\draw (2.25,2.0) .. controls (2.25,3.0) and (2.5,3.0) .. (3.0,3.0);
\draw (2.25,2.0) \blk {};
\draw (3.75,1.0) .. controls (3.75,1.0) and (3.5,1.0) .. (3.0,1.0);
\draw (3.75,1.0) .. controls (3.75,1.0) and (4.0,1.0) .. (4.5,1.0);
\node[mat] at (3.75,1.0) {$R$};
\draw (3.75,3.0) .. controls (3.75,3.0) and (3.5,3.0) .. (3.0,3.0);
\draw (3.75,3.0) .. controls (3.75,3.0) and (4.0,3.0) .. (4.5,3.0);
\node[mat] at (3.75,3.0) {$R$};
\draw (5.25,2.0) .. controls (5.25,1.0) and (5.0,1.0) .. (4.5,1.0);
\draw (5.25,2.0) .. controls (5.25,3.0) and (5.0,3.0) .. (4.5,3.0);
\draw (5.25,2.0) .. controls (5.25,2.0) and (5.5,2.0) .. (6.0,2.0);
\draw (5.25,2.0) \blk {};
\draw (6.75,2.0) .. controls (6.75,2.0) and (6.5,2.0) .. (6.0,2.0);
\draw (6.75,2.0) .. controls (6.75,1.0) and (7.0,1.0) .. (7.5,1.0);
\draw (6.75,2.0) .. controls (6.75,3.0) and (7.0,3.0) .. (7.5,3.0);
\draw (6.75,2.0) \blk {};
\draw (7.75,1.0) .. controls (7.75,1.0) and (8.0,1.0) .. (7.5,1.0);
\draw (7.75,1.0) \blk {};
\draw (7.75,3.0) .. controls (7.75,3.0) and (8.0,3.0) .. (7.5,3.0);
\draw (7.75,3.0) \blk {};
\end{tikzpicture}}}
\]
\[
\overset{(L_2)}{\geq}
\raisebox{-2.5mm}{\scalebox{\exscale}{
\begin{tikzpicture}
\draw (0.75,2.0) .. controls (0.75,1.0) and (0.5,1.0) .. (0.0,1.0);
\draw (0.75,2.0) .. controls (0.75,3.0) and (0.5,3.0) .. (0.0,3.0);
\draw (0.75,2.0) .. controls (0.75,2.0) and (1.0,2.0) .. (1.5,2.0);
\draw (0.75,2.0) \blk {};
\draw (2.25,2.0) .. controls (2.25,2.0) and (2.0,2.0) .. (1.5,2.0);
\draw (2.25,2.0) .. controls (2.25,2.0) and (2.5,2.0) .. (3.0,2.0);
\node[mat] at (2.25,2.0) {$R$};
\draw (3.75,2.0) .. controls (3.75,2.0) and (3.5,2.0) .. (3.0,2.0);
\draw (3.75,2.0) .. controls (3.75,1.0) and (4.0,1.0) .. (4.5,1.0);
\draw (3.75,2.0) .. controls (3.75,3.0) and (4.0,3.0) .. (4.5,3.0);
\draw (3.75,2.0) \blk {};
\draw (5.25,2.0) .. controls (5.25,1.0) and (5.0,1.0) .. (4.5,1.0);
\draw (5.25,2.0) .. controls (5.25,3.0) and (5.0,3.0) .. (4.5,3.0);
\draw (5.25,2.0) .. controls (5.25,2.0) and (5.5,2.0) .. (6.0,2.0);
\draw (5.25,2.0) \blk {};
\draw (6.75,2.0) .. controls (6.75,2.0) and (6.5,2.0) .. (6.0,2.0);
\draw (6.75,2.0) .. controls (6.75,1.0) and (7.0,1.0) .. (7.5,1.0);
\draw (6.75,2.0) .. controls (6.75,3.0) and (7.0,3.0) .. (7.5,3.0);
\draw (6.75,2.0) \blk {};
\draw (7.75,1.0) .. controls (7.75,1.0) and (8.0,1.0) .. (7.5,1.0);
\draw (7.75,1.0) \blk {};
\draw (7.75,3.0) .. controls (7.75,3.0) and (8.0,3.0) .. (7.5,3.0);
\draw (7.75,3.0) \blk {};
\end{tikzpicture}}} \overset{(S)}{=}
\raisebox{-2.5mm}{\scalebox{\exscale}{
\begin{tikzpicture}
\draw (0.75,2.0) .. controls (0.75,1.0) and (0.5,1.0) .. (0.0,1.0);
\draw (0.75,2.0) .. controls (0.75,3.0) and (0.5,3.0) .. (0.0,3.0);
\draw (0.75,2.0) .. controls (0.75,2.0) and (1.0,2.0) .. (1.5,2.0);
\draw (0.75,2.0) \blk {};
\draw (2.25,2.0) .. controls (2.25,2.0) and (2.0,2.0) .. (1.5,2.0);
\draw (2.25,2.0) .. controls (2.25,2.0) and (2.5,2.0) .. (3.0,2.0);
\node[mat] at (2.25,2.0) {$R$};
\draw (3.75,2.0) .. controls (3.75,2.0) and (3.5,2.0) .. (3.0,2.0);
\draw (3.75,2.0) .. controls (3.75,1.0) and (4.0,1.0) .. (4.5,1.0);
\draw (3.75,2.0) .. controls (3.75,3.0) and (4.0,3.0) .. (4.5,3.0);
\draw (3.75,2.0) \blk {};
\draw (4.75,1.0) .. controls (4.75,1.0) and (5.0,1.0) .. (4.5,1.0);
\draw (4.75,1.0) \blk {};
\draw (4.75,3.0) .. controls (4.75,3.0) and (5.0,3.0) .. (4.5,3.0);
\draw (4.75,3.0) \blk {};
\end{tikzpicture}}} \overset{(\op{U})}{=}
\raisebox{-2.5mm}{\scalebox{\exscale}{
\begin{tikzpicture}
\draw (0.75,2.0) .. controls (0.75,1.0) and (0.5,1.0) .. (0.0,1.0);
\draw (0.75,2.0) .. controls (0.75,3.0) and (0.5,3.0) .. (0.0,3.0);
\draw (0.75,2.0) .. controls (0.75,2.0) and (1.0,2.0) .. (1.5,2.0);
\draw (0.75,2.0) \blk {};
\draw (2.25,2.0) .. controls (2.25,2.0) and (2.0,2.0) .. (1.5,2.0);
\draw (2.25,2.0) .. controls (2.25,2.0) and (2.5,2.0) .. (3.0,2.0);
\node[mat] at (2.25,2.0) {$R$};
\draw (3.25,2.0) .. controls (3.25,2.0) and (3.5,2.0) .. (3.0,2.0);
\draw (3.25,2.0) \blk {};
\end{tikzpicture}}}
\]
\end{example}

\subsection{Cartesian bicategories}
\label{subsec:cartbicat}

The structure in Figures~\ref{fig:frobeniusBimonoid} and~\ref{fig:adjoint} is not arbitrary: these are exactly the laws of \emph{cartesian bicategories},
a concept introduced by Carboni and Walters~\cite{carboni1987cartesian}, that we recall below.

\begin{definition}\label{def:cartbicat}
  A cartesian bicategory is a symmetric monoidal category $\mathcal{B}$ with tensor $\tns$ and unit $I$, enriched over the category of partially ordered sets, such that:
  \begin{enumerate}
    \item every object $X$ has a special Frobenius bimonoid: a monoid $\BmultX \from X \tns X \to X$, $\BunitX \from I \to X$, a comonoid 
      $\BcomultX \from X \to X \tns X$, $\BcounitX \from X \to I$ satisfying the axioms in Figure~\ref{fig:frobeniusBimonoid};
    \item the monoid and comonoid on $X$ are adjoint (axioms in Figure~\ref{fig:adjoint}, left and center);
    \item every arrow $R \from X \to Y$ is a lax comonoid morphism (axioms in Figure~\ref{fig:adjoint}, right).
  \label{def:bicategoryOfRel}
  \end{enumerate}
  Furthermore, a morphism $\mathcal{F}$ of cartesian bicategories is a functor $\mathcal{F} \from \mathcal{B}_1 \to \mathcal{B}_2$ preserving
  the tensor, the partial orders and the monoid and comonoid on every object.
\end{definition}

\begin{example} \label{ex:Rel}
  The archetypal cartesian bicategory is the 
  category of sets and relations $\Rel$, with cartesian product $\times$ as tensor and $1=\{\bullet\}$ as unit. 
  To be precise, $\Rel$ has sets as objects and relations $R \subseteq X\times Y$ as arrows $X \to Y$. Composition and tensor are defined as in Figure~\ref{fig:gcqsemantics}.
 For each set $X$, the monoid and comonoid
 structure is:
 \resizebox{0.88\hsize}{!}{
 \[
   \BcomultX = \left\{ \left(x, \left( \begin{smallmatrix} x \\ x \end{smallmatrix} \right) \right) \mid x \in X \right\},
   \BcounitX = \left\{ \left(x, \bullet \right) \mid x \in X \right\},
   \BmultX = \left\{ \left( \left( \begin{smallmatrix} x \\ x \end{smallmatrix} \right) , x \right) \mid x \in X \right\},
  \BunitX = \left\{ \left(\bullet, x \right) \mid x \in X \right\}\text{.} \\
 \]}
\end{example}

Cartesian bicategories allow us to employ the usual construction from categorical logic:
the arrows of the cartesian bicategory freely generated from $\Sigma$ are GCQ terms modulo $\syneq$, and morphisms from this cartesian bicategory to $\Rel$ are exactly GCQ models.

\begin{definition} \label{def:gcqpobc}
  The ordered prop $\gcqpobc$ has GCQ terms of sort $\sort{n}{m}$ modulo $\syneq$ as arrows $n \to m$. These are ordered by $\synleq$.
\end{definition}
\begin{lemma}
  $\gcqpobc$ is a cartesian bicategory.
  \label{lem:gcqpobcbicat}
\end{lemma}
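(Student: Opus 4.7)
The plan is to verify the three clauses of Definition~\ref{def:cartbicat} one by one, exploiting the fact that the axioms required are, essentially by construction, satisfied on the generating object $1$.

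First, $\gcqpobc$ is a prop: Proposition~\ref{pro:smc} and the fact that $\synleq$ contains the equalities of Figure~\ref{fig:axsmc} (and their converses) make the quotient by $\syneq$ into a strict symmetric monoidal category with objects the natural numbers and $n \tns m \df n+m$. It is poset-enriched: $\synleq$ is reflexive and transitive by definition of precongruence, and antisymmetric modulo $\syneq$ by the very definition of $\syneq$ as the intersection of $\synleq$ with its converse. Monotonicity of $\poi$ and $\tns$ with respect to $\synleq$ is exactly the precongruence property of $\synleq$, i.e.\ Lemma~\ref{lem:precongruence} at the syntactic level. Thus $\gcqpobc$ is a poset-enriched symmetric strict monoidal category.

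Next, every object $n$ must carry a special Frobenius bimonoid. For $n=1$, take $\BmultX, \BunitX, \BcomultX, \BcounitX$ to be the generators $\Bmult, \Bunit, \Bcomult, \Bcounit$ of GCQ; the axioms of Figure~\ref{fig:frobeniusBimonoid} hold in $\gcqpobc$ by definition of $\syneq$. For general $n$, define the structure via the $n$-fold versions $\BmultL{n}, \BunitL{n}, \BcomultL{n}, \BcounitL{n}$ from Appendix~\ref{app:sugar}; equivalently, one builds the structure on $n+m$ by tensoring the structures on $n$ and $m$ and pre/post-composing with appropriate symmetries. The verification that these derived operations again satisfy the axioms of Figure~\ref{fig:frobeniusBimonoid} is by induction on $n$, using the axioms on the generator together with the symmetric monoidal structure (in particular the naturality of symmetries obtained from the Yang--Baxter equation discussed after Proposition~\ref{pro:smc}). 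This is a well-known fact about propagating a Frobenius algebra on a single object through a symmetric monoidal category, and the diagrammatic proofs all proceed by standard string-diagrammatic manipulation.

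For clause (2), the adjointness inequalities $(UC), (CU), (MC), (CM)$ on object $1$ hold by definition of $\synleq$ (they are the left and centre columns of Figure~\ref{fig:adjoint}). Again one lifts them to arbitrary $n$ by induction on $n$: the key step is that composing tensors of inequalities yields inequalities (using monotonicity of $\tns$ and $\poi$ from clause (1) above), so e.g.\ $n$ copies of $(MC)$ tensored together yield the inequality $\BcomultL{n}\poi\BmultL{n} \synleq \idtwoL{n}{n}$, and similarly for the other three. For clause (3), every arrow $R\from n\to m$ of $\gcqpobc$ is the equivalence class of a GCQ term. The lax comonoid morphism inequalities $(L_1)$ and $(L_2)$ are required to hold for each such arrow. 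For the generators of $\Sigma$ they hold by the very definition of $\synleq$. For the remaining generators ($\Bmult,\Bunit,\Bcomult,\Bcounit,\dsymNet,\idone$) and for composites/tensors built using $\poi$ and $\tns$, one derives $(L_1)$ and $(L_2)$ by structural induction on terms, using Frobenius, adjointness and monotonicity.

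The only genuinely non-trivial step is the last one: extending $(L_1), (L_2)$ from $\Sigma$-generators to arbitrary arrows. The inductive cases for composition and tensoring reduce to pasting lax squares, which works because the comonoid counit and comultiplication on $n+m$ decompose as tensors of those on $n$ and $m$; the base cases for the Frobenius generators themselves are direct diagrammatic calculations from the axioms in Figure~\ref{fig:frobeniusBimonoid} together with adjointness. Putting these together, all three clauses of Definition~\ref{def:cartbicat} hold, so $\gcqpobc$ is a cartesian bicategory.
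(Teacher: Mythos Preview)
Your proof is correct and follows essentially the same approach as the paper: define the (co)monoid structure on each object $n$ via the $n$-fold syntactic sugar of Appendix~\ref{app:sugar}, verify the Frobenius and adjointness laws by induction on $n$, and then observe that while the axioms only impose $(L_1),(L_2)$ for generators $R\in\Sigma$, the lax comonoid morphism property extends to all arrows by a separate structural induction on terms. Your account is considerably more detailed than the paper's (which dispatches each step in a sentence), but the skeleton is identical. One small quibble: your reference to Lemma~\ref{lem:precongruence} is slightly off, since that lemma concerns the semantic preorder $\leqq$; the fact that $\synleq$ is a precongruence is immediate from its definition as the \emph{smallest} precongruence containing the given (in)equalities.
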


\begin{proposition}\label{lem:ModelsAreFunctors}\label{pro:ModelsAreFunctors}
  Models of GCQ  (Definition~\ref{defn:GCQmodel})  are in bijective correspondence with morphisms of cartesian bicategories $\gcqpobc \to \Rel$.
\end{proposition}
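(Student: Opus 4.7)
The plan is to exhibit a bijection explicitly in both directions and then verify mutual inverseness. Given a cartesian bicategory morphism $\mathcal{F}\from \gcqpobc \to \Rel$, I set $X \df \mathcal{F}(1)$. Since $\mathcal{F}$ preserves the tensor (which is $+$ on objects of $\gcqpobc$ and $\times$ in $\Rel$), strict preservation forces $\mathcal{F}(n) = X^n$. Because $\mathcal{F}$ preserves the monoid and comonoid on every object, the images of $\Bcomult$, $\Bcounit$, $\Bmult$ and $\Bunit$ are forced to be the canonical Frobenius structure on $X$ in $\Rel$ recalled in Example~\ref{ex:Rel}; functoriality and symmetric monoidal coherence further fix the images of $\idone$, $\dsymNet$ and $\Zeronet$. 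Hence $\mathcal{F}$ is entirely determined by $X$ and its values on relation symbols $R\in \Sigma_{n,m}$, and I obtain a model $\mathcal{M}_\mathcal{F} \df (X,\rho)$ by setting $\rho(R) \df \mathcal{F}(R) \subseteq X^n\times X^m$.

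Conversely, given a model $\mathcal{M}=(X,\rho)$, I define $\mathcal{F}_\mathcal{M}$ on objects by $\mathcal{F}_\mathcal{M}(n) = X^n$ and on the generators listed in Figure~\ref{fig:sortInferenceRules} by sending $\Bcomult, \Bcounit, \Bmult, \Bunit$ to the canonical structure on $X$ in $\Rel$, $\idone$, $\dsymNet$ and $\Zeronet$ to their evident counterparts, and each $R\in \Sigma_{n,m}$ to $\rho(R)$; then extend compositionally using $\poi$ and $\tns$ of $\Rel$. Unfolding definitions shows $\mathcal{F}_\mathcal{M}(c) = \densem{c}{\mathcal{M}}$ on representatives, which is precisely the clause-by-clause semantics of Figure~\ref{fig:gcqsemantics}.

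The main step is to verify that this assignment descends to the quotient, i.e.\ that $c \syneq d$ implies $\mathcal{F}_\mathcal{M}(c) = \mathcal{F}_\mathcal{M}(d)$ and $c \synleq d$ implies $\mathcal{F}_\mathcal{M}(c) \subseteq \mathcal{F}_\mathcal{M}(d)$. Since $\synleq$ is the smallest precongruence and $\syneq$ the smallest congruence generated by the cartesian bicategory axioms, and since $\Rel$ is itself a cartesian bicategory (Example~\ref{ex:Rel}), this soundness is a routine induction on derivations using the axioms of Figures~\ref{fig:axsmc},~\ref{fig:frobeniusBimonoid} and~\ref{fig:adjoint}; preservation of tensor, orders, and the monoid/comonoid structure on every object are immediate from the definition. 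Mutual inverseness is then straightforward: starting from $\mathcal{M}=(X,\rho)$ I recover $\mathcal{F}_\mathcal{M}(1) = X$ and $\mathcal{F}_\mathcal{M}(R) = \rho(R)$, so $\mathcal{M}_{\mathcal{F}_\mathcal{M}} = \mathcal{M}$; starting from $\mathcal{F}$, the uniqueness observation of the first paragraph yields $\mathcal{F}_{\mathcal{M}_\mathcal{F}} = \mathcal{F}$. The only mildly delicate point is the soundness verification, but this reduces to checking that each generating (in)equality holds in $\Rel$, which is precisely the content of Example~\ref{ex:Rel}.
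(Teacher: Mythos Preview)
Your proposal is correct and follows essentially the same approach as the paper: extract a model from a morphism by setting $X=\mathcal{F}(1)$ and $\rho(R)=\mathcal{F}(R)$, and in the reverse direction observe that the semantics map $\densem{\cdot}{\mathcal{M}}$ of Figure~\ref{fig:gcqsemantics} yields a morphism of cartesian bicategories, well-definedness following from soundness of the axioms in $\Rel$. Your write-up is in fact more explicit than the paper's sketch, since you spell out why the values on the structural generators are forced and check mutual inverseness directly.
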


\newcommand{\catC}{\mathcal{C}}
\newcommand{\syntaxPROP}{\gcqprop}

\newcommand{\ra}{\rightarrow}
\newcommand{\la}{\leftarrow}

\newcommand{\node}{\lower0pt\hbox{$\includegraphics[width=6pt]{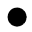}$}}
\newcommand{\hyperedge}{\lower5pt\hbox{$\includegraphics[width=30pt]{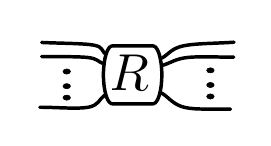}$}}
\newcommand{\SynToCsp}[1]{\ensuremath{\lfloor\!\lfloor{#1}\rfloor\!\rfloor}}
\newcommand{\CspToSyn}[1]{\ensuremath{\lceil\!\lceil{#1}\rceil\!\rceil}}

\newcommand{\cgr}[2][scale=0.45]{\raisebox{0.1em}{\begingroup
\setbox0=\hbox{\includegraphics[#1]{#2}}
\parbox{\wd0}{\box0}\endgroup}}

\section{Discrete cospans of hypergraphs}\label{sec:cospans}

In order to prove Theorem~\ref{thm:main}, in this section we give a combinatorial characterisation of free cartesian bicategories as  hypergraphs-with-interfaces, formalised as a (bi)category of \emph{cospans}
equipped with an ordering inspired by Merlin and Chandra~\cite{chandra1977optimal}.

Indeed, the appearance of graph-like structures in the context of conjunctive queries should not come as a shock. Merlin and Chandra, to compute inclusion  $\varphi\leqq \psi$ of CCQ queries, translate them into hypergraphs $G_\varphi, G_\psi$ with ``interfaces'' that represent free variables. Then $\varphi \leqq \psi$ iff there exists an interface-preserving homomorphism from $G_\psi$ to $G_\phi$. 

\subsection{Hypergraphs and Cospans}

Our goal in this part is the characterisation of GCQ diagrams as certain combinatorial structures. We start by introducing the notion of $\Sigma$-hypergraph.

\begin{definition}[$\Sigma$-hypergraph]\label{def:hyp}
Let $\Sigma$ be a monoidal signature.
A $\Sigma$-hypergraph $G$ is a set $G_V$ of vertices and, for each $R \in \Sigma_{n,m}$, a set of $R$-labeled hyperedges $G_{R}$, with source and target functions $s_R\from G_R \to (G_V)^n, t_R \from G_R \to (G_V)^m$.
A morphism $f\from G \to G'$ is a function $f_V\from G_V \to G'_{V}$ and a family $f_R \from G_{R} \to G'_{R}$, for each $R \in \Sigma_{n,m}$,  s.t.\ the following commutes.
 $$\xymatrix@R=11pt@C=10pt{
   (G_V)^n\ar[d]_(0.4){f_V} & E_R \ar[l]_(0.35){s_R} \ar[r]^(0.35){t_R} \ar[d]_(0.4){f_{R}} & (G_V)^m \ar[d]^(0.4){f_V} \\
   (G'_V)^n & E_R'  \ar[l]^(0.35){s'_R} \ar[r]_(0.35){t'_R} & (G'_V)^m
}$$
A $\Sigma$-hypergraph $G$ is finite if $G_V$ and $G_{R}$ are finite.
$\Sigma$-hypergraphs and morphisms form the category $\iHyp_\Sigma$. Its full subcategory of finite $\Sigma$-hypergraphs is denoted by $\Hyp_{\Sigma}$.
\end{definition}

We visualise hypergraphs as follows: $\node$ is a vertex and $\hyperedge$ is a hyperedge with ordered tentacles. An example is shown below left, where $S\in\Sigma_{1,0}$ and $R \in \Sigma_{2,1}$.
\begin{equation}
\lower12pt\hbox{$\includegraphics[height=1.3cm]{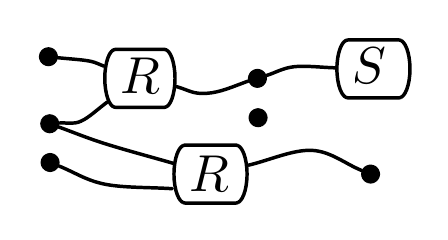}$} \quad \quad \quad \quad \lower12pt\hbox{$\includegraphics[height=1.5cm]{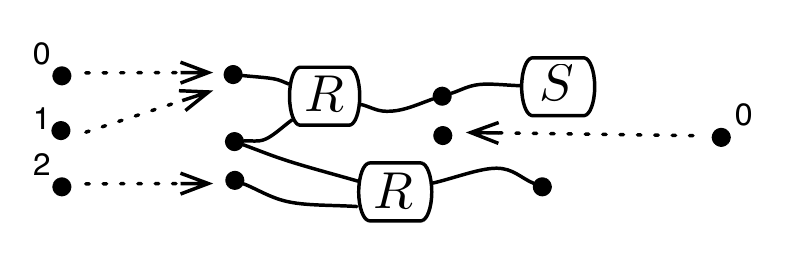}$}
  \label{eq:hypgraph}
\end{equation}

In order to capture GCQ diagrams, we need to equip hypergraphs with interfaces, as illustrated in~\eqref{eq:hypgraph} on the right.
Roughly speaking, an interface consists of two sets, called the left boundary and the right boundary. Each has an associated function to the underlying set of hypergraph vertices, depicted by the dotted arrows.
Graphical structures with interfaces are common in computer science, (e.g., in automata theory~\cite{glushkov1961abstract}, graph rewriting~\cite{HandbookDPO}, Petri nets~\cite{sassone2005congruence}). Categorically speaking, they are (discrete) cospans. 

\begin{definition}[$\mathsf{Cospan}$]
  \label{defn:cospan}
  Let $\catC$ be a finitely cocomplete category. A \emph{cospan} from $X$ to $Y$ is a
\begin{minipage}{0.78\textwidth}
pair of arrows $X \to A \leftarrow Y$ in $\catC$.
 A morphism $\alpha \from (X \to A \leftarrow Y) \Rightarrow (X \to B \leftarrow Y)$ is an arrow $\alpha \from A \to B$ in $\catC$ s.t. the diagram on the right commutes.
Cospans $X \to A \leftarrow Y$ and $X \to B \leftarrow Y$ are \emph{isomorphic} if 
\end{minipage}
\begin{minipage}{0.22\textwidth}
  \vspace{-15pt}
\begin{equation}\label{eq:preorder}
\raise14pt\hbox{$
\xymatrix@R=2pt@C=10pt{
      & A \ar[dd]^{\alpha} & \\
      X \ar@/_/[dr] \ar@/^/[ur] & & Y \ar@/_/[ul] \ar@/^/[dl] \\
      & B & \\
}$}
\end{equation}
\end{minipage}
there exists an isomorphism $A\to B$.
For $X \in \catC$, the \emph{identity cospan} is $X\xrightarrow{\id_X} X \xleftarrow{\id_X} X$.
The composition of $X\to  A \xleftarrow{f} Y$
and $Y\xrightarrow{g} B \leftarrow Z$ is $X\to A+_{f,g}B \leftarrow Z$, obtained by taking the pushout of $f$ and $g$. This data is the bicategory~\cite{benabou1967introduction}  $\Cospan{\catC}$: the objects are those of $\catC$, the arrows are cospans and 2-cells are homomorphisms. Finally, $\Cospan{\catC}$ has monoidal product given by the coproduct in $\catC$, with unit the initial object $0\in \catC$.
\end{definition}

To avoid the complications of non-associative composition, it is common to consider a \emph{category} of cospans, where isomorphic cospans are equated: let therefore $\Cospanleq{\catC}$ be the monoidal category that has isomorphism classes of cospans as arrows. Note that, when going from bicategory to category, after identifying isomorphic arrows it is usual to simply discard the 2-cells. Differently, we consider $\Cospanleq{\catC}$ to be locally preordered with $(X \to A \leftarrow Y) \leq (X \to B \leftarrow Y)$ if there exists a morphism $\alpha$ going \emph{the other way}: $\alpha \from (X \to B \leftarrow Y) \Rightarrow (X \to A \leftarrow Y)$. It is an easy exercise to verify that this (pre)ordering is well-defined and compatible with composition and monoidal product. Note that, in general, $\leq$ is a genuine preorder: i.e. it is possible that both $(X \to A \leftarrow Y) \leq (X \to B \leftarrow Y)$ and $(X \to B \leftarrow Y) \leq (X \to A \leftarrow Y)$ without the cospans being isomorphic.

\smallskip
Armed with the requisite definitions, we can be rigorous about hypergraphs with interfaces.
\begin{definition}\label{defn:disccospan}
The preorder-enriched category $\DiscCospan{\Hyp_{\Sigma}}$
is the full subcategory of $\Cospanleq{\Hyp_{\Sigma}}$ with objects the finite ordinals
and arrows (isomorphism classes of) finite hypergraphs, inheriting the preorder. We call its arrows discrete cospans.
\end{definition}
The above deserves an explanation: an ordinal $n$ can be considered as the discrete hypergraph with vertices $\{0,\dots, n-1\}$.  An arrow $n\to m$ in $\DiscCospan{\Hyp_{\Sigma}}$ is thus a cospan $n\to G \leftarrow m$ where $G$ is a hypergraph and $n\to G$ and $m \to G$ are functions to its vertices. The picture in~\eqref{eq:hypgraph} shows a discrete cospan $3 \to 1$, with dotted lines representing the two morphisms.

\subsection{Preordered cartesian bicategories}
Here we explore the algebraic structure of $\Cospanleq{\catC}$. It is closely
related to that of cartesian bicategories, yet---given the discussion above---it is more natural to consider $\Cospanleq{\catC}$ as a locally \emph{pre}ordered category. We therefore need a slight generalisation of Definition~\ref{def:cartbicat}.

\begin{definition}
A \emph{preordered cartesian bicategory} has the same structure as a cartesian bicategory (Definition~\ref{def:cartbicat}), with one difference: the ordering is not required to be a partial order, merely a preorder -- it is for this reason we separated the equational and inequational theories in Figures~\ref{fig:frobeniusBimonoid} and~\ref{fig:adjoint}. The definition of morphism is as expected.
\end{definition}

\begin{proposition}\label{prop:Cospanislocally}
$\Cospanleq{\catC}$ is a preordered cartesian bicategory. 
\end{proposition}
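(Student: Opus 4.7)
The plan is to unpack the preordered cartesian bicategory structure on $\Cospanleq{\catC}$ piece by piece. The symmetric monoidal structure is standard: the coproduct in $\catC$ induces the tensor on $\Cospanleq{\catC}$, with the initial object $0 \in \catC$ as monoidal unit, and the usual symmetries given by codiagonal-like cospans. The preorder-enrichment was fixed in the paragraph introducing $\Cospanleq{\catC}$, and its compatibility with composition and tensor was already noted there.

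Next, for each object $X \in \catC$ I define the Frobenius bimonoid as the following cospans, built from the cocartesian structure of $\catC$:
\begin{align*}
\BcomultX &\df \bigl(X \xrightarrow{\id_X} X \xleftarrow{[\id,\id]} X + X\bigr), &
\BcounitX &\df \bigl(X \xrightarrow{\id_X} X \xleftarrow{\;!\;} 0\bigr), \\
\BmultX &\df \bigl(X+X \xrightarrow{[\id,\id]} X \xleftarrow{\id_X} X\bigr), &
\BunitX &\df \bigl(0 \xrightarrow{\;!\;} X \xleftarrow{\id_X} X\bigr).
\end{align*}
The equational axioms of Figure~\ref{fig:frobeniusBimonoid} (commutative monoid, cocommutative comonoid, Frobenius, special) then follow from standard pushout calculations in $\catC$: every pushout apex that arises is canonically isomorphic to $X$ or $X+X$, and the required mediating isomorphisms are the obvious ones. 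This is essentially Lack's observation that any category with finite colimits induces a hypergraph category of cospans, so I would cite this and sketch at most one diagram chase.

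The heart of the argument is the inequational content, where the choice of preorder (morphisms going \emph{against} the inclusion) matters. For the adjunction axioms in Figure~\ref{fig:adjoint}, I would verify each explicitly. For instance $(UC)$ $\Bunit\poi\Bcounit \leq \id_0$ requires a cospan morphism from $\id_0$ (apex $0$) to $\Bunit\poi\Bcounit$ (apex $X$, since the pushout of $X \xleftarrow{\id} X \xrightarrow{\id} X$ is $X$); the unique map $!\from 0 \to X$ does the job, with the boundary conditions holding trivially because the interfaces are $0$. Dually $(CU)$ $\id_X \leq \Bcounit\poi\Bunit$: the composite apex is $X+X$ with injections as legs, and the codiagonal $[\id,\id]\from X+X \to X$ is a cospan morphism down to the identity apex. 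The remaining adjunction axioms $(MC)$ and $(CM)$ are handled similarly, using the comparison map out of a binary pushout into its ``collapse'' and the codiagonal respectively.

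Finally, I need to show every arrow $R = (X \xrightarrow{f} A \xleftarrow{g} Y)$ is a lax comonoid morphism. For $(L_1)$ $R\poi\Bcounit_Y \leq \Bcounit_X$, both sides have target $0$; the left apex is $A$ (with left leg $f$) and the right apex is $X$ (with left leg $\id_X$); the required morphism $X \to A$ against the order is simply $f$ itself. For $(L_2)$ $R\poi\Bcomult_Y \leq \Bcomult_X\poi(R\tns R)$, a direct computation of the pushouts gives left-hand apex $A$ (with $Y+Y$ leg $[g,g]$) and right-hand apex $A +_X A$ (the pushout of $f$ against itself); the codiagonal $[\id_A,\id_A]\from A+A \to A$ factors through this pushout and yields a cospan morphism from $A +_X A$ to $A$ with the correct behaviour on interfaces. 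The main obstacle is bookkeeping: tracking the canonical maps between iterated pushout apices, and verifying that each proposed mediating arrow commutes with both interface legs. Once these commutations are displayed as small diagrams in $\catC$, all three clauses of Definition~\ref{def:cartbicat} (in its preordered version) follow, and the proposition is established.
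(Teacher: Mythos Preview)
Your proposal is correct and follows essentially the same route as the paper: define the (co)monoid on each object via the codiagonal and initial map, verify the Frobenius equations by pushout calculation (you outsource this to Lack, the paper spells out the two relevant pushout squares), and then check the inequalities of Figure~\ref{fig:adjoint} by exhibiting the required cospan homomorphisms.

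The one organisational difference worth noting is that the paper packages all four adjunction axioms into a single lemma: for any arrow $f\from X\to Y$ in $\catC$, the ``graph'' cospan $F(f) = (X \xrightarrow{f} Y \xleftarrow{\id} Y)$ is right adjoint to its mirror $\op{F(f)}$ in $\Cospanleq{\catC}$. Instantiating at the codiagonal and the initial map then yields $(UC)$, $(CU)$, $(MC)$, $(CM)$ simultaneously. Your axiom-by-axiom check is of course equivalent, just less uniform; conversely, your treatment of $(L_1)$ and $(L_2)$ is more explicit than the paper's, which dispatches the lax comonoid morphism condition in one sentence by observing that every arrow of $\catC$ preserves the cocartesian monoid structure.
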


As a consequence, $\Cospanleq{\Hyp_\Sigma}$, and thus also $\DiscCospan{\Hyp_\Sigma}$, are preordered cartesian bicategories. The latter is of particular interest: the main result of this section, Theorem~\ref{thm:hypergraph}, states that $\DiscCospan{\Hyp_\Sigma}$ is the \emph{free} preordered cartesian bicategory on $\Sigma$, defined as follows.
\begin{definition}\label{defn:fterm}
The preordered prop $\preOrdSyntaxPROP$ has, as arrows $n \to m$, GCQ terms of sort $\sort{n}{m}$ modulo the smallest congruence generated by $=$ in Figures~\ref{fig:axsmc} and~\ref{fig:frobeniusBimonoid}. These are ordered by the smallest precongruence generated by $\leq$ in Figure~\ref{fig:adjoint}.
\end{definition}

\begin{remark}\label{rem:fterm}
 Intuitively, the ordered prop $\gcqpobc$ of Definition~\ref{def:gcqpobc} is the ``poset reduction'' of the preordered prop $\preOrdSyntaxPROP$ introduced above. We will make this formal in Section~\ref{sec:rel}. 
\end{remark}

\begin{figure}
\begin{align*}
\SynToCsp{\Bcomult} & = 1\to 1 \leftarrow 2, &
\SynToCsp{\Bmult} & = 2\to 1 \leftarrow 1, &
\SynToCsp{\idone} &= 1\to 1 \leftarrow 1 \\
\SynToCsp{\Bcounit} &= 1\to 1 \leftarrow 0, &
\SynToCsp{\Bunit} & = 0\to 1 \leftarrow 1, &
\SynToCsp{\Zeronet} & = 0 \to 0 \leftarrow 0 \\
\SynToCsp{\dsymNet} &= \!\cgr[height=0.7cm]{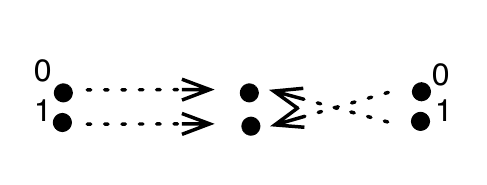}, &
\SynToCsp{c\poi d} & =\SynToCsp{c}\poi \SynToCsp{d}, & \SynToCsp{c\tns d} & =  \SynToCsp{c}\tns \SynToCsp{d} \\
\end{align*}
\vspace{-1.5cm}
\[
\SynToCsp{R} = \cgr[height=1.3cm]{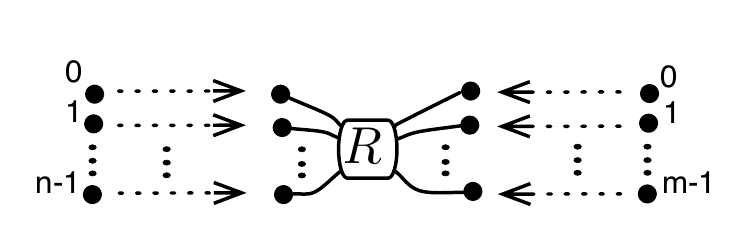}
\]
\caption{Inductive definition of the isomorphism $\SynToCsp{\cdot}\from \preOrdSyntaxPROP \to\DiscCospan{\Hyp_{\Sigma}}$. In the first two lines, the finite ordinal $n$ denotes the discrete hypergraph with $n$ vertexes, and the functions between ordinals are uniquely determined by initiality of $0$ and finality of $1$.}~\label{fig:gcqsemanticsgraphs}
\end{figure}

Theorem 3.3 in~\cite{bonchi2016rewriting} states that $\DiscCospan{\Hyp_{\Sigma}}$ and $\preOrdSyntaxPROP$ are isomorphic as mere categories, i.e.\ forgetting the preorders. We thus need only to prove that the preorder of the two categories coincides, that is for all $c,d$ in $\preOrdSyntaxPROP$,
\begin{equation}\label{eq:ordering}
c\leq d \text{ iff } \SynToCsp{c} \leq \SynToCsp{d}
\end{equation}
where $\SynToCsp{\cdot} \from \preOrdSyntaxPROP \to \DiscCospan{\Hyp_{\Sigma}}$  is the isomorphism from~\cite{bonchi2016rewriting} recalled in Figure~\ref{fig:gcqsemanticsgraphs}. The `only-if' part is immediate from Proposition~\ref{prop:Cospanislocally}.
An alternative proof consists of checking, for each of the inclusions $c \leq d$ in Figure~\ref{fig:adjoint}, that there exists a morphism of cospans from $\SynToCsp{d}$ to $\SynToCsp{c}$, as illustrated by the following example.
\begin{example}
The left and the right hand side of $(L_2)$ in Figure~\ref{fig:adjoint} for $R\in \Sigma_{1,1}$ are translated  via $\SynToCsp{\cdot}$ into the cospans on the left and right below. The morphism from the rightmost hypergraph to the leftmost one, depicted by the dashed lines, witnesses the preorder.
\begin{equation*}\label{eq:phi}
\cgr[height=1.5cm]{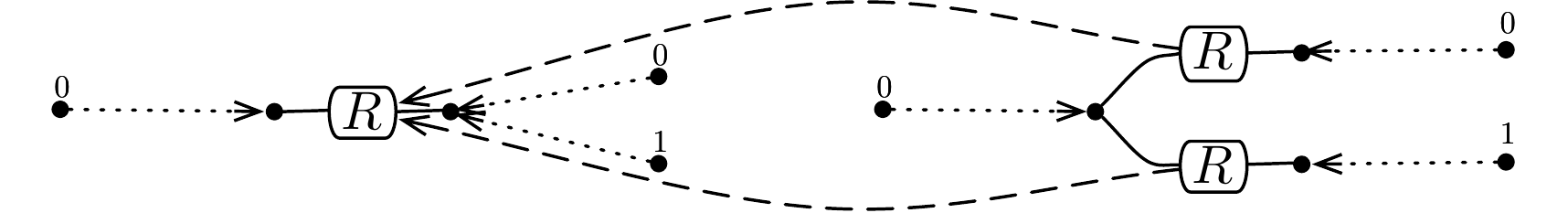}
\end{equation*}
\end{example}
The `if' part of \eqref{eq:ordering} requires some work. Its proof is given in full detail in Appendix~\ref{app:cospan}.

\begin{theorem}\label{thm:hypergraph}
  $\DiscCospan{\Hyp_{\Sigma}} \cong \preOrdSyntaxPROP$ as preordered cartesian bicategories. 
\end{theorem}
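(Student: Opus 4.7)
The theorem combines two claims: that the underlying categories are isomorphic, and that the preorders coincide. The isomorphism of underlying categories is Theorem~3.3 of \cite{bonchi2016rewriting}, witnessed by the map $\SynToCsp{-}$ recalled in Figure~\ref{fig:gcqsemanticsgraphs}. So the remaining task is to verify the preorder-equivalence \eqref{eq:ordering}, and the proof divides along the two directions of this biconditional.

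For the ``only-if'' direction my plan is to observe that $\DiscCospan{\Hyp_\Sigma}$ is a full sub-bicategory of $\Cospanleq{\Hyp_\Sigma}$, and hence by Proposition~\ref{prop:Cospanislocally} inherits the structure of a preordered cartesian bicategory. Since $\preOrdSyntaxPROP$ is generated by the equalities of Figures~\ref{fig:axsmc} and~\ref{fig:frobeniusBimonoid} together with the inequalities of Figure~\ref{fig:adjoint}, it suffices to verify that each such axiom is reflected by a cospan homomorphism in $\DiscCospan{\Hyp_\Sigma}$, which can be done directly for each axiom, in the style of the worked example preceding the statement.

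For the ``if'' direction, assume a cospan morphism $\alpha \from \SynToCsp{d} \to \SynToCsp{c}$ witnessing $\SynToCsp{c} \leq \SynToCsp{d}$; I would produce a syntactic derivation of $c \leq d$. The plan is to factor $\alpha$ in $\Hyp_\Sigma$ as a surjection $q \from \SynToCsp{d} \twoheadrightarrow H$ followed by an injection $\iota \from H \hookrightarrow \SynToCsp{c}$, and then to handle each factor separately: the injection yields $c \leq \CspToSyn{H}$, while the surjection yields $\CspToSyn{H} \leq d$, and composition gives the result. Each factor is further decomposed into a finite sequence of elementary hypergraph moves: the surjection into identifications of pairs of vertices (or pairs of parallel hyperedges), the injection into additions of a single fresh vertex or hyperedge. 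Each elementary move should be matched by a single application of an axiom from Figure~\ref{fig:adjoint}, applied to a normal form in which the move is local: vertex identifications correspond to $(MC)$, isolated vertex additions to $(CU)$, and the moves that involve a hyperedge $R$ are handled by the lax comonoid axioms $(L_1)$ and $(L_2)$.

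The main obstacle I expect is the last step: realising arbitrary hyperedge moves by the lax comonoid axioms. In isolation $(L_1)$ discards a hyperedge only along a counit and $(L_2)$ copies a hyperedge only along a comultiplication, so to perform an elementary operation on a hyperedge sitting somewhere in the middle of a normal form one must first use the spider theorem for special Frobenius bimonoids \cite{lack2004composing} to route the relevant wires into the required shape, then apply $(L_1)$ or $(L_2)$, then route back. Carrying this out uniformly while tracking the effect on the interface, and checking that no further axioms are needed, is the technical heart of the argument, which is what necessitates the detailed treatment in Appendix~\ref{app:cospan}.
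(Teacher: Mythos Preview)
Your proposal is viable but takes a genuinely different route from the paper for the ``if'' direction of~\eqref{eq:ordering}. You factor the \emph{homomorphism} $\alpha$ through its image and then decompose each factor into elementary moves on vertices and hyperedges. The paper instead factors the \emph{cospans} themselves: every discrete cospan $n\to G\leftarrow m$ is written in a normal form $\delta_1\poi(\delta_3\oplus\delta_4)\poi\delta_2$ where $\delta_1,\delta_2,\delta_3$ live entirely in $\Cospanleq{\Fin}$ (only vertices, no hyperedges) and $\delta_4$ is a ``disconnected'' cospan, i.e.\ a tensor $\SynToCsp{R_0}\oplus\cdots\oplus\SynToCsp{R_k}$ of generators. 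A cospan homomorphism $f\from G\to G'$ then induces homomorphisms between the corresponding pieces of the two normal forms. The three $\Fin$-pieces are dispatched by a separately proved empty-signature case (Theorem~\ref{thm:emptysignature}, itself reducing functions to sums and composites of $2\to 1$ and $0\to 1$, handled by $(MC)$ and $(UC)$), while the disconnected piece is handled by a dedicated Lemma~\ref{lemma:disconnected} that reduces to the cases of zero or two hyperedges over a single target hyperedge, using $(L_1)+(UC)$ and $(L_2)+(MC)$ respectively. The paper's decomposition buys modularity: the Frobenius ``wire routing'' you anticipate is absorbed once and for all into the normal-form factorisation and the empty-signature theorem, so the hyperedge lemma can be proved in the clean setting of disconnected cospans without any spider gymnastics. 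Your approach, by contrast, keeps the argument closer to Chandra--Merlin's combinatorics and avoids the auxiliary normal form, at the cost of the local rewiring you correctly identify as the technical crux. One small correction: isolated vertex addition corresponds to $(UC)$ (bone $\leq$ empty), not $(CU)$; the latter is $\idone\leq\Bcounit\poi\Bunit$, which encodes splitting one interface vertex into two.
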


\begin{example}
Recall  Example \ref{ex:derivation}. The derivation corresponds via $\SynToCsp{\cdot}$ to the homomorphism of cospans of hypergraphs  illustrated in the Introduction. 
\end{example}

% This section is for Jens

\section{Completeness for spans}\label{sec:CompSpan}

Having established a combinatorial characterisation of the free preordered cartesian bicategory, here we prove our central completeness result, Theorem~\ref{thm:completeness}. In the preordered setting, completeness holds for ``multirelational'' models: the role of the poset-enriched category $\Rel$ of sets and relations is taken by a (preorder-enriched)  bicategory of spans of functions.

\begin{definition}[$\mathsf{Span}$, $\Spanleq{}$]\label{defn:span}
  Given a finitely complete category $\catC$, the bicategory
  \begin{minipage}{0.72\textwidth}
  $\Span{\catC}$ is dual to that of cospans of Definition~\ref{defn:cospan}: it can be defined as $\Cospan{\op{\catC}}$.
More explicitly, objects are those of $\catC$, arrows of type $X \to Y$ are spans $X\leftarrow A \to Y$, composition $\poi$ is defined by pullback and $\tns$ by
categorical product. The 2-cells from
\end{minipage}
\begin{minipage}{0.28\textwidth}
  \vspace{-9pt}
\begin{equation}\label{eq:spanhom}
\raise17pt\hbox{$\xymatrix@R=5pt@C=20pt{
      & B  \ar@/^/[dr] \ar@/_/[dl]   & \\
      X & & Y  \\
      & A \ar@/_/[ur] \ar@/^/[ul]  \ar[uu]^{\alpha} & \\
}$}
\end{equation}
\end{minipage}
 $X\leftarrow A \rightarrow Y$ to $X\leftarrow B \rightarrow Y$
are span homomorphisms, that is arrows $\alpha\from A\to B$ such that the diagram on the right commutes.
As before, the bicategory $\Span{\catC}$ can be seen as a category by identifying isomorphic spans. We obtain a category
$\Spanleq{\catC}$, on which we define a preorder in a similar way to $\Cospanleq{\catC}$, but in the \emph{reverse} direction:
$(X \to A \leftarrow Y) \leq (X \to B \leftarrow Y)$ when there is a homomorphism~\eqref{eq:spanhom}.
\end{definition}

\smallskip
\begin{lemma}\label{lem:SpanleqCartBicat}
  \!\!\!$\Spanleq{\catC}$ is a preordered cartesian bicategory.
\end{lemma}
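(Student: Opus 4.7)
The plan is to deduce this from the analogous cospan statement, Proposition~\ref{prop:Cospanislocally}, via the duality $\Span{\catC} = \Cospan{\op{\catC}}$ already recorded in Definition~\ref{defn:span}. Since $\catC$ is finitely complete, $\op{\catC}$ is finitely cocomplete, so Proposition~\ref{prop:Cospanislocally} applied to $\op{\catC}$ tells us that $\Cospanleq{\op{\catC}}$ is a preordered cartesian bicategory. It therefore suffices to check that $\Spanleq{\catC}$ and $\Cospanleq{\op{\catC}}$ coincide as \emph{preordered} categories, not merely as bare categories.

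At the level of data and operations, a span $X \leftarrow A \to Y$ in $\catC$ is literally a cospan $X \to A \leftarrow Y$ in $\op{\catC}$; pullbacks in $\catC$ are pushouts in $\op{\catC}$, so the two composition operations agree; and finite products in $\catC$ are finite coproducts in $\op{\catC}$, so the monoidal products and their units ($1 \in \catC$ versus $0 \in \op{\catC}$) coincide. This yields an isomorphism of categories after quotienting by iso-spans on one side and iso-cospans on the other.

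For the preorders, a 2-cell $\alpha\from A \to B$ in $\Span{\catC}$ is the same datum as a morphism $B \to A$ in $\op{\catC}$, i.e.\ a 2-cell in $\Cospan{\op{\catC}}$ running in the \emph{opposite} direction between the corresponding cospans. Unwinding: in $\Spanleq{\catC}$ we declare $\mathrm{span}(A) \leq \mathrm{span}(B)$ when some such $\alpha$ exists, while in $\Cospanleq{\op{\catC}}$ we declare $\mathrm{cospan}(A) \leq \mathrm{cospan}(B)$ when a 2-cell \emph{in the opposite direction} exists, which unfolds to the existence of exactly the same morphism $\alpha\from A\to B$ in $\catC$. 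Hence the two preorders coincide, and the cartesian bicategory structure transports directly from $\Cospanleq{\op{\catC}}$ to $\Spanleq{\catC}$.

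The only point requiring care is the bookkeeping of preorder directions; the paper's deliberate inversion in Definition~\ref{defn:span} is precisely what makes the duality argument land on the nose. The Frobenius, adjointness, and lax comonoid-morphism axioms are then inherited from Proposition~\ref{prop:Cospanislocally} at no additional cost; in particular the Frobenius bimonoid on $X$ in $\Spanleq{\catC}$ arises from the diagonal $\Delta_X\from X \to X\tns X$ and the terminal map $X \to 1$, dual to the codiagonal/coterminal data used in the cospan case.
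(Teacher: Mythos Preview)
Your proof is correct and follows exactly the paper's own approach: the paper's proof is the single line ``Immediate from Proposition~\ref{prop:Cospanislocally} by duality,'' and you have simply unpacked that duality argument in full, including the careful check that the preorders agree once the direction-reversal conventions in Definitions~\ref{defn:cospan} and~\ref{defn:span} are unwound.
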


Models are now morphisms $\mathcal{M}\from \preOrdSyntaxPROP \to \Spanleq{\Set}$ of preordered cartesian bicategories. Observe that, since the interpretation of the monoid and comonoid structure is predetermined, a morphism is uniquely determined by its value on the object $1$ and on $R\in \Sigma$. In other words, a model consists of a set $\mathcal{M}(1)$ and, for each $R\in \Sigma_{n,m}$, a span $\mathcal{M}(1)^n \leftarrow Y \to \mathcal{M}(1)^m$. This data is exactly the definition of a (possibly infinite) $\Sigma$-hypergraph (Definition \ref{def:hyp}).
\begin{proposition} Morphisms $\mathcal{M}\from \preOrdSyntaxPROP \to \Spanleq{\Set}$ are in bijective correspondence with $\Sigma$-hypergraphs.
\label{pro:bijcorr}
\end{proposition}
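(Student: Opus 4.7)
The plan is to exploit the universal property of $\preOrdSyntaxPROP$ as the \emph{free} preordered cartesian bicategory on $\Sigma$. This property is essentially built into Definition~\ref{defn:fterm}, which quotients GCQ terms over $\Sigma$ by precisely the equations and inequalities axiomatising preordered cartesian bicategories; it is also confirmed by Theorem~\ref{thm:hypergraph}, which realises $\preOrdSyntaxPROP$ concretely as $\DiscCospan{\Hyp_{\Sigma}}$.

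From freeness, a morphism $\mathcal{M}\from \preOrdSyntaxPROP \to \mathcal{B}$ into any preordered cartesian bicategory $\mathcal{B}$ will be uniquely determined by, and will arise from, two pieces of data: (i) an object $\mathcal{M}(1)$ of $\mathcal{B}$, and (ii) for each $R\in\Sigma_{n,m}$ an arrow $\mathcal{M}(R) \from \mathcal{M}(1)^n \to \mathcal{M}(1)^m$ of $\mathcal{B}$. Instantiating at $\mathcal{B} = \Spanleq{\Set}$, item~(i) becomes a set $X := \mathcal{M}(1)$ and item~(ii) becomes, for each $R\in\Sigma_{n,m}$, (the isomorphism class of) a span $X^n \xleftarrow{s_R} G_R \xrightarrow{t_R} X^m$. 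This is precisely the data of a $\Sigma$-hypergraph (Definition~\ref{def:hyp}) with $G_V := X$, edge sets $G_R$, and source/target maps $s_R, t_R$. Conversely, any $\Sigma$-hypergraph yields such a family of generators and hence, by freeness, a unique morphism $\mathcal{M}$; the two assignments are mutually inverse by construction.

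Two subsidiary points I would verify. First, since the tensor in $\Spanleq{\Set}$ on objects is the cartesian product, strict monoidal preservation forces $\mathcal{M}(n) = X^n$, matching the fact that source and target functions land in powers of $G_V$ in Definition~\ref{def:hyp}. Second, the preorder on $\preOrdSyntaxPROP$ is the precongruence generated by Figure~\ref{fig:adjoint}; since these inequalities already hold in $\Spanleq{\Set}$ by Lemma~\ref{lem:SpanleqCartBicat}, any functor that preserves the cartesian bicategory structure automatically preserves the preorder, so no additional constraint is hidden in the word ``morphism''.

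The only real — and admittedly minor — obstacle is bookkeeping: arrows of $\Spanleq{\Set}$ are isomorphism classes of spans, so strictly speaking the bijection is between morphisms and \emph{isomorphism classes} of $\Sigma$-hypergraphs in $\iHyp_{\Sigma}$, which is the intended reading of the statement. Beyond unwinding this identification, once the freeness of $\preOrdSyntaxPROP$ (equivalently, Theorem~\ref{thm:hypergraph}) is in hand there is no further categorical content to supply, and the bijection follows by a direct calculation on generators.
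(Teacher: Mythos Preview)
Your proposal is correct and follows essentially the same approach as the paper: both argue that a morphism of preordered cartesian bicategories out of $\preOrdSyntaxPROP$ is determined by its value on the generating object $1$ and on each $R\in\Sigma$, and that this data is precisely that of a $\Sigma$-hypergraph. Your additional remarks on strictness, preorder preservation, and the isomorphism-class subtlety are useful elaborations, but the core argument matches the paper's (very terse) proof.
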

Given this correspondence and the fact that $\preOrdSyntaxPROP \cong \DiscCospan{\Hyp_{\Sigma}}$, each hypergraph $G$ induces a morphism
$\mathcal{U}_{G} \from \DiscCospan{\Hyp_{\Sigma}} \to \Spanleq{\Set}$. 
Moreover, $G$ acts like a representing object of a contravariant Hom-functor, in the following sense: $\mathcal{U}_G$
 maps $n\xrightarrow{\iota} G' \xleftarrow{\omega} m$ to 
\[ \iHyp_{\Sigma}[n,G] \xleftarrow{\iota \poi -} \iHyp_{\Sigma}[G',G] \xrightarrow{\omega \poi -}  \iHyp_{\Sigma}[m,G] \]
where $\iHyp_{\Sigma}[G',G]$ is the set of hypergraph homomorphisms from $G'$ to $G$, and $(\iota \poi -)$ and $(\omega \poi -)$ are defined, given $f \in \iHyp_{\Sigma}[G',G]$, by 
$
(\iota \poi -)(f)=\iota \poi f \text{ and }
(\omega \poi -)(f)=\omega \poi f\text{.}
$
\begin{proposition}\label{prop:universal}
Suppose that
$n\xrightarrow{\iota} G' \xleftarrow{\omega} m$ a discrete cospan in $\DiscCospan{\Hyp_{\Sigma}}$. Then
\[
\mathcal{U}_{G}(n\stackrel{\iota}{\to} G' \stackrel{\omega}{\leftarrow} m) = 
\iHyp_{\Sigma}[n,G] \xleftarrow{\iota \poi -} \iHyp_{\Sigma}[G',G] \xrightarrow{\omega \poi -} \iHyp_{\Sigma}[m,G].\]
\end{proposition}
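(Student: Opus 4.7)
The plan is to identify the right-hand side of the equation with the value of a morphism of preordered cartesian bicategories $\mathcal{F}_G \from \DiscCospan{\Hyp_{\Sigma}} \to \Spanleq{\Set}$ defined by exactly this formula, verify that $\mathcal{F}_G$ agrees with $\mathcal{U}_G$ on generators, and then invoke the uniqueness half of Proposition~\ref{pro:bijcorr} to conclude $\mathcal{F}_G = \mathcal{U}_G$.

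First I would show that $\mathcal{F}_G$ is a well-defined morphism of preordered cartesian bicategories. The key observation is that $\iHyp_{\Sigma}[-,G] \from \op{\iHyp_{\Sigma}} \to \Set$ is a representable contravariant functor, and therefore sends colimits to limits. In particular, the pushout defining composition in $\Cospan{\Hyp_{\Sigma}}$ becomes the pullback defining composition in $\Span{\Set}$, while coproducts of hypergraphs become products, i.e.\ the monoidal structure of $\Span{\Set}$. Identities, symmetries, and the Frobenius (co)monoid on $1$---all built from coproducts in Figure~\ref{fig:gcqsemanticsgraphs}---are preserved dually. For preorders, a 2-cell $\alpha \from (n \to G_2 \leftarrow m) \Rightarrow (n \to G_1 \leftarrow m)$ in $\Cospan{\Hyp_{\Sigma}}$, i.e.\ a morphism $\alpha \from G_2 \to G_1$, induces by precomposition $\alpha^* \from \iHyp_{\Sigma}[G_1,G] \to \iHyp_{\Sigma}[G_2,G]$, which is exactly a 2-cell of spans witnessing the corresponding comparison in $\Spanleq{\Set}$. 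Finally, the lax comonoid axioms of Figure~\ref{fig:adjoint} are automatically satisfied by every arrow of $\Spanleq{\Set}$---a standard fact, used implicitly in Lemma~\ref{lem:SpanleqCartBicat}.

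Agreement with $\mathcal{U}_G$ on generators is then immediate. On the object $1$, $\mathcal{F}_G(1) = \iHyp_{\Sigma}[1,G] \cong G_V = \mathcal{U}_G(1)$. For $R \in \Sigma_{n,m}$, the apex of $\SynToCsp{R}$ is the single-edge hypergraph whose homomorphisms into $G$ are in natural bijection with $G_R$ (each such homomorphism is fully determined by the image of the unique hyperedge, which forces the vertex assignments); the two legs of $\mathcal{F}_G(\SynToCsp{R})$ thus return the source and target maps of $G_R$, reproducing $\mathcal{U}_G(R)$. The uniqueness half of Proposition~\ref{pro:bijcorr} then yields $\mathcal{F}_G = \mathcal{U}_G$, and the formula in the statement follows by unfolding the definition of $\mathcal{F}_G$.

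The main obstacle I expect is bookkeeping around orientations---cospans are preordered contravariantly to the existence of a 2-cell while spans are preordered covariantly, so the contravariance of the Hom-functor must be tracked carefully to flip these conventions into agreement---together with matching each Frobenius axiom on $1$ in $\DiscCospan{\Hyp_{\Sigma}}$ with its span-dual on $\iHyp_{\Sigma}[1,G]$. Beyond this, the proposition amounts to the Yoneda-style observation that representable contravariant functors turn the combinatorics of cospans into the combinatorics of spans.
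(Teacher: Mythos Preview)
Your proposal is correct and the underlying mathematics is the same as the paper's: the contravariant $\Hom$-functor $\iHyp_{\Sigma}[-,G]$ sends colimits to limits, so pushouts become pullbacks and coproducts become products, which is exactly what is needed to match the cospan side with the span side; the generator $R$ is then handled by the observation that homomorphisms out of the one-edge hypergraph into $G$ are in bijection with $G_R$.

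The organisational packaging differs slightly. The paper argues by structural induction on the cospan, using Theorem~\ref{thm:hypergraph} to decompose an arbitrary discrete cospan into the generators of Figure~\ref{fig:gcqsemanticsgraphs} and then checking base cases and the two inductive cases (composition and tensor) directly. You instead define a candidate morphism $\mathcal{F}_G$ globally, verify once and for all that it is a morphism of preordered cartesian bicategories, and then appeal to the uniqueness half of Proposition~\ref{pro:bijcorr}. Your route is slightly more conceptual and hides the induction inside the freeness statement, at the cost of having to verify the full list of preordered-cartesian-bicategory axioms for $\mathcal{F}_G$ (including the preorder preservation, which the paper's proof does not need for this proposition since the statement is only about equality of spans). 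Either way the work is the same; your packaging simply shifts it from an explicit induction to a functoriality check.
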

\begin{proof}
The conclusion of Theorem~\ref{thm:hypergraph} allows us to use induction on $n\stackrel{\iota}{\to} G' \stackrel{\omega}{\leftarrow} m$.
The inductive cases follow since the contravariant $\Hom$-functor sends colimits to limits.
Four of the base cases, $\SynToCsp{\Bcomult}$, $\SynToCsp{\Bmult}$, $\SynToCsp{\Bcounit}$ and $\SynToCsp{\Bunit}$, follow by the same argument, and the others ($\SynToCsp{\idone}$, $\SynToCsp{\dsymNet}$ and $\SynToCsp{R}$) are easy to check.
  The details are in Appendix~\ref{app:CompSpan}.
\end{proof}

\begin{theorem}[Completeness for $\Spanleq{\Set}$]\label{thm:completeness}
  Let $n \xrightarrow{\iota} G  \xleftarrow{\omega} m$ and $n \xrightarrow{\iota'} G'  \xleftarrow{\omega'} m$ be arrows in
  $\DiscCospan{\Hyp_{\Sigma}}$. If, for all morphisms $\mathcal{M} \from \DiscCospan{\Hyp_{\Sigma}} \to \Spanleq{\Set}$, we have  
  $\mathcal{M}(n \xrightarrow{\iota} G  \xleftarrow{\omega} m) 
   \leq 
  \mathcal{M}(n \xrightarrow{\iota'} G'  \xleftarrow{\omega'} m)$, then 
  $(n \xrightarrow{\iota} G  \xleftarrow{\omega} m) 
  \leq  
  (n \xrightarrow{\iota'} G'  \xleftarrow{\omega'} m)$.
\end{theorem}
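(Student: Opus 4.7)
The plan is to carry out a Yoneda-style argument: test the hypothesis on the specific model $\mathcal{U}_G$ corresponding (via Proposition~\ref{pro:bijcorr}) to the hypergraph $G$ appearing in the left-hand cospan, and then extract the desired cospan homomorphism by evaluating at the identity $\id_G$. This is the point at which ``preorder enrichment'' pays off: the representable $\iHyp_{\Sigma}[-,G]$ records \emph{all} homomorphisms into $G$, not just whether one exists, and this extra information is exactly what turns the semantic inequality back into a combinatorial witness.

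Concretely, I would first instantiate the hypothesis with $\mathcal{M} = \mathcal{U}_G \colon \DiscCospan{\Hyp_{\Sigma}} \to \Spanleq{\Set}$. By Proposition~\ref{prop:universal}, the two resulting spans are
\[
\iHyp_{\Sigma}[n,G] \xleftarrow{\iota \poi -} \iHyp_{\Sigma}[G,G] \xrightarrow{\omega \poi -} \iHyp_{\Sigma}[m,G]
\]
and
\[
\iHyp_{\Sigma}[n,G] \xleftarrow{\iota' \poi -} \iHyp_{\Sigma}[G',G] \xrightarrow{\omega' \poi -} \iHyp_{\Sigma}[m,G],
\]
and the hypothesis gives a span homomorphism $\alpha \colon \iHyp_{\Sigma}[G,G] \to \iHyp_{\Sigma}[G',G]$ commuting with both legs (note that the inequality on $\Spanleq{\Set}$ is ordered so that the homomorphism runs from the smaller span to the larger one, cf.\ Definition~\ref{defn:span}).

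Next I would set $h := \alpha(\id_G) \in \iHyp_{\Sigma}[G',G]$, i.e.\ a hypergraph homomorphism $h \colon G' \to G$. Commutativity of the two span legs at the element $\id_G$ reads $\iota = \iota' \poi h$ and $\omega = \omega' \poi h$, which is precisely the condition \eqref{eq:preorder} asserting that $h$ is a morphism of cospans $(n \xrightarrow{\iota'} G' \xleftarrow{\omega'} m) \Rightarrow (n \xrightarrow{\iota} G \xleftarrow{\omega} m)$. By the definition of the preorder on $\Cospanleq{\Hyp_{\Sigma}}$, which runs in the direction opposite to that of cospan morphisms, this means $(n \xrightarrow{\iota} G \xleftarrow{\omega} m) \leq (n \xrightarrow{\iota'} G' \xleftarrow{\omega'} m)$ in $\DiscCospan{\Hyp_{\Sigma}}$, as required.

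The main obstacle I anticipate is purely bookkeeping: the two preorders (on $\Cospanleq{}$ and on $\Spanleq{}$) are defined in opposite conventions, and there are dualities between the ``source'' cospan and the ``test'' functor $\mathcal{U}_G$. One must verify that $\mathcal{U}_G$ really is a morphism of preordered cartesian bicategories (so that the hypothesis applies), but this is delivered by Proposition~\ref{pro:bijcorr} composed with the isomorphism $\preOrdSyntaxPROP \cong \DiscCospan{\Hyp_{\Sigma}}$ of Theorem~\ref{thm:hypergraph}. Once the directions are all aligned, the argument collapses to the single observation that the identity homomorphism $\id_G$ is the universal element that a Yoneda-style proof always exploits.
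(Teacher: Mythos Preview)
Your proposal is correct and follows essentially the same argument as the paper: instantiate the hypothesis at the representable model $\mathcal{U}_G$, use Proposition~\ref{prop:universal} to identify the resulting spans as $\Hom$-spans, and evaluate the witnessing span homomorphism $\alpha$ at $\id_G$ to extract the required cospan morphism $G' \to G$. Your discussion of the direction conventions is also accurate and matches the care the paper takes with them.
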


\begin{proof}
  If the inequality holds for all morphisms, it holds for $ \mathcal{U}_{G}$.
  By the conclusion of Proposition~\ref{prop:universal},  there is a function $\alpha\colon \iHyp_{\Sigma}[G,G]  \to \iHyp_{\Sigma}[G',G] $ making the diagram on
 $$\xymatrix@R=5pt@C=30pt{
       &  \iHyp_{\Sigma}[G,G] \ar@/^/[dr]^(0.6){\omega \poi - } \ar@/_/[dl]_(0.6){\iota \poi - } \ar[dd]|{\alpha} & & & G' \ar[dd]|{\alpha(\id_G)} & \\
       \iHyp_{\Sigma}[n,G] & & \iHyp_{\Sigma}[m,G] & n \ar@/_/[dr]_{\iota} \ar@/^/[ur]^{\iota'} & & m \ar@/_/[ul]_{\omega'} \ar@/^/[dl]^{\omega} \\
      &  \iHyp_{\Sigma}[G',G] \ar@/_/[ur]_(0.6){\omega' \poi - } \ar@/^/[ul]^(0.6){\iota' \poi - } & & & G & \\
}
$$ 
the left commute. 
We take the identity $\id_G\in \iHyp_{\Sigma}[G,G]$ and consider $\alpha(\id_G) \colon G' \to G$. By the commutativity of the left diagram, we have that $\iota=\iota' \poi \alpha(\id_G)$ and $\omega=\omega' \poi \alpha(\id_G)$. This means that the right diagram commutes, 
that is $(n \xrightarrow{\iota}G  \xleftarrow{\omega} m ) \leq (n \xrightarrow{\iota'} G'  \xleftarrow{\omega'} m)$.
\end{proof}

\begin{remark}
The reader may have noticed that, in the above proof, $\mathcal{U}_{G}$ plays a role analogous to Chandra and Merlin's~\cite{chandra1977optimal} \emph{natural model} for the formula corresponding to $n \xrightarrow{\iota} G  \xleftarrow{\omega} m$. 
\end{remark}

%Remove this paragraph if we need the space
Given the completeness theorem of this section, proving completeness for models of $\gcqpobc$ in $\Rel$ is a simple step that we illustrate in the next section.

\section{Completeness for relations}\label{sec:rel}
We conclude by showing how Theorem~\ref{thm:completeness} leads to a proof of Theorem~\ref{thm:main}.
The key observation lies in the tight connection between the preordered setting and the posetal one.
\begin{definition}\label{def:reduction}
  Let $\mathcal{C}$ be a preorder-enriched category. The poset-reduction of $\mathcal{C}$ is the category $\PosetCat{\mathcal{C}}$ having the same
  objects as $\mathcal{C}$ and morphisms in $\PosetCat{\mathcal{C}}$ are equivalence classes of those in $\mathcal{C}$ modulo $\sim = \leq \cap \geq$.
  Composition is inherited from $\mathcal{C}$; this is well-defined as $\sim$ is a congruence wrt composition. 
  
This assignment extends to a functor $\PosetCat{\left(\cdot\right)}$ from the category of preorder-enriched categories and functors to the category of poset-enriched ones. See Appendix \ref{app:proofsCompSpan} for details. 
\end{definition}
We have already seen, althoug implicitly, an example of this construction in passing from $\preOrdSyntaxPROP$ (Definition~\ref{defn:fterm}) to $\gcqpobc$ (Definition~\ref{def:gcqpobc}): it is indeed immediate to see that 
$\PosetCat{\left(\preOrdSyntaxPROP\right)} = \gcqpobc$. Another crucial instance is provided by the following observation, where $\Spantilde{\catC}$ is a shorthand for $\PosetCat{\left(\Spanleq{\catC}\right)}$.
\begin{proposition}\label{pro:spantilde}
  $\Spantilde{\Set} \cong \Rel$ as cartesian bicategories.
\end{proposition}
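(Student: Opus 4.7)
The plan is to exhibit an explicit equivalence: send a span $X \xleftarrow{f} A \xrightarrow{g} Y$ to its \emph{image relation} $\{(f(a),g(a)) \mid a \in A\} \subseteq X \times Y$, and conversely send a relation $R \subseteq X \times Y$ to the span $X \leftarrow R \to Y$ whose legs are the projections. My first step is to define a functor $F \from \Spanleq{\Set} \to \Rel$ by this image construction and verify that it is well-defined on isomorphism classes of spans, preserves identities (the identity span has image the diagonal), and preserves composition (the image of a pullback span is exactly the relational composite, by the standard set-theoretic computation on pullbacks over $\Set$). Checking monoidal structure is immediate because in both categories the tensor is $\times$ on objects, and the image of a product span is the product of images.

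Next, I show that $F$ preserves the cartesian bicategory structure. Since in both $\Spanleq{\Set}$ and $\Rel$ the Frobenius monoid/comonoid on $X$ are built from the diagonal $\Delta \from X \to X \times X$, the unique map $!\from X \to 1$, and their left adjoints (as spans/relations), the image of each generator in Figure~\ref{fig:gcqsemantics} coincides with its set-theoretic counterpart in Example~\ref{ex:Rel}. Thus $F$ is a morphism of (preordered) cartesian bicategories.

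The key lemma is that $F$ identifies precisely those spans related by $\sim$. For one direction, given a span homomorphism $\alpha \from A \to B$ over $X,Y$, each pair $(f_A(a),g_A(a))$ equals $(f_B(\alpha(a)), g_B(\alpha(a)))$, so the image of $A$ is contained in that of $B$; applying this in both directions when $A \sim B$ yields equality of images. Conversely, if two spans have the same image $R$, there is a canonical surjective span homomorphism from each span to $X \leftarrow R \to Y$ (projecting $a \mapsto (f(a),g(a))$), and, invoking the axiom of choice, a section going back. Composing these produces span homomorphisms in both directions between the two original spans, witnessing $\sim$. Consequently $F$ descends to a faithful functor $\tilde F \from \Spantilde{\Set} \to \Rel$, which is also full (every relation $R$ is the image of the span $X \leftarrow R \to Y$) and bijective on objects (both have sets as objects). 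Hence $\tilde F$ is an isomorphism of categories, and since it preserves the monoid/comonoid structure on every object as well as the tensor, it is an isomorphism of cartesian bicategories.

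The main obstacle I expect is the faithfulness step: one must carefully construct span homomorphisms in both directions from mere equality of images, which is where choice enters, and then verify that the preorder quotient really collapses to exactly the posetal ordering of $\Rel$ (i.e.\ subset inclusion of image relations). Everything else—functoriality, preservation of generators, monoidal structure—is a routine unwinding of the definitions of pullback, image, and the Frobenius structure exhibited in Example~\ref{ex:Rel} and Figure~\ref{fig:gcqsemantics}.
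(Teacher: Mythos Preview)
Your proposal is correct and follows essentially the same route as the paper: both send a span to its image relation via the epi--mono factorisation of $\langle f,g\rangle\colon A\to X\times Y$, and both invoke the axiom of choice to split the surjection $A\twoheadrightarrow R_A$ in order to manufacture the span homomorphism in the reverse direction. The only organisational difference is that the paper packages the key step as ``there is a span homomorphism $A\to B$ iff $R_A\subseteq R_B$'' (which immediately yields that the preorder on $\Spanleq{\Set}$ collapses to subset inclusion), whereas you phrase it as faithfulness plus fullness and defer the order-reflection check to your final paragraph; but the underlying argument is identical.
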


The above proposition implicitly makes use of the following fact.
\begin{proposition}
  The functor $\PosetCat{\left(\cdot\right)}$ maps preorder-enriched cartesian bicategories and morphisms into poset-enriched cartesian bicategories and morphisms.
  \label{pro:preordPosetBicat}
\end{proposition}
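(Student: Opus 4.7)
The plan is to verify the two parts of the statement separately: first that $\PosetCat{\mathcal{B}}$ is a poset-enriched cartesian bicategory whenever $\mathcal{B}$ is a preorder-enriched one, and then that if $\mathcal{F}\from \mathcal{B}_1 \to \mathcal{B}_2$ is a morphism of preorder-enriched cartesian bicategories, the induced functor $\PosetCat{\mathcal{F}}$ is a morphism of poset-enriched cartesian bicategories.

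First, I would note that $\sim$ is a congruence with respect to both composition $\poi$ and tensor $\tns$ on $\mathcal{B}$, because $\leq$ is one by the enrichment and hence so is its converse. This fact, already used implicitly in Definition~\ref{def:reduction} to make composition in $\PosetCat{\mathcal{B}}$ well-defined, likewise ensures that the tensor product of $\mathcal{B}$ descends to $\PosetCat{\mathcal{B}}$ giving it a symmetric monoidal structure. The induced relation $[f] \leq [g] \iff f \leq g$ on equivalence classes is well-defined (using monotonicity of $\leq$ under $\sim$), and it is antisymmetric by construction: $[f] \leq [g]$ and $[g] \leq [f]$ force $f \sim g$, i.e.\ $[f] = [g]$. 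Thus $\PosetCat{\mathcal{B}}$ is poset-enriched.

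Second, I would check the three conditions of Definition~\ref{def:cartbicat}. Since the special Frobenius axioms (Figure~\ref{fig:frobeniusBimonoid}) already hold as equalities in $\mathcal{B}$, they continue to hold under the stronger quotient $\sim$, so each object $X$ carries a special Frobenius bimonoid in $\PosetCat{\mathcal{B}}$ (taking $[\BmultX]$, $[\BunitX]$, $[\BcomultX]$, $[\BcounitX]$). The adjunction inequalities and the lax comonoid morphism inequalities (Figure~\ref{fig:adjoint}) are preserved by the quotient map $f \mapsto [f]$ since it is monotonic. For the morphism part, given $\mathcal{F}$ a preorder-enriched cartesian bicategory morphism, I would define $\PosetCat{\mathcal{F}}([f]) \df [\mathcal{F}(f)]$; this is well-defined because $f \leq g$ in $\mathcal{B}_1$ implies $\mathcal{F}(f) \leq \mathcal{F}(g)$ in $\mathcal{B}_2$ (monotonicity is part of being preorder-enriched), so $f \sim g$ implies $\mathcal{F}(f) \sim \mathcal{F}(g)$. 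Functoriality, preservation of the tensor, of the monoid/comonoid on each object, and of the partial orders then follow directly from the corresponding properties of $\mathcal{F}$ combined with the definition of the structure on $\PosetCat{\mathcal{B}_i}$.

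The main obstacle is not conceptual but rather the routine verification that $\sim$ is a congruence for both monoidal operations and that the induced structure on equivalence classes satisfies each axiom. The essential observation making the whole argument trivial is that all axioms of a (preorder-enriched) cartesian bicategory are either equalities or inequalities in the enriching preorder, and both are preserved by the quotient $\mathcal{B} \to \PosetCat{\mathcal{B}}$, which is the identity on objects and sends $f$ to its equivalence class. Once this is observed, the proof of Proposition~\ref{pro:preordPosetBicat} reduces to citing the axioms one by one, so I would relegate the detailed checks to the appendix and keep the main-text proof to a brief paragraph highlighting the congruence and monotonicity observations.
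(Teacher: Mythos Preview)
Your proposal is correct and follows essentially the same approach as the paper: the key observation in both is that the axioms of a cartesian bicategory are equalities or inequalities preserved by the quotient $\mathcal{B}\to\PosetCat{\mathcal{B}}$, and that the monoid/comonoid on each object of $\PosetCat{\mathcal{B}}$ is simply the equivalence class of the one in $\mathcal{B}$. The paper's proof is considerably terser, but your more detailed unpacking of the congruence and monotonicity checks is exactly what lies behind its ``obvious''.
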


To conclude, it is convenient to  establish a general theory of completeness results.
\begin{definition}
  Let $\mathcal{C}, \mathcal{D}$ be preorder-enriched categories and let $\mathcal{F}$ be a class of preordered functors $\mathcal{C} \to \mathcal{D}$.
  We say that $\mathcal{C}$ is $\mathcal{F}$-complete for $\mathcal{D}$ if for all arrows $x,y$ in $\mathcal{C}$, 
  $\mathcal{M}(x) \leq \mathcal{M}(y)$ for all $\mathcal{M} \in \mathcal{F}$ entails that $x \leq y$.
  \label{def:generalCompleteness}
\end{definition}

\begin{lemma}[Transfer lemma]
  Let $\mathcal{C}, \mathcal{D}$ be preorder-enriched categories and $\mathcal{F}$ a class of preordered functors $\mathcal{C} \to \mathcal{D}$.
  Assume $\mathcal{C}$ to be $\mathcal{F}$-complete for $\mathcal{D}$.
  \begin{enumerate}
  \item  Then $\PosetCat{\mathcal{C}}$ is $\PosetCat{\mathcal{F}}$-complete for $\PosetCat{\mathcal{D}}$, where
  $\PosetCat{\mathcal{F}} = \{ \PosetCat{F} \mid F \in \mathcal{F} \}$.
  \item If $\mathcal{F} \subseteq \mathcal{F}'$, then $\mathcal{C}$ is $\mathcal{F}'$-complete for $\mathcal{D}$.
  \end{enumerate} 
  \label{lem:transfer}
\end{lemma}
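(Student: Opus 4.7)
The plan is to handle the two parts separately, unpacking Definition~\ref{def:generalCompleteness} and tracking the poset-reduction functor from Definition~\ref{def:reduction}. Part (2) is essentially a triviality; part (1) reduces to a single observation about how the preorder on a preorder-enriched category descends to its poset-reduction, together with the fact that this passage commutes with applying functors.

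For part (2), assume $\mathcal{C}$ is $\mathcal{F}$-complete for $\mathcal{D}$ and that $\mathcal{F} \subseteq \mathcal{F}'$. Given arrows $x, y$ of $\mathcal{C}$, the premise ``$\mathcal{M}(x) \leq \mathcal{M}(y)$ for all $\mathcal{M} \in \mathcal{F}'$'' is \emph{a fortiori} ``$\mathcal{M}(x) \leq \mathcal{M}(y)$ for all $\mathcal{M} \in \mathcal{F}$'', and the hypothesis of $\mathcal{F}$-completeness yields $x \leq y$.

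For part (1), I first record an auxiliary fact about poset-reduction: for any preorder-enriched category $\mathcal{E}$ and arrows $a, b$ in $\mathcal{E}$, we have $[a] \leq [b]$ in $\PosetCat{\mathcal{E}}$ iff $a \leq b$ in $\mathcal{E}$, where $[\,\cdot\,]$ denotes the $\sim$-equivalence class. This is immediate because $\sim$ is the symmetric part of $\leq$, making the order on equivalence classes well-defined on any choice of representatives and agreeing with the original preorder. With this in hand the argument is mechanical. Given $[x], [y]$ in $\PosetCat{\mathcal{C}}$ satisfying $\PosetCat{F}([x]) \leq \PosetCat{F}([y])$ for every $F \in \mathcal{F}$, observe that by the definition of $\PosetCat{F}$ we have $\PosetCat{F}([z]) = [F(z)]$, so the hypothesis rewrites as $[F(x)] \leq [F(y)]$ in $\PosetCat{\mathcal{D}}$; the auxiliary fact then yields $F(x) \leq F(y)$ in $\mathcal{D}$ for every $F \in \mathcal{F}$. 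Applying $\mathcal{F}$-completeness of $\mathcal{C}$ gives $x \leq y$ in $\mathcal{C}$, and hence $[x] \leq [y]$ in $\PosetCat{\mathcal{C}}$.

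There is no substantive obstacle: the entire statement is bookkeeping expressing that the poset-reduction construction both preserves and reflects the preorder on representatives, and intertwines with the application of preordered functors. The only small point worth being explicit about is the well-definedness (and characterisation) of the order on the poset-reduction, which is exactly what the auxiliary fact above supplies.
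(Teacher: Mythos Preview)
Your proof is correct and follows essentially the same route as the paper: the paper's Lemma~\ref{lem:refl} is precisely your auxiliary fact that $[a]\leq[b]$ iff $a\leq b$, and the paper's appeal to the commuting square $\mathcal{A}\circ F = \PosetCat{F}\circ\mathcal{A}$ is your observation that $\PosetCat{F}([z])=[F(z)]$. The only cosmetic difference is notation (the paper writes the quotient map as a functor $\mathcal{A}$ rather than using $[\,\cdot\,]$).
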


All the pieces are now in place for a
\begin{proof}[Proof of Theorem~\ref{thm:main}]~
We need to show completeness---that is---assuming $c\leqq c'$, we need to prove $c \synleq c'$ for all GCQ terms $c$ and $c'$. 
Observe that $c \synleq c'$ if and only if 
\begin{equation}\tag{$\dag$}\label{eq:conc}
c \leq c'\text{ as arrows of }\gcqpobc \text{ (Definition~\ref{def:gcqpobc}).}
\end{equation}
 Moreover, using Proposition~\ref{pro:ModelsAreFunctors}, $c\leqq c'$ iff 
\begin{equation}\tag{$\ddag$}\label{eq:ass}
\mathcal{M}c \leq \mathcal{M}c'
\text{, for all morphisms of cartesian bicategories } \mathcal{M}\from \gcqpobc \to \Rel\text{.}
\end{equation}
Our task becomes, therefore, to show that~\eqref{eq:ass} implies~\eqref{eq:conc}. 
In other words, we need to prove $\gcqpobc$ to be $\mathcal{G}$-complete for $\Rel$, where $\mathcal{G}$ is the class of morphisms of cartesian bicategories of type $\from \gcqpobc \to \Rel$. Let $\mathcal{F}$ be the class of morphisms of preorder-enriched cartesian bicategories from $\preOrdSyntaxPROP$ to $\Spanleq{\Set}$.
Since, by Theorem~\ref{thm:completeness}, $\preOrdSyntaxPROP$ is $\mathcal{F}$-complete for $\Spanleq{\Set}$, we can conclude by
Lemma~\ref{lem:transfer}.1 that $\PosetCat{\left(\preOrdSyntaxPROP\right)}$ is $\PosetCat{\mathcal{F}}$-complete for $\PosetCat{\left(\Spanleq{\Set}\right)}$.
By Proposition~\ref{pro:spantilde}, this is equivalent to
$\gcqpobc$ being $\PosetCat{\mathcal{F}}$-complete for $\Rel$. Now, by Proposition~\ref{pro:preordPosetBicat} $\PosetCat{\mathcal{F}} \subseteq \mathcal{G}$, 
so the claim follows by Lemma~\ref{lem:transfer}.2.
\end{proof}

\section{Discussion, related and future work}\label{sec:discussion}
We introduced a string diagrammatic language for conjunctive queries and demonstrated a sound and complete axiomatisation for query equivalence and inclusion.
To prove completeness, we showed that our language provides an algebra able to express all hypergraphs and that our axioms characterise both hypergraph isomorphisms and existence of hypergraph morphisms. A recent result~\cite{cosme2017k4} introduced an extension of the allegorical fragment of the algebra of relations~\cite{tarski1941calculus} that is able to express all graphs with tree-width at most 2. Furthermore, the isomorphism of these graphs can be  axiomatised. The algebra in~\cite{cosme2017k4}, which is clearly less expressive than ours, can be elegantly encoded into our string diagrams (see Appendix~\ref{sec:Damien}). The same holds for the representable allegories by Freyd and Scedrov~\cite{freyd1990categories}.

We also prove completeness with respect to $\Spanleq{\Set}$, the structure of which is closely related to the \emph{bag semantics} of conjunctive queries in SQL. Indeed, the join of two SQL-tables is given by composition in $\Spanleq{\Set}$ and not in $\Rel$: in the resulting table the same row can occur several times. As we have seen, with the relational semantics, query inclusion can be decided with Chandra and Merlin's algorithm~\cite{chandra1977optimal} and its reduction to existence of a hypergraph homomorphism. On the other hand, decidability of inclusion for the bag semantic is, famously, open. Originally posed by Vardi and Chaudhuri~\cite{DBLP:conf/pods/ChaudhuriV93}, it has been studied for different fragments and extensions of conjunctive queries~\cite{DBLP:journals/tods/IoannidisR95,DBLP:journals/ipl/AfratiDG10,DBLP:conf/pods/JayramKV06}. It is worth mentioning that it is known~\cite{DBLP:journals/ejc/KoppartyR11} that there is a reduction to the homomorphism domination problem, which seems intimately related with our Proposition~\ref{prop:universal}. Unfortunately, the preorder in $\Spanleq{\Set}$---the existence of a span morphism---does \emph{not} directly correspond to bag inclusion: one must restrict to the existence of an \emph{injective} morphism. We leave this promising path for future work.

\bibliography{Bibliography}

\begin{thebibliography}{10}

\bibitem{Abramsky2008:CQM}
Samson Abramsky and Bob Coecke.
\newblock {Categorical quantum mechanics}.
\newblock {\em CoRR}, abs/1401.4973, 2008.

\bibitem{DBLP:journals/ipl/AfratiDG10}
Foto~N. Afrati, Matthew Damigos, and Manolis Gergatsoulis.
\newblock Query containment under bag and bag-set semantics.
\newblock {\em Inf. Process. Lett.}, 110(10):360--369, 2010.
\newblock URL: \url{https://doi.org/10.1016/j.ipl.2010.02.017}, \href
  {http://dx.doi.org/10.1016/j.ipl.2010.02.017}
  {\path{doi:10.1016/j.ipl.2010.02.017}}.

\bibitem{BaezErbele-CategoriesInControl}
John Baez and Jason Erbele.
\newblock Categories in control.
\newblock {\em Theory and Application of Categories}, 30:836--881, 2015.

\bibitem{benabou1967introduction}
Jean B{\'e}nabou.
\newblock Introduction to bicategories.
\newblock In {\em Reports of the Midwest Category Seminar}, pages 1--77.
  Springer, 1967.

\bibitem{bonchi2016rewriting}
Filippo Bonchi, Fabio Gadducci, Aleks Kissinger, Pawe{\l} Soboci{\'n}ski, and
  Fabio Zanasi.
\newblock Rewriting modulo symmetric monoidal structure.
\newblock In {\em Proceedings of the 31st Annual ACM/IEEE Symposium on Logic in
  Computer Science}, pages 710--719. ACM, 2016.

\bibitem{Bonchi2015}
Filippo Bonchi, Pawel Sobocinski, and Fabio Zanasi.
\newblock Full abstraction for signal flow graphs.
\newblock In {\em POPL 2015}, pages 515--526. ACM, 2015.

\bibitem{Bruni2006}
Roberto Bruni, Ivan Lanese, and Ugo Montanari.
\newblock A basic algebra of stateless connectors.
\newblock {\em Theoretical Computer Science}, 366(1--2):98--120, 2006.

\bibitem{Bruni2011}
Roberto Bruni, Hern\'{a}n~C. Melgratti, and Ugo Montanari.
\newblock A connector algebra for {P/T} nets interactions.
\newblock In {\em {CONCUR 2011}}, volume 6901 of {\em LNCS}, pages 312--326.
  Springer, 2011.
\newblock \href {http://dx.doi.org/10.1093/jigpal/6.2.349}
  {\path{doi:10.1093/jigpal/6.2.349}}.

\bibitem{Bruni2013}
Roberto Bruni, Hern\'{a}n~C. Melgratti, Ugo Montanari, and Pawe{\l}
  Soboci\'{n}ski.
\newblock Connector algebras for {C/E} and {P/T} nets' interactions.
\newblock {\em Log Meth Comput Sci}, 9(16), 2013.

\bibitem{brunihierarchical}
Roberto Bruni, Ugo Montanari, Gordon~D. Plotkin, and Daniele Terreni.
\newblock On hierarchical graphs: Reconciling bigraphs, gs-monoidal theories
  and gs-graphs.
\newblock {\em Fundam. Inform.}, 134(3-4):287--317, 2014.

\bibitem{carboni2008cartesian}
Aurelio Carboni, G~Max Kelly, Robert~FC Walters, and Richard~J Wood.
\newblock Cartesian bicategories ii.
\newblock {\em Theory and Applications of Categories}, 19(6):93--124, 2008.

\bibitem{carboni1987cartesian}
Aurelio Carboni and Robert~FC Walters.
\newblock Cartesian bicategories i.
\newblock {\em Journal of pure and applied algebra}, 49(1-2):11--32, 1987.

\bibitem{chamberlin1974sequel}
Donald~D Chamberlin and Raymond~F Boyce.
\newblock Sequel: A structured english query language.
\newblock In {\em Proceedings of the 1974 ACM SIGFIDET (now SIGMOD) workshop on
  Data description, access and control}, pages 249--264. ACM, 1974.

\bibitem{chandra1977optimal}
Ashok~K Chandra and Philip~M Merlin.
\newblock Optimal implementation of conjunctive queries in relational data
  bases.
\newblock In {\em Proceedings of the ninth annual ACM symposium on Theory of
  computing}, pages 77--90. ACM, 1977.

\bibitem{DBLP:conf/pods/ChaudhuriV93}
Surajit Chaudhuri and Moshe~Y. Vardi.
\newblock Optimization of \emph{Real} conjunctive queries.
\newblock In Catriel Beeri, editor, {\em Proceedings of the Twelfth {ACM}
  {SIGACT-SIGMOD-SIGART} Symposium on Principles of Database Systems, May
  25-28, 1993, Washington, DC, {USA}}, pages 59--70. {ACM} Press, 1993.
\newblock URL: \url{http://doi.acm.org/10.1145/153850.153856}, \href
  {http://dx.doi.org/10.1145/153850.153856} {\path{doi:10.1145/153850.153856}}.

\bibitem{codd1970relational}
Edgar~F Codd.
\newblock A relational model of data for large shared data banks.
\newblock {\em Communications of the ACM}, 13(6):377--387, 1970.

\bibitem{PQP}
B.~Coecke and A.~Kissinger.
\newblock {\em Picturing Quantum Processes. A First Course in Quantum Theory
  and Diagrammatic Reasoning}.
\newblock Cambridge University Press, 2016.

\bibitem{HandbookDPO}
A.~Corradini, U.~Montanari, F.~Rossi, H.~Ehrig, R.~Heckel, and M.~Loewe.
\newblock Algebraic approaches to graph transformation, part i: Basic concepts
  and double pushout approach.
\newblock In {\em Handbook of Graph Grammars}, pages 163--246. 1997.

\bibitem{cosme2017k4}
Enric Cosme-Ll{\'o}pez and Damien Pous.
\newblock K4-free graphs as a free algebra.
\newblock 2017.

\bibitem{diestel2017graph}
Reinhard Diestel.
\newblock {\em Graph theory}.
\newblock Springer Publishing Company, Incorporated, 2017.

\bibitem{freyd1990categories}
Peter~J Freyd and Andre Scedrov.
\newblock {\em Categories, allegories}, volume~39.
\newblock Elsevier, 1990.

\bibitem{ghica}
Dan~R. Ghica and Aliaume Lopez.
\newblock A structural and nominal syntax for diagrams.
\newblock {\em CoRR}, abs/1702.01695, 2017.
\newblock URL: \url{http://arxiv.org/abs/1702.01695}, \href
  {http://arxiv.org/abs/1702.01695} {\path{arXiv:1702.01695}}.

\bibitem{glushkov1961abstract}
Victor~Mikhaylovich Glushkov.
\newblock The abstract theory of automata.
\newblock {\em Russian Mathematical Surveys}, 16(5):1, 1961.

\bibitem{DBLP:journals/tods/IoannidisR95}
Yannis~E. Ioannidis and Raghu Ramakrishnan.
\newblock Containment of conjunctive queries: Beyond relations as sets.
\newblock {\em {ACM} Trans. Database Syst.}, 20(3):288--324, 1995.
\newblock URL: \url{http://doi.acm.org/10.1145/211414.211419}, \href
  {http://dx.doi.org/10.1145/211414.211419} {\path{doi:10.1145/211414.211419}}.

\bibitem{DBLP:conf/pods/JayramKV06}
T.~S. Jayram, Phokion~G. Kolaitis, and Erik Vee.
\newblock The containment problem for {REAL} conjunctive queries with
  inequalities.
\newblock In Stijn Vansummeren, editor, {\em Proceedings of the Twenty-Fifth
  {ACM} {SIGACT-SIGMOD-SIGART} Symposium on Principles of Database Systems,
  June 26-28, 2006, Chicago, Illinois, {USA}}, pages 80--89. {ACM}, 2006.
\newblock URL: \url{http://doi.acm.org/10.1145/1142351.1142363}, \href
  {http://dx.doi.org/10.1145/1142351.1142363}
  {\path{doi:10.1145/1142351.1142363}}.

\bibitem{LoriaCompleteness}
Emmanuel Jeandel, Simon Perdrix, and Renaud Vilmart.
\newblock A complete axiomatisation of the zx-calculus for clifford+ t quantum
  mechanics.
\newblock {\em {arXiv preprint arXiv:1705.11151}}, 2017.

\bibitem{DBLP:journals/ejc/KoppartyR11}
Swastik Kopparty and Benjamin Rossman.
\newblock The homomorphism domination exponent.
\newblock {\em Eur. J. Comb.}, 32(7):1097--1114, 2011.
\newblock URL: \url{https://doi.org/10.1016/j.ejc.2011.03.009}, \href
  {http://dx.doi.org/10.1016/j.ejc.2011.03.009}
  {\path{doi:10.1016/j.ejc.2011.03.009}}.

\bibitem{lack2004composing}
Stephen Lack.
\newblock Composing props.
\newblock {\em Theory and Applications of Categories}, 13(9):147--163, 2004.

\bibitem{mac2013categories}
Saunders Mac~Lane.
\newblock {\em Categories for the working mathematician}, volume~5.
\newblock Springer Science \& Business Media, 2013.

\bibitem{mellieszeilberger}
Paul{-}Andr{\'{e}} Melli{\`{e}}s and Noam Zeilberger.
\newblock A bifibrational reconstruction of lawvere's presheaf hyperdoctrine.
\newblock In {\em Proceedings of the 31st Annual {ACM/IEEE} Symposium on Logic
  in Computer Science, {LICS} '16, New York, NY, USA, July 5-8, 2016}, pages
  555--564, 2016.
\newblock URL: \url{http://doi.acm.org/10.1145/2933575.2934525}, \href
  {http://dx.doi.org/10.1145/2933575.2934525}
  {\path{doi:10.1145/2933575.2934525}}.

\bibitem{OxfordCompleteness}
Kang~Feng Ng and Quanlong Wang.
\newblock A universal completion of the zx-calculus.
\newblock {\em {arXiv preprint arXiv:1706.09877}}, 2017.

\bibitem{SadrzadehCC14}
Mehrnoosh Sadrzadeh, Stephen Clark, and Bob Coecke.
\newblock The frobenius anatomy of word meanings {I:} subject and object
  relative pronouns.
\newblock {\em CoRR}, abs/1404.5278, 2014.

\bibitem{sassone2005congruence}
Vladimiro Sassone and Pawe{\l} Soboci{\'n}ski.
\newblock A congruence for petri nets.
\newblock {\em Electronic Notes in Theoretical Computer Science},
  127(2):107--120, 2005.

\bibitem{tarski1941calculus}
Alfred Tarski.
\newblock On the calculus of relations.
\newblock {\em The Journal of Symbolic Logic}, 6(3):73--89, 1941.

\end{thebibliography}

\appendix

\section{An encoding of the algebra of graphs with tree-width at most 2}\label{sec:Damien}
Cosme and Pous introduced in~\cite{cosme2017k4}  an algebra that is able to express all and only the graphs with tree-width at most 2~\cite{diestel2017graph}. Its syntax is given below.
\begin{equation*}
  c ::= \top \;|\; c \wedge c  \;|\; \id \;|\; c \poi c \;|\; \op{c} \;|\; R 
\end{equation*}
Note that this extends the allegorical fragment of relational algebra in~\cite{freyd1990categories} with $\top$.
Figure~\ref{fig:Damien} shows a simple encoding of this algebra into GCQ. It is immediate to verify that the (cospans of) graphs associated to these GCQ terms via the map $\SynToCsp{\cdot}$ (Figure~\ref{fig:gcqsemanticsgraphs}) coincides with the graph semantics provided by Figure 2 in~\cite{cosme2017k4}. 
 
 \begin{figure}
$$\begin{array}{ccc}
\top & \mapsto & \lower4pt\hbox{$\includegraphics[width=2cm]{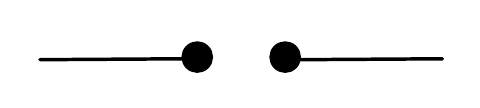}$}\\
c \wedge c& \mapsto & \lower12pt\hbox{$\includegraphics[width=2cm]{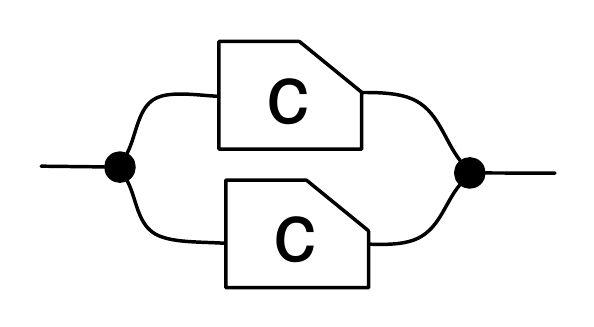}$}\\
id & \mapsto &  \lower4pt\hbox{$\includegraphics[width=2cm]{graffles/id.pdf}$}\\
c \poi c & \mapsto & \lower6pt\hbox{$\includegraphics[width=2cm]{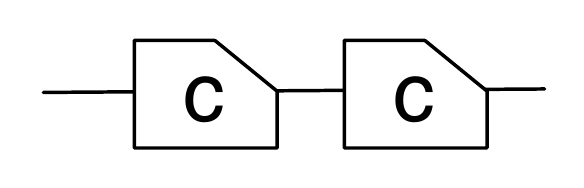}$}\\
\op{c} & \mapsto & \lower10pt\hbox{$\includegraphics[width=2cm]{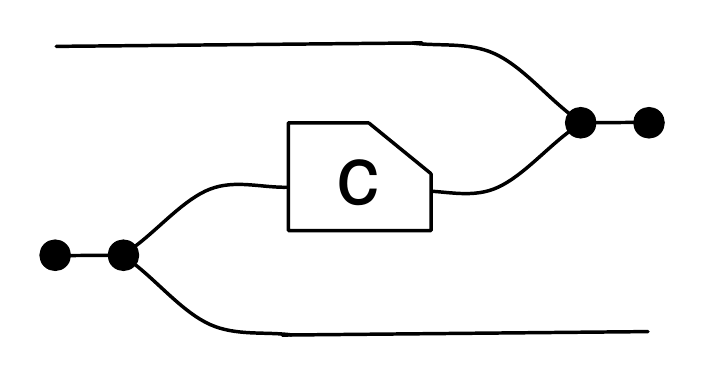}$}\\
R & \mapsto & \lower5pt\hbox{$\includegraphics[width=1.1cm]{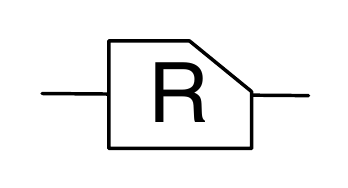}$}\\
 \end{array}$$
 \caption{An encoding of~\cite{cosme2017k4} into GCQ}\label{fig:Damien}
 \end{figure}

\section{A translation from GCQ to CCQ}\label{s:GCQtoCCQ}

\begin{figure*}
\[
\Lambda(\Bcomult) =  1,2 \vdash (x_0 = y_0) \wedge (x_0 = y_1), \qquad
\Lambda(\Bcounit) = 1,0 \vdash \top
\]
\[
\Lambda(\Bmult) = 2,1 \vdash (x_0 = y_0) \wedge (x_1 = y_0), \qquad
\Lambda(\Bunit) = 0,1 \vdash \top, 
\]
\[ 
\Lambda(\Zeronet) = 0,0 \vdash \top, \qquad 
\Lambda(\idone) = 1,1 \vdash x_0 = y_0, \qquad
\Lambda(\dsymNet) = 2,2 \vdash (x_0 = y_1) \wedge (x_1 = y_0)
\]
\medskip
\begin{prooftree}
\lower4pt\hbox{$\includegraphics[height=20pt]{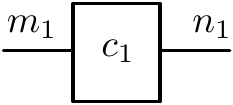}$}
\raise4pt\hbox{$\quad \longmapsto\quad  m_1,n_1 \vdash \Lambda(c_1)$}
\qquad
\lower4pt\hbox{$\includegraphics[height=20pt]{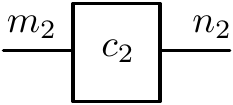}$}
\raise4pt\hbox{$\quad \longmapsto\quad  m_2,n_2 \vdash \Lambda(c_2)$}
\justifies
\lower22pt\hbox{$\includegraphics[height=45pt]{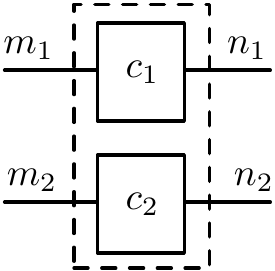}$}
\ \longmapsto\ m_1+m_2,n_1+n_2 \vdash \Lambda(c_1) \wedge (\Lambda(c_2)[x_{[m_1,m_1+m_2-1]},y_{[n_1,n_1+n_2-1]}/x_{[0,m_2-1]},y_{[0,n_2-1]}])
\using (\tns)
\end{prooftree}
\begin{prooftree}
\lower4pt\hbox{$\includegraphics[height=20pt]{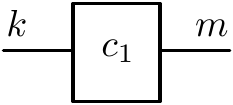}$}
\raise4pt\hbox{$\quad \longmapsto\quad  k,m \vdash \Lambda(c_1)$}
\qquad
\lower4pt\hbox{$\includegraphics[height=20pt]{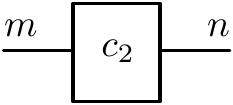}$}
\raise4pt\hbox{$\quad \longmapsto\quad  m,n \vdash \Lambda(c_2)$}
\justifies
\lower12pt\hbox{$\includegraphics[height=30pt]{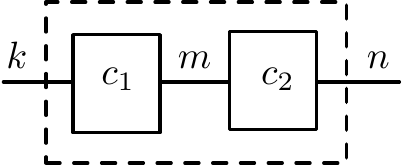}$}
\quad \longmapsto\quad k,n \vdash \exists \overset{\rightarrow}{z}.\; (\Lambda(c_1)[\overset{\rightarrow}{z}/\overset{\rightarrow}{y}]) \wedge (\Lambda(c_2)[\overset{\rightarrow}{z}/\overset{\rightarrow}{x}])
\using
(\poi)
\end{prooftree}
\caption{Translation $\Lambda$ from GCQ to CCQ.\label{fig:revtrans}}
\end{figure*}

To translate GCQ diagrams to CCQ formulas we need to introduce a minor syntactic variant of CCQ, this time assuming two countable sets of variables $Var_l = \{x_i \;|\; i\in\N \}$ and $Var_r = \{y_i \;|\; i\in\N\}$. The idea is that a diagram $c:\sort{n}{m}$ will translate to a formula that has its free variables in $\{x_0,\dots,x_{n-1}\}\cup\{y_0,\dots,y_{m-1}\}$, i.e. there are ``left'' free variables $\overset{\rightarrow}{x}$ and ``right'' free variables $\overset{\rightarrow}{y}$.

\begin{definition}
We write $n,m\vdash \phi$ if 
\[fr(\phi)\subseteq\{x_0,\dots,x_{n-1}\}\cup\{y_0,\dots,y_{m-1}\} \text{ and}\]
\[n+m \vdash \phi[x_{[n,n+m-1]}/y_{[0,m-1]}].\]
\end{definition}

Next, for $R\in\Sigma_{n,m}$ we assume a CCQ signature in which $R$ is a relation symbol with arity $n+m$.
Then, given a GCQ model $\mathcal{M}=(X,\rho)$ we can obtain a CCQ model $\Lambda(\mathcal{M})=(X,\rho')$ in the obvious way.

With the aid of the above we can give a recursive translation $\Lambda$ from GCQ terms to CCQ formulas. The details are given in Figure~\ref{fig:revtrans}. 

The following confirms that $\Lambda$ preserves semantics.
\begin{proposition}\label{thm:diagramstologic}
Fix a GCQ model $\mathcal{M}=(X,\rho)$ and suppose that $c:\sort{n}{m}$ is a GCQ formula.
Then $(\overset{\rightarrow}{v},\overset{\rightarrow}{w})\in \densem{c:\sort{n}{m}}{\mathcal{M}}$
iff $(\overset{\rightarrow}{v},\overset{\rightarrow}{w})\in \densem{n+m \vdash \Lambda(c)}{\Lambda(\mathcal{M})}$.
\end{proposition}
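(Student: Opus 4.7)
The plan is to proceed by structural induction on the derivation of $c \typ \sort{n}{m}$, exactly mirroring the recursive clauses of $\Lambda$ in Figure~\ref{fig:revtrans}. Before starting the induction, it is convenient to record a standard substitution lemma for CCQ (in the two-sort variant with $Var_l$ and $Var_r$): for any derivable $k \vdash \phi$ and permutation-like substitution $\sigma$ on free variables, $\overset{\rightarrow}{u} \in \densem{\phi\sigma}{\mathcal{N}}$ iff the correspondingly re-indexed tuple lies in $\densem{\phi}{\mathcal{N}}$. This is proved by a routine induction on the derivation, and is what makes the bookkeeping in Figure~\ref{fig:revtrans} legitimate.

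For the base cases ($\Bcomult, \Bcounit, \Bmult, \Bunit, \Zeronet, \idone, \dsymNet$, and each $R \in \Sigma_{n,m}$), one simply compares the explicit sets in Figure~\ref{fig:gcqsemantics} with the CCQ semantics of the corresponding formula in the first two lines of Figure~\ref{fig:revtrans}, read off from Figure~\ref{fig:semanticsCCQ}. For instance $\Lambda(\Bcomult) = (x_0 = y_0) \wedge (x_0 = y_1)$ is satisfied by $(v; w_0, w_1)$ exactly when $v = w_0 = w_1$, matching $\densem{\Bcomult}{\mathcal{M}}$. The relation-symbol case uses the definition $\Lambda(\mathcal{M}) = (X,\rho')$ with $\rho'(R) = \rho(R)$ viewed as a subset of $X^{n+m}$.

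For the tensor case, assume by induction that $\Lambda$ is semantics-preserving for $c_1 \typ \sort{m_1}{n_1}$ and $c_2 \typ \sort{m_2}{n_2}$. Since $\densem{c_1 \tns c_2}{\mathcal{M}} = \densem{c_1}{\mathcal{M}} \tns \densem{c_2}{\mathcal{M}}$ is the componentwise product, and since the CCQ formula in the $(\tns)$ clause of Figure~\ref{fig:revtrans} is the conjunction of $\Lambda(c_1)$ with a suitably renamed $\Lambda(c_2)$ on a disjoint block of variables, the substitution lemma together with the semantics of $(\wedge)$ (which, on disjoint free variables, is a cartesian product by Figure~\ref{fig:semanticsCCQ}) yields the required equality. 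For the composition case $c_1 \poi c_2$ with $c_1 \typ \sort{k}{m}$, $c_2 \typ \sort{m}{n}$, the GCQ semantics is the relational composition: $(\overset{\rightarrow}{v},\overset{\rightarrow}{w}) \in \densem{c_1 \poi c_2}{\mathcal{M}}$ iff there exists $\overset{\rightarrow}{z} \in X^m$ with $(\overset{\rightarrow}{v},\overset{\rightarrow}{z}) \in \densem{c_1}{\mathcal{M}}$ and $(\overset{\rightarrow}{z},\overset{\rightarrow}{w}) \in \densem{c_2}{\mathcal{M}}$. By the induction hypothesis and the substitution lemma, this is equivalent to the existence of a witness for the $\exists \overset{\rightarrow}{z}$ in the CCQ translation, which is precisely the semantics of $\exists \overset{\rightarrow}{z}.\, (\Lambda(c_1)[\overset{\rightarrow}{z}/\overset{\rightarrow}{y}]) \wedge (\Lambda(c_2)[\overset{\rightarrow}{z}/\overset{\rightarrow}{x}])$.

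The main obstacle is not conceptual but notational: carefully reconciling the two-sort CCQ variant (with $\overset{\rightarrow}{x}$ the ``left'' and $\overset{\rightarrow}{y}$ the ``right'' free variables) with the ordinary one-sort $n+m \vdash \phi$ used in Definition~\ref{defn:semanticsCCQ}, and verifying that the substitution lemma behaves correctly with respect to the bulk renamings $[\overset{\rightarrow}{z}/\overset{\rightarrow}{x}]$, $[\overset{\rightarrow}{z}/\overset{\rightarrow}{y}]$, and the shifts $[x_{[m_1,m_1+m_2-1]}/x_{[0,m_2-1]}]$ used in the $\tns$ and $\poi$ clauses. Once this substitution bookkeeping is pinned down, every inductive step reduces to a direct rewriting of set-theoretic definitions.
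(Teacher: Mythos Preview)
Your proposal is correct and follows exactly the approach of the paper, which records the proof simply as ``Induction on the derivation of $c:\sort{n}{m}$.'' You have merely spelled out in detail the base and inductive cases that the paper leaves implicit.
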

\begin{proof}
Induction on the derivation of $c:\sort{n}{m}$.
\end{proof}

The following is immediate from the definition of the translations $\Theta$ and $\Lambda$.
\begin{lemma}\label{lem:lambdatheta}
Suppose that $n\vdash \phi$ is a CCQ formula and $\mathcal{M}$ a CCQ model. Then 
\[
\densem{n\vdash \phi}{\mathcal{M}} = \densem{\Lambda\Theta(n \vdash \phi)}{\Lambda\Theta(\mathcal{M})}.
\qed
\]
\end{lemma}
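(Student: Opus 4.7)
The plan is to obtain this lemma simply by chaining the two semantics-preservation results already established for each direction of translation, namely Proposition~\ref{thm:logictodiagrams} for $\Theta$ and Proposition~\ref{thm:diagramstologic} for $\Lambda$. First I would note that $\Theta(n\vdash\phi)$ has sort $\sort{n}{0}$ by construction: inspecting the clauses of Figure~\ref{fig:trans} shows every case yields a GCQ diagram with $n$ left and $0$ right dangling wires. Hence $\Lambda$ can be applied to it as a GCQ term of sort $\sort{n}{0}$, producing a CCQ formula with $n$ left variables and $0$ right variables, i.e.\ a judgment of the form $n\vdash\Lambda\Theta(n\vdash\phi)$ in the ordinary (unsplit) variable convention.

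Next, I would verify that the model translations compose to the identity on CCQ models. Starting with $\mathcal{M}=(X,\rho)$, one has $\Theta(\mathcal{M})=(X,\rho')$ where $\rho'(R)=\rho(R)\times\{\bullet\}\subseteq X^n\times X^0$, and since $X^0=\{\bullet\}$ is a singleton, feeding this GCQ model through $\Lambda$ recovers exactly $\rho(R)\subseteq X^n$. Thus $\Lambda\Theta(\mathcal{M})=\mathcal{M}$ as CCQ models.

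With these observations in place, the proof reduces to a short chain of biconditionals:
\[
\overset{\rightarrow}{v}\in\densem{n\vdash\phi}{\mathcal{M}}
\;\iff\; (\overset{\rightarrow}{v},\bullet)\in\densem{\Theta(n\vdash\phi)}{\Theta(\mathcal{M})}
\;\iff\; \overset{\rightarrow}{v}\in\densem{n\vdash\Lambda\Theta(n\vdash\phi)}{\Lambda\Theta(\mathcal{M})},
\]
where the first equivalence is Proposition~\ref{thm:logictodiagrams} and the second follows from Proposition~\ref{thm:diagramstologic} applied to the GCQ term $\Theta(n\vdash\phi)$ of sort $\sort{n}{0}$, identifying the $(n+0)$-tuple $(\overset{\rightarrow}{v},\bullet)\in X^n\times\{\bullet\}$ with the $n$-tuple $\overset{\rightarrow}{v}\in X^n$. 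The resulting equality of sets of tuples is exactly the claim.

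The only obstacle is the routine bookkeeping of sorts and tuple boundaries; there is no substantive induction to carry out, because the two cited propositions already encapsulate inductions on the structure of CCQ derivations and GCQ diagrams respectively. This is precisely what the author means by remarking that the lemma is ``immediate from the definitions''.
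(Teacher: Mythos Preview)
Your proposal is correct and fully consistent with the paper's treatment: the paper gives no proof at all beyond the remark that the lemma ``is immediate from the definition of the translations $\Theta$ and $\Lambda$'' and the inline \qed, so your chaining of Propositions~\ref{thm:logictodiagrams} and~\ref{thm:diagramstologic} together with the observation $\Lambda\Theta(\mathcal{M})=\mathcal{M}$ is exactly the natural way to spell out what the authors left implicit.
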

In spite of the above, $\Theta$ and $\Lambda$ are not quite the inverse of each other. The reason is ``bureaucratic'': the image of $\Theta$ only covers diagrams of type $(n,0)$, which from the point of view of the algebra of monoidal categories is not a particularly interesting class since we miss the power of categorical composition. Similarly, the reverse translation $\Lambda$ does not seem logically natural, since, e.g. the translation of categorical composition involves both $\wedge$ and $\exists$. However, in spite of these superficial differences, Propositions~\ref{prop:CCQGCQ} and~\ref{prop:GCQCCQ} guarantee that the two languages indeed do have the same expressive power. 

\begin{example}
An interesting case is \lower7pt\hbox{$\includegraphics[height=20pt]{graffles/cup.pdf}$}. 
It is the translation, via $\Theta$, of $2\vdash x_0 = x_1$. Returning to CCQ via $\Lambda$, we obtain $2,0 \vdash \exists z.\; (x_0 = z) \wedge (x_1 = z) \wedge \top$.  The formulas are quite different---syntactically speaking---but they are logically equivalent. 
\end{example}

\begin{example}
The case of GCQ terms $\Zeronet$ and $\lower5pt\hbox{$\includegraphics[height=15pt]{graffles/boxedbone.pdf}$}$ is also interesting. The first translates via $\Lambda$ to $0,0\vdash \top$, the second to $0,0\vdash \exists z_0. \top \wedge \top$.
\end{example}

\section{Syntactic sugar}\label{app:sugar}
We write $\idoneL{n}$ as an abbreviation for a bundle of $n$ wires. Interestingly, all the basic connectives can be lifted to operate on bundles instead of single
strings. This is very useful in working with larger string diagrams, or diagrams of arbitrary size. Note however, that no additional connectives are introduced,
just recursively defined syntactic sugar. The formal definition of the components can be found in Figure~\ref{fig:SyntacticSugar}.
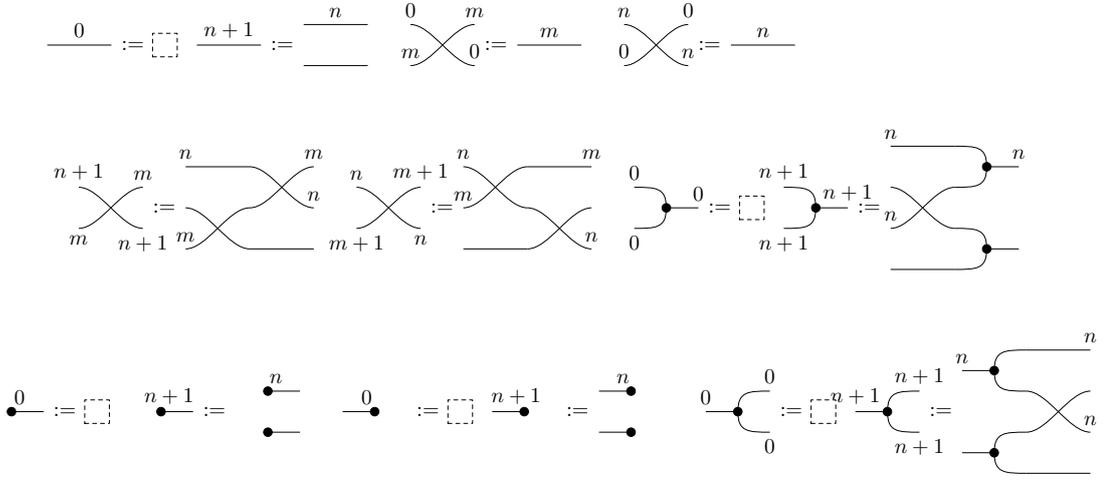
\begin{figure*}
  \centering
  \begin{subfigure}{\textwidth}
\[
\scalebox{\scalefact}{\begin{tikzpicture}
\draw (0.0,3.0) .. node [pos=0.5, anchor=south] {$0$} controls (1.0,3.0) and (2.0,3.0) .. (3.0,3.0);
\draw (7.0,3.0) .. node [pos=0.5, anchor=south] {$n+1$} controls (8.0,3.0) and (9.0,3.0) .. (10.0,3.0);
\draw (12.0,1.5) .. controls (13.0,1.5) and (14.0,1.5) .. (15.0,1.5);
\draw (12.0,4.5) .. node [pos=0.5, anchor=south] {$n$} controls (13.0,4.5) and (14.0,4.5) .. (15.0,4.5);
\draw (17.0,1.5) .. node [pos=0, anchor=south] {$m$} node [pos=1, anchor=south] {$m$} controls (18.0,1.5) and (19.0,4.5) .. (20.0,4.5);
\draw (17.0,4.5) .. node [pos=0, anchor=south] {$0$} node [pos=1, anchor=south] {$0$} controls (18.0,4.5) and (19.0,1.5) .. (20.0,1.5);
\draw (22.0,3.0) .. node [pos=0.5, anchor=south] {$m$} controls (23.0,3.0) and (24.0,3.0) .. (25.0,3.0);
\draw (27.0,1.5) .. node [pos=0, anchor=south] {$0$} node [pos=1, anchor=south] {$0$} controls (28.0,1.5) and (29.0,4.5) .. (30.0,4.5);
\draw (27.0,4.5) .. node [pos=0, anchor=south] {$n$} node [pos=1, anchor=south] {$n$} controls (28.0,4.5) and (29.0,1.5) .. (30.0,1.5);
\draw (32.0,3.0) .. node [pos=0.5, anchor=south] {$n$} controls (33.0,3.0) and (34.0,3.0) .. (35.0,3.0);
\path (3.0,3.0) -- node {$:=$} (5.0,3.0);
\path (10.0,3.0) -- node {$:=$} (12.0,3.0);
\path (20.0,3.0) -- node {$:=$} (22.0,3.0);
\path (30.0,3.0) -- node {$:=$} (32.0,3.0);
\node at (5.5, 3.0) {$\Zeronet$};
\end{tikzpicture}}
\]
  \end{subfigure}
  \begin{subfigure}{\textwidth}
\[
\scalebox{\scalefact}{\begin{tikzpicture}
\draw (0.0,4.5) .. node [pos=0, anchor=north] {$m$} node [pos=1, anchor=south] {$m$} controls (1.0,4.5) and (2.0,7.5) .. (3.0,7.5);
\draw (0.0,7.5) .. node [pos=0, anchor=south] {$n+1$} node [pos=1, anchor=north] {$n+1$} controls (1.0,7.5) and (2.0,4.5) .. (3.0,4.5);
\draw (5.0,3.0) .. node [pos=0, anchor=south] {$m$} controls (6.0,3.0) and (7.0,6.0) .. (8.0,6.0);
\draw (5.0,6.0) .. controls (6.0,6.0) and (7.0,3.0) .. (8.0,3.0);
\draw (5.0,9.0) .. node [pos=0, anchor=south] {$n$} controls (6.0,9.0) and (7.0,9.0) .. (8.0,9.0);
\draw (8.0,3.0) .. controls (9.0,3.0) and (10.0,3.0) .. (11.0,3.0);
\draw (8.0,6.0) .. node [pos=1, anchor=south] {$m$} controls (9.0,6.0) and (10.0,9.0) .. (11.0,9.0);
\draw (8.0,9.0) .. node [pos=1, anchor=south] {$n$} controls (9.0,9.0) and (10.0,6.0) .. (11.0,6.0);

\draw (13.0,4.5) .. node [pos=0, anchor=north] {$m+1$} node [pos=1, anchor=south] {$m+1$} controls (14.0,4.5) and (15.0,7.5) .. (16.0,7.5);
\draw (13.0,7.5) .. node [pos=0, anchor=south] {$n$} node [pos=1, anchor=north] {$n$} controls (14.0,7.5) and (15.0,4.5) .. (16.0,4.5);
\draw (18.0,3.0) .. controls (19.0,3.0) and (20.0,3.0) .. (21.0,3.0);
\draw (18.0,6.0) .. node [pos=0, anchor=south] {$m$} controls (19.0,6.0) and (20.0,9.0) .. (21.0,9.0);
\draw (18.0,9.0) .. node [pos=0, anchor=south] {$n$} controls (19.0,9.0) and (20.0,6.0) .. (21.0,6.0);
\draw (21.0,3.0) .. controls (22.0,3.0) and (23.0,6.0) .. (24.0,6.0);
\draw (21.0,6.0) .. node [pos=1, anchor=south] {$n$} controls (22.0,6.0) and (23.0,3.0) .. (24.0,3.0);
\draw (21.0,9.0) .. node [pos=1, anchor=south] {$m$} controls (22.0,9.0) and (23.0,9.0) .. (24.0,9.0);

\draw (27.5,6.0) .. node [pos=1, anchor=north] {$0$} controls (27.5,4.5) and (27.0,4.5) .. (26.0,4.5);
\draw (27.5,6.0) .. node [pos=1, anchor=south] {$0$} controls (27.5,7.5) and (27.0,7.5) .. (26.0,7.5);
\draw (27.5,6.0) .. node [pos=1, anchor=south] {$0$} controls (27.5,6.0) and (28.0,6.0) .. (29.0,6.0);
\draw (27.5,6.0) \blk {};

\draw (34.5,6.0) .. node [pos=1, anchor=north] {$n+1$} controls (34.5,4.5) and (34.0,4.5) .. (33.0,4.5);
\draw (34.5,6.0) .. node [pos=1, anchor=south] {$n+1$} controls (34.5,7.5) and (34.0,7.5) .. (33.0,7.5);
\draw (34.5,6.0) .. node [pos=1, anchor=south] {$n+1$} controls (34.5,6.0) and (35.0,6.0) .. (36.0,6.0);
\draw (34.5,6.0) \blk {};
\draw (38.0,1.5) .. controls (39.0,1.5) and (40.0,1.5) .. (41.0,1.5);
\draw (38.0,4.5) .. node [pos=0, anchor=south] {$n$} controls (39.0,4.5) and (40.0,7.5) .. (41.0,7.5);
\draw (38.0,7.5) .. controls (39.0,7.5) and (40.0,4.5) .. (41.0,4.5);
\draw (38.0,10.5) .. node [pos=0, anchor=south] {$n$} controls (39.0,10.5) and (40.0,10.5) .. (41.0,10.5);
\draw (42.5,3.0) .. controls (42.5,1.5) and (42.0,1.5) .. (41.0,1.5);
\draw (42.5,3.0) .. controls (42.5,4.5) and (42.0,4.5) .. (41.0,4.5);
\draw (42.5,3.0) .. controls (42.5,3.0) and (43.0,3.0) .. (44.0,3.0);
\draw (42.5,3.0) \blk {};
\draw (42.5,9.0) .. controls (42.5,7.5) and (42.0,7.5) .. (41.0,7.5);
\draw (42.5,9.0) .. controls (42.5,10.5) and (42.0,10.5) .. (41.0,10.5);
\draw (42.5,9.0) .. node [pos=1, anchor=south] {$n$} controls (42.5,9.0) and (43.0,9.0) .. (44.0,9.0);
\draw (42.5,9.0) \blk {};
\path (3.0,6.0) -- node {$:=$} (5.0,6.0);
\path (16.0,6.0) -- node {$:=$} (18.0,6.0);
\path (29.0,6.0) -- node {$:=$} (31.0,6.0);
\node at (31.5,6.0) {$\Zeronet$};
\path (36.0,6.0) -- node {$:=$} (38.0,6.0);
\end{tikzpicture}}
\]
  \end{subfigure}
  \begin{subfigure}{\textwidth}
\[
\scalebox{\scalefact}{\begin{tikzpicture}
\draw (1.5,6.0) .. node [pos=0.5, anchor=south] {$0$} controls (1.5,6.0) and (2.0,6.0) .. (3.0,6.0);
\draw (1.5,6.0) \blk {};

\draw (8.5,6.0) .. node [pos=0.5, anchor=south] {$n+1$} controls (8.5,6.0) and (9.0,6.0) .. (10.0,6.0);
\draw (8.5,6.0) \blk {};

\draw (13.5,4.5) .. controls (13.5,4.5) and (14.0,4.5) .. (15.0,4.5);
\draw (13.5,4.5) \blk {};
\draw (13.5,7.5) .. node [pos=0.5, anchor=south] {$n$} controls (13.5,7.5) and (14.0,7.5) .. (15.0,7.5);
\draw (13.5,7.5) \blk {};

\draw (18.5,6.0) .. node [pos=0.5, anchor=south] {$0$} controls (18.5,6.0) and (18.0,6.0) .. (17.0,6.0);
\draw (18.5,6.0) \blk {};

\draw (25.5,6.0) .. node [pos=0.5, anchor=south] {$n+1$} controls (25.5,6.0) and (25.0,6.0) .. (24.0,6.0);
\draw (25.5,6.0) \blk {};

\draw (30.5,4.5) .. controls (30.5,4.5) and (30.0,4.5) .. (29.0,4.5);
\draw (30.5,4.5) \blk {};
\draw (30.5,7.5) .. node [pos=0.5, anchor=south] {$n$} controls (30.5,7.5) and (30.0,7.5) .. (29.0,7.5);
\draw (30.5,7.5) \blk {};

\draw (35.5,6.0) .. node [pos=1, anchor=south] {$0$} controls (35.5,6.0) and (35.0,6.0) .. (34.0,6.0);
\draw (35.5,6.0) .. node [pos=1, anchor=north] {$0$} controls (35.5,4.5) and (36.0,4.5) .. (37.0,4.5);
\draw (35.5,6.0) .. node [pos=1, anchor=south] {$0$} controls (35.5,7.5) and (36.0,7.5) .. (37.0,7.5);
\draw (35.5,6.0) \blk {};

\draw (42.5,6.0) .. node [pos=1, anchor=south] {$n+1$} controls (42.5,6.0) and (42.0,6.0) .. (41.0,6.0);
\draw (42.5,6.0) .. node [pos=1, anchor=north] {$n+1$} controls (42.5,4.5) and (43.0,4.5) .. (44.0,4.5);
\draw (42.5,6.0) .. node [pos=1, anchor=south] {$n+1$} controls (42.5,7.5) and (43.0,7.5) .. (44.0,7.5);
\draw (42.5,6.0) \blk {};
\draw (47.5,3.0) .. controls (47.5,3.0) and (47.0,3.0) .. (46.0,3.0);
\draw (47.5,3.0) .. controls (47.5,1.5) and (48.0,1.5) .. (49.0,1.5);
\draw (47.5,3.0) .. controls (47.5,4.5) and (48.0,4.5) .. (49.0,4.5);
\draw (47.5,3.0) \blk {};
\draw (47.5,9.0) .. node [pos=1, anchor=south] {$n$} controls (47.5,9.0) and (47.0,9.0) .. (46.0,9.0);
\draw (47.5,9.0) .. controls (47.5,7.5) and (48.0,7.5) .. (49.0,7.5);
\draw (47.5,9.0) .. controls (47.5,10.5) and (48.0,10.5) .. (49.0,10.5);
\draw (47.5,9.0) \blk {};
\draw (49.0,1.5) .. controls (50.0,1.5) and (51.0,1.5) .. (52.0,1.5);
\draw (49.0,4.5) .. controls (50.0,4.5) and (51.0,7.5) .. (52.0,7.5);
\draw (49.0,7.5) .. node [pos=1, anchor=south] {$n$} controls (50.0,7.5) and (51.0,4.5) .. (52.0,4.5);
\draw (49.0,10.5) .. node [pos=1, anchor=south] {$n$} controls (50.0,10.5) and (51.0,10.5) .. (52.0,10.5);

\path (3.0,6.0) -- node {$:=$} (5.0,6.0);
\node at (5.5,6.0) {$\Zeronet$};
\path (10.0,6.0) -- node {$:=$} (12.0,6.0);
\path (20.0,6.0) -- node {$:=$} (22.0,6.0);
\node at (22.5,6.0) {$\Zeronet$};
\path (27.0,6.0) -- node {$:=$} (29.0,6.0);
\path (37.0,6.0) -- node {$:=$} (39.0,6.0);
\node at (39.5,6.0) {$\Zeronet$};
\path (44.0,6.0) -- node {$:=$} (46.0,6.0);
\end{tikzpicture}}
\]
  \end{subfigure}
  \caption{Syntactic sugar for labelled strings}
  \label{fig:SyntacticSugar}
\end{figure*}

\section{Proofs}
\subsection{Proofs of Section~\ref{sec:CCQ}}

\begin{proof}[Proof of Proposition~\ref{pro:syntax}]
  The `only if' direction is a straightforward induction on the derivation of a formula generated by~\eqref{eq:calculus}. The `if' is a trivial induction on derivations obtained from $\{(\top),(R),(\exists),(=),(\wedge),(\mathsf{Sw}_{n,k}),(\mathsf{Id}_n),(\mathsf{Nu}_n)\}$.
\end{proof}

\subsection{Proofs of Section~\ref{sec:GCQ}}
\begin{proof}[Proof of Proposition~\ref{thm:logictodiagrams}]
Easy induction on the derivation of $n\vdash \phi$.
\end{proof}

\begin{proof}[Proof of Proposition~\ref{prop:CCQGCQ}]
Immediate from Proposition~\ref{thm:logictodiagrams} in \S\ref{sec:expressivity} and Proposition~\ref{thm:diagramstologic} and Lemma~\ref{lem:lambdatheta} in Appendix~\ref{s:GCQtoCCQ}.
\end{proof}

\begin{proof}[Proof of Proposition~\ref{prop:GCQCCQ}]
  The translation is given in Figure~\ref{fig:revtrans}. The claim then follows by Proposition~\ref{thm:logictodiagrams} in \S\ref{sec:expressivity} and Proposition~\ref{thm:diagramstologic} and Lemma~\ref{lem:lambdatheta} in Appendix~\ref{s:GCQtoCCQ}.
\end{proof}

\subsection{Proofs of Section~\ref{sec:todiagram}}

\begin{proof}[Proof of Lemma~\ref{lem:precongruence}]
  Follows immediately from the definition of semantics and relational composition / tensor in Figure~\ref{fig:gcqsemantics}.
\end{proof}

\begin{proof}[Proof of Proposition~\ref{pro:smc}]
  For each fixed model the axioms of Figure~\ref{fig:axsmc} are satisfied because the category of relations with monoidal product $\times$ is symmetric monoidal.
\end{proof}

\subsection{Proofs of Section~\ref{sec:axioms}}
\begin{proof}[Proof of Lemma~\ref{lem:gcqpobcbicat}]
For every object $n$, the monoid and comonoid are given by the syntactic sugar $\BcomultL{n}$, $\BcounitL{n}$, $\BmultL{n}$ and $\BunitL{n}$  (Appendix~\ref{app:sugar}). That these satisfy the required laws is easily proven by induction on $n$.
   Note that the definition of $\gcqpobc$ asserts that for every $R \in \Sigma_{n,m}$, 
  $\Rcircuit$ is a lax comonoid morphism in $\gcqpobc$, but the definition of cartesian bicategory requires this for 
  \emph{all} arrows. This can be easily derived by another induction, though.
  \label{prf:gcqpobcbicat}
\end{proof}

\begin{proof}[Proof of Lemma~\ref{lem:ModelsAreFunctors}]
  In the easy direction, to extract a model $\mathcal{M} = (X, \rho)$ from a morphism of cartesian bicategories $\mathcal{F} \from \gcqpobc \to \Rel$,
  define $X := \mathcal{F}(1)$ and let $\rho(R) := \mathcal{F}(R)$ for $R \in \Sigma$,

 In the reverse, given a model $\mathcal{M} = (X, \rho)$, we observe that the semantics map $\densem{\cdot}{\mathcal{M}}$ (Figure~\ref{fig:gcqsemantics}) gives rise to a morphism of cartesian bicategories $\densem{\cdot}{\mathcal{M}} \from \gcqpobc \to \Rel$. To prove that it is well defined and preserves the ordering, one can easily see that the axioms of $\syneq$ and $\synleq$ are sound. By the inductive definition, $\densem{\cdot}{\mathcal{M}}$  preserves composition $\poi$ and tensor $\tns$. Finally, we observe that, by definition, $\densem{\cdot}{\mathcal{M}}$ maps the monoids and comonoids of  $ \gcqpobc$ into those of $\Rel$.
 \label{prf:ModelsAreFunctors}
\end{proof}

\subsection{Proofs of Section~\ref{sec:cospans}}\label{app:cospan}

We first show a proof of Proposition~\ref{prop:Cospanislocally} and then we provide some intermediate results that are necessary for proving Theorem~\ref{thm:hypergraph}. 

\begin{proof}[Proof of Proposition~\ref{prop:Cospanislocally}]
  We need to endow every object with a monoid and comonoid structure, prove these structures to be adjoint and furthermore satisfy the special Frobenius property.
  \begin{enumerate}
    \item Define the monoid/comonoid structure on every object:
  Let $X \in \mathcal{C}$ and define $\mu \from X + X \to X$, $\mu = \langle \id , \id \rangle$ and $\eta \from 0 \to X$ the unique such morphism,
  where $0$ is the initial object.
  Now it is standard that $\mu$ and $\eta$ turn $X$ into a monoid in $\mathcal{C}$. 
  Since there is a monoidal functor $F \from \mathcal{C} \to \Cospanleq{\mathcal{C}}$ sending $f \from X \to Y$ to the cospan
  \[
    \begin{tikzcd}[cramped, sep=small]
      X \arrow{r}{f} & Y & Y \arrow[swap]{l}{\id}
    \end{tikzcd}
  \]
  we see that $F(\mu), F(\eta)$ define a monoid structure on $X$ in $\Cospanleq{\mathcal{C}}$.
  Furthermore, there is a duality operation $\op{\bullet}$ on $\Cospanleq{\mathcal{C}}$ given by turning around the cospan, i.e. mapping
  \[
    \begin{tikzcd}[cramped, sep=small]
      X \arrow{r}{f} & Z & Y \arrow[swap]{l}{g}
    \end{tikzcd}
  \]
  to
  \[
    \begin{tikzcd}[cramped, sep=small]
      Y \arrow{r}{g} & Z & X \arrow[swap]{l}{f}
    \end{tikzcd}
  \]
  Now define the comonoid structure on every object via $\op{F(\mu)}$ and $\op{F(\eta)}$.
  It is easy to see that every morphism in $\Cospanleq{\mathcal{C}}$ is a lax comonoid homomorphism,
  which follows from the fact that every morphism in $\mathcal{C}$ preserves the monoid structure $\mu$, $\eta$.
\item The monoid and comonoid structures are adjoint:
  We prove in general that for $f \from X \to Y$ a morphism in $\mathcal{C}$, $F(f)$ is right-adjoint to $\op{F(f)}$.
  This will in particular imply the adjointness between the comonoid and the monoid structure.
  Let
  \[
    \begin{tikzcd}[cramped, sep=small]
    & P & \\
    Y \arrow{ur}{i} & & \arrow[swap]{ul}{j} Y \\
    & X \arrow{ul}{f} \arrow[swap]{ur}{f} & \\
    \end{tikzcd}
  \]
  be a pushout, then computing $\op{F(f)} \poi F(f)$ yields
  \[
    \begin{tikzcd}[cramped, sep=small]
    & & P & & \\
    & Y \arrow{ur}{i} & & \arrow[swap]{ul}{j} Y & \\
    Y \arrow{ur}{\id_{Y}} & & \arrow{ul}{f} X \arrow[swap]{ur}{f} & & \arrow[swap]{ul}{\id_{Y}} Y \\
    \end{tikzcd}
  \]
  By the universal property of the pushout, there exists a morphism $\eta \from P \to Y$ such that
  \[
    \begin{tikzcd}[cramped, sep=small]
    & P \arrow{dd}{\eta} & \\
    Y \arrow[swap]{dr}{\id_{Y}} \arrow{ur}{i} & & \arrow{dl}{\id_{Y}} \arrow[swap]{ul}{j} Y \\
    & Y & \\
    \end{tikzcd}
  \]
  commutes, therefore $\id_{Y} \leq \op{F(f)} \poi F(f)$.
  Computing $F(f) \poi \op{F(f)}$ on the other hand yields:
  \[
    \begin{tikzcd}[cramped, sep=small]
    & & Y & & \\
    & Y \arrow{ur}{\id_{Y}} & & \arrow[swap]{ul}{\id_{Y}} Y & \\
    X \arrow{ur}{f} & & \arrow{ul}{\id_{Y}} Y \arrow[swap]{ur}{\id_{Y}} & & \arrow[swap]{ul}{f} X \\
    \end{tikzcd}
  \]
  Now clearly the diagram
  \[
    \begin{tikzcd}[cramped, sep=small]
    & X \arrow{dd}{f} & \\
    X \arrow{ur}{\id_{X}} \arrow[swap]{dr}{f} & & \arrow[swap]{ul}{\id_{X}} \arrow{dl}{f} Y \\
    & Y & \\
    \end{tikzcd}
  \]
  commutes, hence $F(f) \poi \op{F(f)} \leq \id_{X}$.
\item The monoid and comonoid enjoy the special Frobenius property:
  For this it suffices to see that
  \[
    \begin{tikzcd}[cramped, sep=small]
      & X & \\
      X + X \arrow{ur}{\mu} & & \arrow[swap]{ul}{\mu} X + X \\
      & \arrow{ul}{\id_X + \mu} X + X + X \arrow[swap]{ur}{\mu + \id_X} \\
    \end{tikzcd}
  \]
  is a pushout. To prove speciality of this Frobenius structure, it suffices to see that the diagram
  \[
    \begin{tikzcd}[cramped, sep=small]
      & X & \\
      X \arrow{ur}{\id_X} & & X \arrow[swap]{ul}{\id_X} \\
      & X + X \arrow{ul}{\mu} \arrow[swap]{ur}{\mu} & \\
    \end{tikzcd}
  \]
  is a pushout as well.
  \end{enumerate}
\end{proof}

We will prove in Theorem~\ref{thm:hypergraph} that $\DiscCospan{\Hyp_\Sigma} \cong \preOrdSyntaxPROP$.
It is convenient to begin with $\Sigma=\varnothing$.
Consider the category $\Fin$: objects are finite ordinals $n=\{0,\dots, n-1\}$ and arrows all functions. Then $\Cospanleq{\Fin}$ is the free preordered cartesian bicategory on the empty signature.

\begin{theorem}\label{thm:emptysignature}
  $\Cospanleq{\Fin} \cong \preOrdSyntaxPROPEmpty$ as preordered cartesian bicategories.
\end{theorem}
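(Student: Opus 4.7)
The plan is to leverage the underlying categorical isomorphism between $\preOrdSyntaxPROPEmpty$ and $\Cospanleq{\Fin}$ as \emph{mere} categories (the $\Sigma = \emptyset$ specialisation of Theorem~3.3 of~\cite{bonchi2016rewriting}, where $\Hyp_{\emptyset}$ coincides with $\Fin$ and the discreteness condition on cospans is automatic), and to promote it to an isomorphism of preordered cartesian bicategories. The candidate iso is the functor $\SynToCsp{\cdot}$ of Figure~\ref{fig:gcqsemanticsgraphs}, with the $\SynToCsp{R}$ clause vacuous. First I would verify that $\SynToCsp{\cdot}$ preserves the preorder: by Proposition~\ref{prop:Cospanislocally} the codomain is a preordered cartesian bicategory, so it suffices to exhibit, for each of the four inequalities $(UC)$, $(CU)$, $(MC)$, $(CM)$ in Figure~\ref{fig:adjoint}, an explicit cospan morphism realising the corresponding $\leq$ in $\Cospanleq{\Fin}$---straightforward in each case.

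The substantive step is then to show that the preorder is \emph{reflected}: if $\SynToCsp{c} \leq \SynToCsp{d}$ in $\Cospanleq{\Fin}$ then $c \leq d$ in $\preOrdSyntaxPROPEmpty$. The hypothesis supplies a cospan morphism $\alpha \colon \SynToCsp{d} \to \SynToCsp{c}$, i.e.\ a function between apices commuting with both legs. I would factor $\alpha = \mu \circ \epsilon$ via the epi--mono factorization in $\Fin$; the intermediate set inherits a cospan structure from $\alpha$, so $\alpha$ decomposes as a surjective cospan morphism followed by an injective one. By transitivity of $\leq$ and the monotonicity of $\poi$ and $\tns$ (Lemma~\ref{lem:precongruence}), it suffices to prove reflection separately in each of these two cases.

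For the \emph{injective} case $\mu \colon C \hookrightarrow A$, the vertices of $A$ outside $\mu(C)$ have empty preimage under both legs, since those legs factor through $\mu$. Using the spider normal form made available by the underlying iso and the spider theorem cited in Example~\ref{ex:derivation}, each such isolated vertex appears as a stand-alone tensor factor of the shape $\Bunit \poi \Bcounit$ of sort $\sort{0}{0}$ in the $A$-side term. Iterated application of axiom $(UC)$ under monotonicity strips these factors off in turn, yielding the $C$-side term on the right of $\leq$. For the \emph{surjective} case $\epsilon \colon B \twoheadrightarrow C$, I would induct on $|B|-|C|$, reducing to a single elementary identification of two distinct vertices $b_1, b_2$ with common image $c$. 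In the spider normal form, $c$ is represented by one multi-legged spider carrying the combined leg-incidences of $b_1$ and $b_2$; applying axiom $(CU)$, namely $\id_1 \leq \Bcounit \poi \Bunit$, on an internal wire of this spider splits it into two sub-spiders joined by a $\Bcounit \poi \Bunit$ bone which, after absorption via the Frobenius laws of Figure~\ref{fig:frobeniusBimonoid}, is exactly the $B$-side term.

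The main technical obstacle is the surjective step---in particular, formalising that a single vertex-identification in a cospan corresponds to precisely one insertion of $(CU)$ followed by Frobenius rewriting. This is handled cleanly once a spider normal form is fixed: each connected component of the apex, together with its leg-incidences, becomes a single multi-legged spider, and the elementary identifications are visibly the reverse of splitting a spider via an inserted $\Bcounit \poi \Bunit$ bone. Everything else in the argument---monoidal coherence, Frobenius equalities, and monotonicity of composition and tensor---is already at our disposal, and the underlying categorical isomorphism guarantees that any term can be brought into the required normal form before reading off the combinatorial argument.
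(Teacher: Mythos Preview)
Your proposal is correct but takes a genuinely different route from the paper. The paper also reduces to showing that $\CspToSyn{\cdot}$ preserves the preorder, but instead of your epi--mono factorisation of the cospan morphism $\alpha$, it decomposes $\alpha$ using the standard fact that every arrow of $\Fin$ is built from the generators $0\to 1$ and $2\to 1$ by sums and composition. The case $\alpha\colon 0\to 1$ forces $n=m=0$ and is immediately axiom $(UC)$; the case $\alpha\colon 2\to 1$ is handled by a compositionality trick---writing both cospans as composites sharing their outer factors---so that the residual middle comparison is literally axiom $(MC)$. Thus the paper uses $(UC)$ and $(MC)$, whereas you use $(UC)$ for the injective part and $(CU)$ for the surjective part, relying on the spider theorem to exhibit a merged spider as two sub-spiders joined by a single wire and then inserting $\id\leq\Bcounit\poi\Bunit$ on that wire. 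Both approaches are sound; the paper's is slightly slicker in that each base case is a one-line axiom application with no Frobenius manipulation, while yours is arguably more conceptual (it mirrors the image factorisation) but needs a bit more care in routing legs when splitting a spider with an arbitrary partition of its boundary. One small correction: the monotonicity of $\poi$ and $\tns$ you need is not Lemma~\ref{lem:precongruence} (which concerns the \emph{semantic} ordering $\leqq$) but simply the fact that the syntactic ordering on $\preOrdSyntaxPROPEmpty$ is \emph{defined} as the smallest precongruence containing the axioms.
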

\begin{proof}
  The translation in Figure~\ref{fig:gcqsemanticsgraphs} defines an isomorphism $\SynToCsp{\cdot}\from \preOrdSyntaxPROPEmpty \to \Cospanleq{\Fin}$ (first three lines).
  The translation $\CspToSyn{\cdot} \from \Cospanleq{\Fin} \to \preOrdSyntaxPROPEmpty$ can be found in~\cite[Theorem 3.3]{bonchi2016rewriting},
  where the authors also prove that it defines an isomorphism between categories, i.e. if one forgets about the ordering.
  It thus suffices to prove, that both translations preserve the ordering.
  For $c,d \in \preOrdSyntaxPROPEmpty$, we have
  \[
    c \leq d \text{ implies } \SynToCsp{c} \leq \SynToCsp{d}
  \]
  by Proposition~\ref{prop:Cospanislocally}. Consider a morphism of cospans
  \[
    \begin{tikzcd}[cramped, sep=small]
      & S \arrow{dd}{\alpha} & \\
      n \arrow{ur} \arrow{dr} & & m \arrow{dl} \arrow{ul} \\
      & T & \\
    \end{tikzcd}
  \]
  We want to prove
  \[
    \CspToSyn{ n \rightarrow T \leftarrow m} \leq \CspToSyn{ n \rightarrow S \leftarrow m}
  \]
Since every function $\alpha \from S \to T$ can be decomposed into sums and compositions of $2 \rightarrow 1$ and $0 \rightarrow 1$ as demonstrated for example
in \cite[VII.5]{mac2013categories}, we can consider only these cases.
In the case $\alpha \from 0 \to 1$, we have $n = m = 0$ and we have to prove
\[
  \raisebox{-2mm}{
\begin{tikzpicture}
\draw (0.25,1.0) .. controls (0.25,1.0) and (0.0,1.0) .. (0.5,1.0);
\draw (0.25,1.0) \blk {};
\draw (0.5,1.0) .. controls (1.0,1.0) and (1.5,1.0) .. (2.0,1.0);
\draw (2.25,1.0) .. controls (2.25,1.0) and (2.5,1.0) .. (2.0,1.0);
\draw (2.25,1.0) \blk {};
\end{tikzpicture}}
\leq \Zeronet
\]
which is axiom $(UC)$.
The case $\alpha \from 2 \to 1$, can be further reduced by the following observation:
Given a diagram 
  \[
    \begin{tikzcd}[cramped, sep=small]
      & 2 \arrow{dd} & \\
      n_1 + n_2 \arrow{ur} \arrow{dr} & & m_1 + m_2 \arrow{dl} \arrow{ul} \\
      & 1 & \\
    \end{tikzcd}
  \]
one easily computes the composite of spans
\[
  \begin{tikzcd}[cramped, sep=small]
    & 2 & & 2 & & 2 & \\
    n_1 + n_2 \arrow{ur} & & 2 \arrow[swap]{ul}{\id} \arrow{ur}{\id} & & 2 \arrow[swap]{ul}{\id} \arrow{ur}{\id} & & \arrow[swap]{ul} m_1 + m_2 \\
  \end{tikzcd}
\]
to be $n_1 + n_2 \rightarrow 2 \leftarrow m_1 + m_2$
and
\[
  \begin{tikzcd}[cramped, sep=small]
    n_1 + n_2 \arrow[swap]{dr} & & 2 \arrow{dl}{\id} \arrow[swap]{dr} & & 2 \arrow{dl} \arrow[swap]{dr}{\id} & & \arrow{dl} m_1 + m_2 \\
    & 2 & & 1 & & 2 & \\
  \end{tikzcd}
\]
to be $n_1 + n_2 \rightarrow 1 \leftarrow m_1 + m_2$
By compositionality, it suffices to consider the case
  \[
    \begin{tikzcd}[cramped, sep=small]
      & 2 \arrow{dd} & \\
      2 \arrow{ur} \arrow{dr} & & 2 \arrow{dl} \arrow{ul} \\
      & 1 & \\
    \end{tikzcd}
  \]
  which corresponds to
  \[
  \raisebox{-4mm}{
\begin{tikzpicture}
\draw (0.75,2.0) .. controls (0.75,1.0) and (0.5,1.0) .. (0.0,1.0);
\draw (0.75,2.0) .. controls (0.75,3.0) and (0.5,3.0) .. (0.0,3.0);
\draw (0.75,2.0) .. controls (0.75,2.0) and (1.0,2.0) .. (1.5,2.0);
\draw (0.75,2.0) \blk {};
\draw (2.25,2.0) .. controls (2.25,2.0) and (2.0,2.0) .. (1.5,2.0);
\draw (2.25,2.0) .. controls (2.25,1.0) and (2.5,1.0) .. (3.0,1.0);
\draw (2.25,2.0) .. controls (2.25,3.0) and (2.5,3.0) .. (3.0,3.0);
\draw (2.25,2.0) \blk {};
\end{tikzpicture}} \leq
  \raisebox{-4mm}{
\begin{tikzpicture}
\draw (0.0,1.0) .. controls (0.5,1.0) and (1.0,1.0) .. (1.5,1.0);
\draw (0.0,3.0) .. controls (0.5,3.0) and (1.0,3.0) .. (1.5,3.0);
\end{tikzpicture}}
  \]
  which is axiom $(MC)$.
\end{proof}

As explained in Section~\ref{sec:cospans}, by virtue of~\cite[Theorem 3.3]{bonchi2016rewriting}, we only need to prove \eqref{eq:ordering}.
The `only-if' part is immediate from Proposition~\ref{prop:Cospanislocally}.

For the `if' part of \eqref{eq:ordering}, we first derive a technical result for disconnected cospans. A cospan of hypergraphs is said to be \emph{disconnected} if it is of the form $\SynToCsp{R_0} \tns \SynToCsp{R_1} \tns \dots \SynToCsp{R_n}$ for arbitrary $R_0, \dots R_n \in \Sigma$. 

\begin{lemma}\label{lemma:disconnected}
Let $n \stackrel{\iota}{\to} E \stackrel{\omega}{\leftarrow} m$ and $n' \stackrel{\iota'}{\to} E' \stackrel{\omega'}{\leftarrow} m'$ be disconnected cospans.
If there are functions $f\from n \to n'$, $g\from m \to m'$ and $h\from E \to E'$ s.t.\ the following commutes
\begin{equation}\label{eq:disconnected}
\raise10pt\hbox{$
\xymatrix@R=10pt@C=40pt{
n \ar[r]^-{\iota} \ar[d]_{f} & E \ar[d]_{h} &  m \ar[d]^{g} \ar[l]_-{\omega} \\
n' \ar[r]_-{\iota'} & E' & \ar[l]^-{\omega'} m'
}$}
\end{equation}
then $\CspToSyn{n \xrightarrow{f\poi \iota'} E' \xleftarrow{g\poi \omega'} m} \leq \CspToSyn{n \xrightarrow{\iota} E \xleftarrow{\omega} m}$.
\end{lemma}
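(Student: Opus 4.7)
The plan is to build an explicit derivation of the claimed inequality inside $\preOrdSyntaxPROP$ from the axioms of Figure~\ref{fig:adjoint}, using the combinatorial data encoded by $h$. First I would exploit disconnectedness to obtain a clean normal form: the vertex sets of $E$ and $E'$ decompose canonically as $n + m$ and $n' + m'$, with $\iota, \omega$ and $\iota', \omega'$ being coproduct inclusions; the syntactic term $\CspToSyn{n \xrightarrow{\iota} E \xleftarrow{\omega} m}$ is then, up to symmetries, the tensor $\bigotimes_i R_i$ of its hyperedge labels, and likewise for $E'$. The commutation hypothesis on $h$ forces $h_V = f + g$ and reduces $h$ to a label-preserving function $h_e$ from the hyperedges of $E$ to those of $E'$, compatible with $f,g$ on sources and targets.

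Next, I would rewrite the LHS of the inequality as a three-part composition $\tilde f \poi \CspToSyn{n' \xrightarrow{\iota'} E' \xleftarrow{\omega'} m'} \poi \tilde g^{\dagger}$, where $\tilde f$ and $\tilde g^{\dagger}$ are the syntactic terms corresponding to the cospans $(n \xrightarrow{f} n' \xleftarrow{\mathrm{id}} n')$ and $(m' \xrightarrow{\mathrm{id}} m' \xleftarrow{g} m)$. By Theorem~\ref{thm:emptysignature}, $\tilde f$ is derivably equivalent to a term built only from $\Bmult$, $\Bunit$ and symmetries, and $\tilde g^{\dagger}$ to one built from $\Bcomult$, $\Bcounit$ and symmetries.

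The core of the proof is an induction on the complexity of $h$, handling two elementary moves that eventually reduce to the purely monoid/comonoid (hyperedge-free) case, which is already covered by Theorem~\ref{thm:emptysignature}. (i) If $E'$ contains a hyperedge $R'_j$ not hit by $h_e$, I would use axiom $(L_1)$ to collapse its target to $\Bcounit$'s and then the adjointness inequalities $(CU), (UC)$ to eliminate the resulting isolated wires; this strictly decreases the number of hyperedges of $E'$ while still leaving a valid instance of the lemma. (ii) Once $h_e$ is surjective but some fibre $h_e^{-1}(R'_j)$ contains several hyperedges of $E$, I would duplicate $R'_j$ the requisite number of times by pushing the $\Bcomult$-wires of $\tilde g^{\dagger}$ through it via $(L_2)$; the $\Bmult$-wires of $\tilde f$ on the opposite side are handled by first bending them into comonoid wires using the Frobenius cup $\Bunit \poi \Bcomult$ and cap $\Bmult \poi \Bcounit$ and then again applying $(L_2)$.

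The main obstacle is the apparent asymmetry of Figure~\ref{fig:adjoint}: only the lax comonoid morphism inequalities $(L_1), (L_2)$ are given as primitives, with no dual counterparts for $\Bmult, \Bunit$. The Frobenius and special Frobenius laws, which allow wire-bending between $\Bmult$- and $\Bcomult$-structures via the cup and cap, are therefore essential. The delicate technical point is keeping track of interface maps and permutations through these bends so that each reduction step still matches the hypotheses of the lemma for a strictly smaller morphism, ensuring that the induction terminates at the monoid/comonoid-only case.
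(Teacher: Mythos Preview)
Your high-level plan is sound and close to the paper's, but the paper organises the reduction differently and avoids the complication you flag at the end.

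First, instead of inducting on $h$, the paper decomposes by $E'$: writing $E'$ as a coproduct $1+\dots+1$ of its hyperedges and pulling the whole diagram~\eqref{eq:disconnected} back along each coproduct inclusion, stability of coproducts under pullback in $\Set$ reduces the lemma to the case where $E'$ consists of a single hyperedge, say with label $R\in\Sigma_{i,j}$. Every hyperedge of $E$ then also carries label $R$, and a further short induction reduces to the two cases $|E|=0$ and $|E|=2$. In particular the base cases are not hyperedge-free, so Theorem~\ref{thm:emptysignature} plays no role in this lemma (it is used only later, in the proof of Theorem~\ref{thm:hypergraph}).

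Second, and more importantly, the asymmetry between $(L_1),(L_2)$ and their monoid duals is handled not by Frobenius bending but directly by the adjointness axiom $(MC)$. Concretely, for $|E|=2$ the left-hand side is $\BmultL{i}\poi R\poi\BcomultL{j}$; one application of $(L_2)$ turns this into $\BmultL{i}\poi\BcomultL{i}\poi(R\tns R)$, and then $(MC)$ collapses $\BmultL{i}\poi\BcomultL{i}$ to the identity, giving $R\tns R$, which is the right-hand side. The $|E|=0$ case is the one-line derivation $(L_1)$ followed by $(UC)$. Your cup/cap manoeuvre can in principle be made to work, but it is a detour: $(MC)$ is exactly the axiom that absorbs the $\Bmult$-structure left on the input side after pushing $\Bcomult$ through $R$, so no wire-bending or second application of $(L_2)$ is needed.
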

\begin{proof}
  First note that in the case of disconnected cospans, $h$ uniquely determines $f$ and $g$.
  Furthermore, to give a hypergraph homomorphism $h \from E \to E'$ is the same as giving a label-preserving function between their sets of hyperedges.
  We will therefore identify $E$ and $E'$ with their sets of hyperedges.
  It is now sufficient to consider the case where $E'$ is a singleton. Indeed, in general, $E'$ as a finite set is a coproduct $E'\cong 1+1+\dots+1$ (with possibly different labels). Let $n'_k \overset{\iota'_k}{\ra} 1 \overset{\omega'_k}{\la} m'_k$ be the $k$th hyperedge with its interface. Let $i_k$ denote the $k$th injection, then pulling back~\eqref{eq:disconnected} along $i_k$ yields
\[
\xymatrix@R=10pt@C=30pt{
n_k \ar[r]^-{\iota_k} \ar[d]_{f_k}& E_k \ar[d]_{h_k} & \ar[l]_-{\omega_k} m_k \ar[d]^{g_k} \\
n'_k \ar[r]_-{\iota'_k} & 1 & \ar[l]^-{\omega'_k} m'_k
}
\]
Since coproducts are stable under pullback in the category of sets, we have that $n\ra E \la m$ is, as a cospan, isomorphic
to $\bigoplus_k (n_k \ra E_k \la m_k)$. Establishing that, for each $k$, 
$\CspToSyn{n_k \xrightarrow{f_k\poi \iota'_k} 1 \xleftarrow{g_k\poi \omega'_k} m_k} \leq \CspToSyn{n_k \xrightarrow{\iota_k} E_k \xleftarrow{\omega_k} m_k}$ thus suffices to conclude the statement of the lemma:
\begin{align*}
\CspToSyn{n \xrightarrow{f\poi \iota'} E' \xleftarrow{g\poi \omega'} m}  
&\cong \CspToSyn{ \bigoplus_k n_k \xrightarrow{f_k\poi \iota'_k} 1 \xleftarrow{g_k\poi \omega'_k} m_k} \\
&\cong \bigoplus_k \CspToSyn{  n_k \xrightarrow{f_k\poi \iota'_k} 1 \xleftarrow{g_k\poi \omega'_k} m_k} \\
&\leq \bigoplus_k \CspToSyn{n_k \xrightarrow{\iota_k} E_k \xleftarrow{\omega_k} m_k} \\
&\cong \CspToSyn{\bigoplus_k n_k \xrightarrow{\iota_k} E_k \xleftarrow{\omega_k} m_k} \\
&\cong \CspToSyn{n \xrightarrow{\iota} E \xleftarrow{\omega} m}.
\end{align*}

We thus let $n' \to E' \leftarrow m'$ consist of a single hyperedge with label $R\in \Sigma_{i,j}$, yielding 
$$
\CspToSyn{n' \xrightarrow{\iota'} E' \xleftarrow{\omega'} m'}=  \lower12pt\hbox{$\includegraphics[height=1cm]{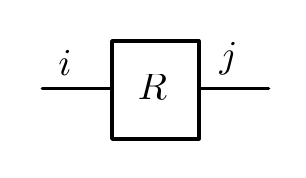}$}$$ 
and thus $n'=i$ and $m'=j$.
Since $h$ is label-preserving, every hyperedge in $E$ has to be labelled with $R$ as well, so we can forget about labels now.

Turning to $E$, we note that it suffices to consider cases where the size $|E|$ of $E$ is either 
 $0$ or $2$, yielding diagrams
\begin{equation}\label{eq:caseone}
\raise10pt\hbox{$
\xymatrix@R=10pt@C=30pt{
  0 \ar[r] \ar[d]_{\text{!`}}& 0 \ar[d]_{\text{!`}} & \ar[l] 0 \ar[d]^{\text{!`}} \\
{i} \ar[r]_-{\iota'} & 1 & \ar[l]^-{\omega'} {j}
}$}, \text{ and}
\end{equation}
\begin{equation}\label{eq:casetwo}
\raise10pt\hbox{$
\xymatrix@R=10pt@C=30pt{
i+i \ar[r]^-{\iota' + \iota'} \ar[d]_{\nabla}& 2 \ar[d]_{!} & \ar[l]_-{\omega' + \omega'} j+j \ar[d]^{\nabla} \\
{i} \ar[r]_-{\iota'} & 1 & \ar[l]^-{\omega'} {j}
}$}.
\end{equation}
Indeed, the result for $|E| \geq 2$ can be obtained inductively like this:
For $|E| = 2$ this is~\eqref{eq:casetwo}, for $|E| = n + 2 > 2$, consider the diagram:
\begin{equation}
\raise10pt\hbox{$
\xymatrix@R=10pt@C=50pt{
  n \times i + i+i \ar[r]^-{n \times \iota' + \iota' + \iota'} \ar[d]_{\id_{n \times i} + \nabla}& n + 2 \ar[d]_{\id_{n} + !} &
  \ar[l]_-{n \times \omega' + \omega' + \omega'} n \cdot j + j+j \ar[d]^{\id_{n \times j} + \nabla} \\
  {n \times i + i} \ar[r]_-{n \times \iota' + \iota'} & n + 1 & \ar[l]^-{n \times \omega' + \omega'} {n \cdot j + j}
}$}.
\end{equation}
which is the coproduct of~\eqref{eq:casetwo} with the identity on the disconnected cospan with $n$ edges. The bottom row is taken care of by induction.
\medskip

We are left with only two cases to check.
For~\eqref{eq:caseone},  
\[\CspToSyn{0 \ra 0 \la 0}\quad = \quad  \Zeronet, \quad \text{ and}\]
\[\CspToSyn{0 \ra 1 \la 0}\quad = \lower12pt\hbox{$\includegraphics[height=1cm]{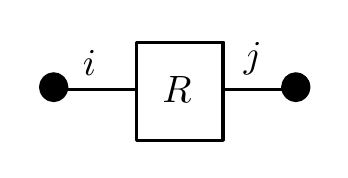}$}.\] 
The following derivation thus suffices. 
\begin{equation*}
  \lower12pt\hbox{$\includegraphics[height=1cm]{graffles/Rbone.pdf}$} \overset{(L_1)}{\leq} \lower5pt\hbox{$\includegraphics[height=0.7cm]{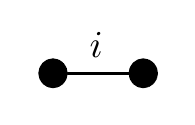}$} \overset{(UC)}{\leq}  \lower11pt\hbox{$\includegraphics[height=1cm]{graffles/empty.pdf}$}.
\end{equation*}

For~\eqref{eq:casetwo}, 
\[
\CspToSyn{i+i \ra 1 \la j+j}= \lower12pt\hbox{$\includegraphics[height=1cm]{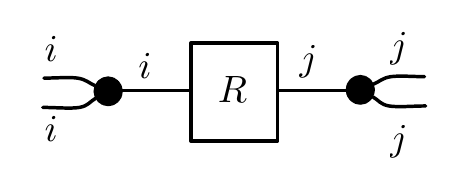}$}, 
\text{ and} 
\]
\[
\CspToSyn{i+i \xrightarrow{\iota' + \iota'} 2 \xleftarrow{\omega' + \omega'} j+j}= \lower18pt\hbox{$\includegraphics[height=1.5cm]{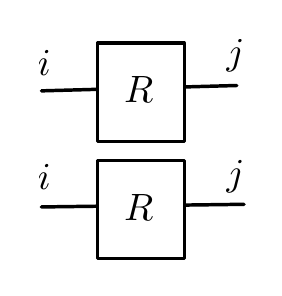}$}.
\]
The following derivation thus completes the proof:
\begin{equation*}
  \lower12pt\hbox{$\includegraphics[height=1cm]{graffles/der1.pdf}$} \overset{(L_2)}{\leq}  \lower18pt\hbox{$\includegraphics[height=1.5cm]{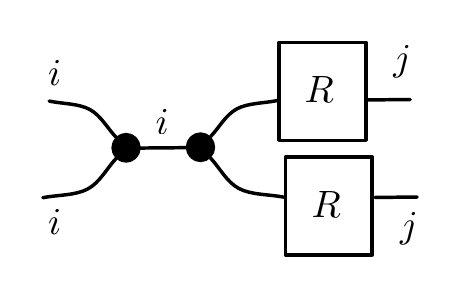}$} \overset{(MC)}{\leq}  \lower18pt\hbox{$\includegraphics[height=1.5cm]{graffles/der3.pdf}$}
\end{equation*}
\end{proof}

We have now all the ingredients to prove Theorem~\ref{thm:hypergraph}.

\begin{proof}[Proof of Theorem~\ref{thm:hypergraph}]
The proof relies on a result appearing in the proof of Theorem 3.3 in~\cite{bonchi2016rewriting}: every discrete cospan of hypergraphs $n\stackrel{\iota}{\to} G \stackrel{\omega}{\leftarrow} m$ can be written as the composition
\begin{equation*}
\xymatrix@R=11pt@C=10pt{
& G_V & & G_V \tns \tilde{E} & & G_V\\
n \ar[ur]^{\iota} & & \ar[ul]_{[id, j]} G_V+\tilde{n} \ar[ur]_{id \tns i} & & \ar[ul]^{id \tns o} G_V + \tilde{m} \ar[ur]^{[id, p]} & & \ar[ul]_\omega m
}
\end{equation*}
where $\tilde{n}\stackrel{i}{\to} \tilde{E} \stackrel{o}{\leftarrow} \tilde{m}$ is disconnected, $G_V$ is the set of vertices of $G$\footnote{Since cospans are taken up-to isomorphism and since $G$ is finite one can always assume, without loss of generality, that $G_V$ is a finite ordinal.}, $j\from \tilde{n} \to G_V$ and $j\from \tilde{m} \to G_V$ maps the vertices of $\tilde{n}\to \tilde{E} \leftarrow \tilde{m}$ into those of $G$. An example is shown below.
\[
\includegraphics[height=37pt]{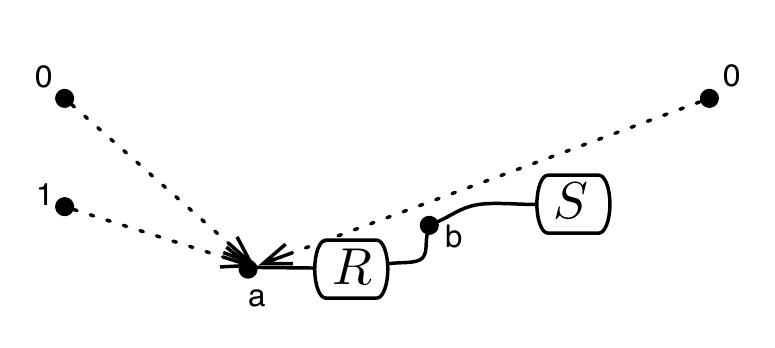} \quad \raisebox{17pt}{=} \quad\includegraphics[height=37pt]{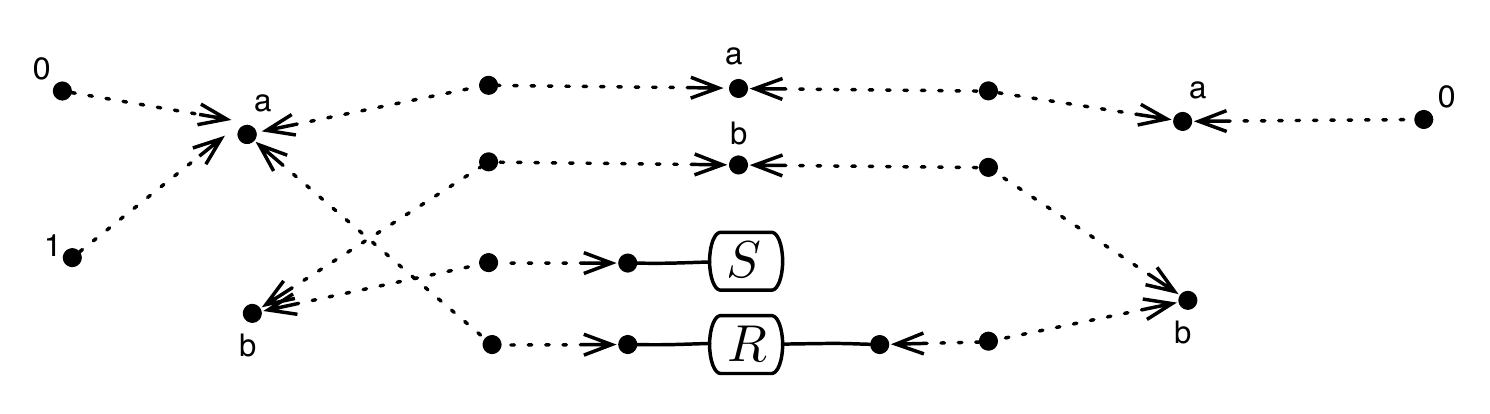}
\]

From the previous discussion, we only need to prove the right-to-left implication of \eqref{eq:ordering}. We will show that if  $  n \to G' \leftarrow m\leq n\to G \leftarrow m$ then $ \CspToSyn{n \to G' \leftarrow m}  \leq   \CspToSyn{n\to G \leftarrow m}  $.

Assume now that $ n \stackrel{\iota'}{\to} G' \stackrel{\omega'}{\leftarrow} m\leq    n\stackrel{\iota}{\to} G \stackrel{\omega}{\leftarrow} m  $, i.e., there exists an $f\from G \to G'$ such that $f \iota = \iota'$ and $f \omega =\omega'$. The morphism $f$ induces $f_V \from G_V \to G'_V$, $f_E \from \tilde{E} \to \tilde{E}'$, $f_{\tilde{n}} \from \tilde{n} \to \tilde{n'}$ and $f_{\tilde{m}} \from \tilde{m} \to \tilde{m'}$ making the following commute.
\begin{equation*}
\xymatrix@R=11pt@C=10pt{
& G_V \ar[dd]|{f_V} & & G_V \tns \tilde{E} \ar[dd]|{f_V \tns f_E} & & G_V \ar[dd]|{f_V}\\
n \ar[dr]_{\iota'} \ar[ur]^{\iota} & & \ar[ul]_{[id, j]} G_V+\tilde{n} \ar[dd]|{f_V\tns f_{\tilde{n}}} \ar[ur]^{id \tns i} & & \ar[ul]_{id \tns o} G_V + \tilde{m} \ar[dd]|{f_V\tns f_{\tilde{m}}} \ar[ur]^{[id, p]} & & \ar[ul]_\omega m \ar[dl]^{\omega'} \\
& G'_V & & G'_V \tns \tilde{E'} & & G'_V\\
  & & \ar[ul]^{[id, j']} G'_V +\tilde{n'} \ar[ur]_{id \tns i'} & & \ar[ul]^{id \tns o'} G'_V + \tilde{m'} \ar[ur]_{[id, p']} & & \\
}
\end{equation*}
From the commutativity of the above diagram, one has:

$$\begin{array}{lrclr}
 (\gamma_1\df) & n\to G'_V   \leftarrow G_V+\tilde{n}  & \leq & n \to G_V \leftarrow G_V+\tilde{n} & (\dfop \delta_1) \\ 
 (\gamma_2\df) & G_V+\tilde{m}  \to     G'_V \leftarrow m  & \leq & G_V+\tilde{m} \to G_V \leftarrow m & (\dfop \delta_2)  \\
 (\gamma_3\df) & G_V \to G'_V \leftarrow G_V & \leq & G_V \to G_V \leftarrow G_V & (\dfop \delta_3) \\
 (\gamma_4\df) & \tilde{n}   \to   \tilde{E'}  \leftarrow \tilde{m}  & \leq & \tilde{n} \to \tilde{E} \leftarrow \tilde{m}  & (\dfop \delta_4)
 \end{array}$$

Since the first three inequations only involve sets and functions, one can use the conclusion of Theorem~\ref{thm:emptysignature} and deduce that: 
\begin{eqnarray*}
\CspToSyn{\gamma_1} &\leq& \CspToSyn{\delta_1} \\ 
\CspToSyn{\gamma_2}
& \leq &\CspToSyn{\delta_2 } \\
 \CspToSyn{\gamma_3}& \leq& \CspToSyn{\delta_3 }
 \end{eqnarray*}
From the fourth inequation, via Lemma~\ref{lemma:disconnected}, one obtains
\begin{eqnarray*}
\CspToSyn{\gamma_4}
& \leq & \CspToSyn{\delta_4 }
\end{eqnarray*}
and conclude as follows.
\begin{align*}
\CspToSyn{n \to G' \leftarrow m} 
&= \CspToSyn{\gamma_1 \poi (\gamma_3 \tns \gamma_4 ) \poi \gamma_2} \\
&= \CspToSyn{\gamma_1} \poi (\CspToSyn{\gamma_3} \tns \CspToSyn{\gamma_4} ) \poi \CspToSyn{\gamma_2} \\
&\leq \CspToSyn{\delta_1} \poi (\CspToSyn{\delta_3} \tns \CspToSyn{\delta_4} ) \poi \CspToSyn{\delta_2}  \\
&= \CspToSyn{\delta_1 \poi (\delta_3 \tns \delta_4 ) \poi \delta_2} \\
&= \CspToSyn{n \to G \leftarrow m}
 \end{align*}
\end{proof}

\subsection{Proof of Section~\ref{sec:CompSpan}}
\label{app:CompSpan}

\begin{proof}[Proof of Lemma~\ref{lem:SpanleqCartBicat}]
Immediate from Proposition~\ref{prop:Cospanislocally} by duality. 
\end{proof}

\begin{proof}[Proof of Proposition~\ref{pro:bijcorr}]
  As stated in the main text, $\mathcal{M}$ is uniquely determined by the set $\mathcal{M}(1)$ and, for each $R\in \Sigma_{n,m}$, a span $\mathcal{M}(R)\colon \mathcal{M}(1)^n \to \mathcal{M}(1)^m$. This data is that of a (possibly infinite) hypergraph (Definition~\ref{def:hyp}). 
\end{proof}

\begin{proof}[Proof of Proposition~\ref{prop:universal}]
By definition, $\mathcal{U}_{G}(1) = G_V$ and $\mathcal{U}_{G}(\SynToCsp{R})=(G_V)^n \stackrel{s_R}{\leftarrow} G_R \stackrel{t_R}{\to} (G_V)^m$ 
for each $R\in \Sigma_{n,m}$. During the proof, we also use implicitly the fact that $(G_V)^n$ is $\iHyp_{\Sigma}[n,G]$.

The conclusion of Theorem~\ref{thm:hypergraph} allows us to argue by induction on $n\stackrel{\iota}{\to} G' \stackrel{\omega}{\leftarrow} m$.
The base cases are $\SynToCsp{\Bcomult}$, $\SynToCsp{\Bmult}$, $\SynToCsp{\Bcounit}$, $\SynToCsp{\Bunit}$, $\SynToCsp{\idone}$, $\SynToCsp{\dsymNet}$ and $\SynToCsp{R}$. Let us consider the last of these, where  $n\stackrel{\iota}{\to} G' \stackrel{\omega}{\leftarrow} m$ is
$$
\SynToCsp{R} =\cgr[height=1.5cm]{graffles/basiccospanFZ.pdf}\text{.}$$

Any homomorphism $f\from G' \to G$ maps its single hyperedge to an $R$-hyperedge of $G$, call it $e_f$, the $n$ vertices in the image of $\iota$ to the source of $e_f$ ($\iota \poi f = s_R(e_f)$) and the $m$ vertices in the image of $\omega$ to the target of $e_f$ ($\omega \poi f = t_R(e_f)$). 
This means that the following diagram commutes.
\begin{equation*}
\xymatrix@R=5pt@C=30pt{
      & G_R  \ar@/^/[dr]^{t_R} \ar@/_/[dl]_{s_R}   & \\
      {(G_V)}^n & & {(G_V)}^m  \\
      &  \iHyp_{\Sigma}[G',G] \ar@/_/[ur]_(0.6){\omega \poi - } \ar@/^/[ul]^(0.6){\iota \poi - }  \ar[uu]^{e_{-}} & \\
}
\end{equation*}
The function $e_{-} \colon  \iHyp_{\Sigma}[G',G] \to G_R$ is clearly an isomorphism of spans.
The other base cases are simpler or, as stated in the main text, follow from the fact that $\iHyp_{\Sigma}[\_, G]$ maps colimits to limits, which also
immediately implies the inductive case. Nevertheless, we spell out the full details of one inductive case here. Let
\[
n\xrightarrow{\iota} G' \xleftarrow{\omega} m = (n\xrightarrow{\iota_1} G_1 \xleftarrow{\omega_1} k) \poi (k\xrightarrow{\iota_2} G_2 \xleftarrow{\omega_1} m).
\]
By definition, there are morphisms $a$ and $b$ such that the following commutes and the central region is a pushout.
$$\xymatrix@R=5pt@C=25pt{
      n \ar@/_/[drr]|{\iota} \ar[r]^{\iota_1} &  G_1  \ar[rd]|a& k  \ar[l]_{\omega_1} \ar[r]^{\iota_2} & G_2\ar[ld]|b  & m \ar[l]_{\omega_2} \ar@/^/[dll]|{\omega} \\
 & & G' 
}$$
By the induction hypothesis:
\begin{align*}
\mathcal{U}_{G}(n\xrightarrow{\iota_1} G_1 \xleftarrow{\omega_1} k) &= {(G_V)}^n \xleftarrow{\iota_1 \poi -} \iHyp_{\Sigma}[G_1,G] \xrightarrow{\omega_1 \poi -}  {(G_V)}^k\text{, and}\\
\mathcal{U}_{G}(k\xrightarrow{\iota_2} G_2 \xleftarrow{\omega_2} m) &=  {(G_V)}^k \xleftarrow{\iota_2 \poi -} \iHyp_{\Sigma}[G_2,G] \xrightarrow{\omega_2 \poi -} {(G_V)}^m\text{.}
\end{align*}
As composition in $\Span{\Set}$ is given by pullback, by functoriality $\mathcal{U}_{G}(n\xrightarrow{\iota} G' \xleftarrow{\omega} m)$ is the span ${(G_V)}^n \leftarrow A \to {(G_V)}^m$ below
$$\xymatrix@R=5pt@C=10pt{
      & & A \ar@/^/[dr]^(0.4){\pi_2} \ar@/_/[dl]_(0.4){\pi_1}   & \\
      {(G_V)}^n &  \iHyp_{\Sigma}[G_1,G] \ar[l]^(0.6){\iota_1 \poi -} \ar[r]_(0.6){\omega_2 \poi -}  & {(G_V)}^k  &  \iHyp_{\Sigma}[G_2,G] \ar[l]^(0.6){\iota_2 \poi -} \ar[r]_(0.6){\omega_2 \poi -} & {(G_V)}^m
}$$
where $$A=\{(f,g)\in \iHyp_{\Sigma}[G_1,G] \times \iHyp_{\Sigma}[G_2,G] \mid \omega_1\poi f= \iota_2 \poi g\}$$
that is, the set of all pairs of morphisms $f\colon G_1 \to G$ and $g\colon G_2\to G$ that agrees on the common boundary $k$.
Our aim is to prove that the above span is isomorphic to ${(G_V)}^n \xleftarrow{\iota \poi -} \iHyp_{\Sigma}[G',G] \xrightarrow{\omega \poi -} {(G_V)}^m$: intuitively, this means that pairs of morphisms $(f,g)$ in $A$ are in one-to-one correspondence with morphisms in $\iHyp_{\Sigma}[G',G]$.

We start by defining $\bar{\cdot}\colon A \to \iHyp_{\Sigma}[G',G]$. For $(f,g)\in A$, i.e.,  $\omega_1 \poi f= \iota_2 \poi g$, the morphism $\overline{(f,g)}\colon G'\to G$ is given by the universal property of the pushout $G'$.
$$\xymatrix@R=5pt@C=15pt{
      n \ar@(dr,l)[drr]|{\iota} \ar[r]^{\iota_1} &  G_1 \ar@(dl,l)[dddr]|{f} \ar[rd]|a& k  \ar[l]_{\omega_1} \ar[r]^{\iota_2} & G_2 \ar@(dr,r)[dddl]|{g} \ar[ld]|b  & m \ar[l]_{\omega_2} \ar@(dl,r)[dll]|{\omega} \\
 & & G' \ar@{.>}[dd]|(0.4){\overline{(f,g)}} \\\\
 & & G'
}$$
Observe that $\iota \poi \overline{(f,g)}=\iota_1 \poi f$ and that $\omega \poi \overline{(f,g)}=\omega_2 \poi g$. This means that the following diagram commutes.
$$\xymatrix@R=5pt@C=30pt{
      & A  \ar[dd]^{\bar{\cdot}}  \ar@/^/[dr]^{\pi_1\poi (\omega_2 \poi - )} \ar@/_/[dl]_{\pi_1\poi (\iota_1 \poi - )}   & \\
      {(G_V)}^n & & {(G_V)}^m  \\
      &  \iHyp_{\Sigma}[G',G] \ar@/_/[ur]_(0.6){\omega \poi - } \ar@/^/[ul]^(0.6){\iota \poi - } & \\
}
$$
It is easy to see that $\bar{\cdot}\colon A \to \iHyp_{\Sigma}[G',G]$ is an isomorphism: its inverse maps each $h\colon G'\to G$ to $(a;h,b;h)\in A$. 

This concludes the proof for the case of composition. The case of tensor is similar but uses the universal property of coproducts rather than of pushouts.
\end{proof}

\subsection{Proofs of Section~\ref{sec:rel}}\label{app:proofsCompSpan}

In this appendix, we first show that $\PosetCat{\mathcal{(\cdot)}}$ from Definition \ref{def:reduction} is a functor and then prove Lemmas and Propositions of Section~\ref{sec:rel}.

Observe that we have a canonical functor that mediates between a preorder-enriched category $\mathcal{C}$ and its poset-reduction $\PosetCat{\mathcal{C}}$.
This functor $ \mathcal{A}_{\mathcal{C}} \from \mathcal{C} \to \PosetCat{\mathcal{C}}$ is identity on objects and sends a morphism in $\mathcal{C}$ to
its $\sim$-equivalence class in $\PosetCat{\mathcal{C}}$. We will omit the subscript on $\mathcal{A}$ whenever possible.

An immediate consequence of the definition is that $\mathcal{A}$ preserves and reflects the ordering in the following sense:
\begin{lemma}\label{lem:refl}
For $\mathcal{C}$ a preorder-enriched category, and $x,y$ morphisms in $\mathcal{C}$, we have
$\mathcal{A}(x)\leq \mathcal{A}(y)$ if and only if
$x \leq y$.\qed
\end{lemma}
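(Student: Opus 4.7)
The plan is to unwind the definition of the ordering on $\PosetCat{\mathcal{C}}$ and show that both directions are essentially immediate consequences of the fact that $\sim = \leq \cap \geq$ is contained in $\leq$ together with transitivity of the preorder.

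First I would make explicit the preorder on $\PosetCat{\mathcal{C}}$ that was left implicit in Definition~\ref{def:reduction}: for $\sim$-equivalence classes $[x]$ and $[y]$ of parallel morphisms in $\mathcal{C}$, set $[x] \leq [y]$ iff $x' \leq y'$ for some (equivalently, every) choice of representatives $x' \in [x]$, $y' \in [y]$. Well-definedness is a one-line check: if $x \sim x'$ and $y \sim y'$ and $x \leq y$, then $x' \leq x \leq y \leq y'$. Antisymmetry of this relation on equivalence classes follows directly from the definition of $\sim$, so $\PosetCat{\mathcal{C}}$ really is poset-enriched.

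With this in place, the forward implication of the lemma is definitional: if $x \leq y$ in $\mathcal{C}$, then $\mathcal{A}(x) = [x] \leq [y] = \mathcal{A}(y)$ by the very definition of the ordering on the quotient. For the reverse implication, suppose $\mathcal{A}(x) \leq \mathcal{A}(y)$, i.e.\ $[x] \leq [y]$. By definition this provides representatives $x' \sim x$ and $y' \sim y$ with $x' \leq y'$. Since $\sim \subseteq \leq$, we have $x \leq x'$ and $y' \leq y$, so by transitivity $x \leq x' \leq y' \leq y$.

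There is no real obstacle here; the only mild point of care is ensuring the ordering on $\PosetCat{\mathcal{C}}$ is set up so that the quotient map both preserves and reflects $\leq$. Once phrased as above, the proof is essentially a one-line application of transitivity together with $\sim \subseteq \leq$.
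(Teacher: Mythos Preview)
Your proof is correct and takes essentially the same approach as the paper: both arguments boil down to the observation that if representatives $x' \sim x$ and $y' \sim y$ satisfy $x' \leq y'$, then transitivity together with $\sim \subseteq \leq$ yields $x \leq x' \leq y' \leq y$. The paper compresses this into a single line, while you additionally spell out the (implicit) definition of the ordering on $\PosetCat{\mathcal{C}}$ and its well-definedness, which is a welcome clarification.
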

\begin{proof}
  This follows from the observation that $x \leq y$ if and only if $a \leq b$ for some $a \geq x$ and $b \leq y$.
\end{proof}

The functors $\mathcal{A}$ exhibit the following universal property:
\begin{lemma}
  For every preordered functor $F \from \mathcal{C} \to \mathcal{D}$ between the preordered category $\mathcal{C}$ and poset-enriched category $\mathcal{D}$,
  there is a unique poset-enriched functor $G \from \PosetCat{\mathcal{C}} \to \mathcal{D}$ such that the following diagram commutes:
  \[
  \begin{tikzcd}
    \mathcal{C} \arrow{r}{F} \arrow[swap]{d}{\mathcal{A}_{\mathcal{C}}} & \mathcal{D} \\
    \PosetCat{\mathcal{C}} \arrow[swap]{ur}{G}
  \end{tikzcd}
  \]
  In particular, for every preordered functor $H \from \mathcal{C} \to \mathcal{C}'$ for a preorder-enriched category $\mathcal{C}'$, there is a unique functor
  $\PosetCat{H} \from \PosetCat{\mathcal{C}} \to \PosetCat{\mathcal{C}'}$ such that the following commutes:
  \[
  \begin{tikzcd}
    \mathcal{C} \arrow{r}{H} \arrow[swap]{d}{\mathcal{A}_{\mathcal{C}}} & \mathcal{C'} \arrow{d}{\mathcal{A}_{\mathcal{C'}}} \\
    \PosetCat{\mathcal{C}} \arrow{r}{\PosetCat{H}} & \PosetCat{\mathcal{C'}} \\
  \end{tikzcd}
  \]
  \label{lem:PosetCatUnivProp}
\end{lemma}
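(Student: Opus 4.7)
The plan is to build $G$ directly from the data of $F$, using the poset-enrichment of $\mathcal{D}$ to witness well-definedness on equivalence classes, and then derive the second half as a special case of the first.

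First I would define $G$ on objects by $G(X) \df F(X)$ (recall $\PosetCat{\mathcal{C}}$ and $\mathcal{C}$ share the same objects) and on a morphism $[x]_\sim$ in $\PosetCat{\mathcal{C}}$ by $G([x]_\sim) \df F(x)$. The key well-definedness check: if $x \sim y$ in $\mathcal{C}$, then $x \leq y$ and $y \leq x$, so because $F$ preserves the preorder we have $F(x) \leq F(y)$ and $F(y) \leq F(x)$ in $\mathcal{D}$; since $\mathcal{D}$ is \emph{poset}-enriched, antisymmetry forces $F(x) = F(y)$. Functoriality of $G$ is then immediate from functoriality of $F$ together with the fact that composition in $\PosetCat{\mathcal{C}}$ is inherited from $\mathcal{C}$ (which is well-defined because $\sim$ is a congruence, as noted in Definition~\ref{def:reduction}). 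Preservation of the order follows via Lemma~\ref{lem:refl}: if $[x] \leq [y]$ in $\PosetCat{\mathcal{C}}$, then $x \leq y$ in $\mathcal{C}$, hence $F(x) \leq F(y)$ in $\mathcal{D}$. Commutativity $G \circ \mathcal{A}_{\mathcal{C}} = F$ holds by construction, and uniqueness is forced: since $\mathcal{A}_{\mathcal{C}}$ is identity on objects and surjective on morphisms, any $G'$ satisfying the equation must agree with $G$.

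For the second claim, I would simply instantiate the first part with $F \df \mathcal{A}_{\mathcal{C}'} \circ H$, which is a preordered functor from $\mathcal{C}$ into the poset-enriched category $\PosetCat{\mathcal{C}'}$ (as a composition of preorder-preserving functors). The universal property yields a unique poset-enriched functor $\PosetCat{H} \df G \colon \PosetCat{\mathcal{C}} \to \PosetCat{\mathcal{C}'}$ with $\PosetCat{H} \circ \mathcal{A}_{\mathcal{C}} = \mathcal{A}_{\mathcal{C}'} \circ H$, which is precisely the commuting square required.

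I do not anticipate any real obstacle: the only substantive point is the use of antisymmetry in $\mathcal{D}$ to collapse the preorder images, and the rest is bookkeeping. Everything else (functoriality, monotonicity, uniqueness) is forced by the defining equation $G \circ \mathcal{A}_{\mathcal{C}} = F$ together with $\mathcal{A}_{\mathcal{C}}$ being bijective on objects and surjective on hom-sets.
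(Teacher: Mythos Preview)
Your proposal is correct and follows essentially the same approach as the paper: define $G([f]) = F(f)$, use the poset-enrichment of $\mathcal{D}$ for well-definedness, appeal to $\sim$ being a congruence for functoriality, and invoke surjectivity of $\mathcal{A}_{\mathcal{C}}$ for uniqueness. Your version is in fact more thorough than the paper's, which leaves the antisymmetry argument and the derivation of the second claim from the first entirely implicit.
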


\begin{proof}
  For a morphism $f \in \mathcal{C}$ let $[f]$ denote the equivalence class of $f$ modulo $\sim$.
  Then setting $G([f]) = F(f)$ is well-defined, since $\mathcal{D}$ is a poset-enriched category. $G$ defines a functor since $\sim$ is a congruence, hence
  compatible with composition. Since $A_{C}$ is surjective on objects and morphisms, there can be at most one such functor $G$, hence $G$ is unique.
\end{proof}

In other words, we get a function, $\PosetCat{\left(\cdot\right)}$, that turns functors between preorder-enriched categories into functors between the
associated poset-enriched ones.
To prove that this assignment is functorial, namely that it preserves identities and composition is trivial by the above definition.

\begin{proof}[Proof of Proposition~\ref{pro:spantilde}]
  We recall a well-known construction of the ordinary category of relations: a span $X\xleftarrow{f} A \xrightarrow{g} Y$ induces a relation $R_A \subseteq X\times Y$ by factorising $A\xrightarrow{[f,g]}X\times Y$ as a surjection followed by an injection; the injection, when composed with the projections, yields a \emph{jointly-injective} span. These, up-to span isomorphism, are the same thing as subsets $R_A\subseteq X\times Y$. This procedure respects composition and monoidal product, yielding a functorial mapping $\Spanleq{\Set} \to \Rel$ on objects and arrows.
Given the above, it suffices to show that there exists a span homomorphism 
$(X\leftarrow A\rightarrow Y)\to(X\leftarrow B\rightarrow Y)$ iff $R_A\subseteq R_B$ as relations. The `only if' direction is implied by the nature of factorisations of functions: given a homomorphism of spans, we obtain an induced function $R_A \to R_B$, illustrated as the dotted function below, which is an injection since it is the first part of a factorisation of an injection.
\[
\xymatrix@R=10pt{
{A} \ar[r] \ar@{->>}[d] & {B} \ar@{->>}[r] & {R_B} \ar@{_(->}[d] \\
{R_A} \ar@{_(->}[rr] \ar@{.>}[urr]
& & {X\times Y}
}
\]
For the `if' part, since (by the axiom of choice) surjective functions split, we obtain $R_B\to B$. Then $\xymatrix@C=10pt{A\ar@{->>}[r]&{R_A}\ar[r] & {R_B} \ar[r] & B}$ is easily shown to be a homomorphism of spans.
\end{proof}

\begin{proof}[Proof of Proposition~\ref{pro:preordPosetBicat}]
  We stated the axioms of preordered cartesian bicategories and cartesian bicategories in a way that makes the first part obvious.
  Given a morphism $F \from \mathcal{B}_1 \to \mathcal{B}_2$ of preorder-enriched cartesian bicategories, clearly $\PosetCat{F}$ is still an order-preserving monoidal functor.
It also preserves the monoid and comonoid structures, because the monoid/comonoid on an object $X \in \PosetCat{{\mathcal{B}_i}}$ is the equivalence
  class of the monoid/comonoid structure in $\mathcal{B}_i$.
\end{proof}

\begin{proof}[Proof of Lemma~\ref{lem:transfer}] The second item is trivial. For the first one, let $x,y$ be morphisms in $\PosetCat{\mathcal{C}}$ such that $G(x) \leq G(y)$ for all $\mathcal{G} \in \PosetCat{\mathcal{F}}$. We want to prove $x \leq y$.
  Now let $F \in \mathcal{F}$ be arbitrary. Then $\PosetCat{F}(x) \leq \PosetCat{F}(y)$ by assumption on $x,y$. Since morphisms in $\PosetCat{\mathcal{C}}$ are just
  equivalence classes of morphisms in $\mathcal{C}$, choose representatives, i.e. morphisms $f, g$ in $\mathcal{C}$ such that
  $\mathcal{A}(f) = x$ and $\mathcal{A}(g) = x$.
  Since the diagram
  \[
    \begin{tikzcd}
      \mathcal{C} \arrow{r}{F} \arrow{d}{\mathcal{A}} & \mathcal{D} \arrow{d}{\mathcal{A}} \\
      \PosetCat{\mathcal{C}} \arrow{r}{\PosetCat{F}} & \PosetCat{\mathcal{D}}
    \end{tikzcd}
  \]
  commutes, we get
  \[
    \mathcal{A}(F(f)) = \PosetCat{F}(\mathcal{A}(f)) = \PosetCat{F}(x) \leq \PosetCat{F}(y) = \PosetCat{F}(\mathcal{A}(g)) = \mathcal{A}(F(g)).
  \]
  Since $\mathcal{A}$ reflects the ordering (Lemma~\ref{lem:refl}), we get $F(f) \leq F(g)$. But $F \in \mathcal{F}$ was arbitrary, therefore $f \leq g$,
  since $\mathcal{C}$ is $\mathcal{F}$-complete for $\mathcal{D}$. But therefore
  \[
    x = \mathcal{A}(f) \leq \mathcal{A}(g) = y
  \]
  which finishes the proof.
\end{proof}

\end{document}